\documentclass{article}
\usepackage{ijcai26}

\usepackage{times}
\usepackage{soul}
\usepackage{url}
\usepackage[hidelinks]{hyperref}
\usepackage[utf8]{inputenc}
\usepackage[small]{caption}
\usepackage{graphicx}
\usepackage{amsmath}
\usepackage{amsthm}
\usepackage{mathtools}
\usepackage{amssymb}
\usepackage{booktabs}
\usepackage[linesnumbered,ruled,vlined]{algorithm2e}
\usepackage[switch]{lineno}
\usepackage{xspace}
\usepackage{subcaption}
\usepackage{multirow}
\usepackage{xcolor}

\usepackage{cleveref}

\newtheorem{theorem}{Theorem}

\newtheorem{lemma}{Lemma}
\newtheorem{definition}{Definition}

\DeclareMathOperator*{\argmax}{arg\,max}

\DeclarePairedDelimiter{\floor}{\lfloor}{\rfloor}
\DeclareMathOperator*{\omegaV}{\boldsymbol{\omega}}
\DeclareMathOperator*{\omegaVlb}{\boldsymbol{\omega_{lb}}}
\DeclareMathOperator*{\omegaVub}{\boldsymbol{\omega_{ub}}}

\newcommand{\warehouselargeW}{\texttt{warehouse-33-36}\xspace}
\newcommand{\randomSmall}{\texttt{random-32-32-20}\xspace}
\newcommand{\roomLarge}{\texttt{room-64-64-8}\xspace}

\newcommand{\emptyMid}{\texttt{empty-48-48}\xspace}

\newcommand{\denSmall}{\texttt{den312d}\xspace}

\newcommand{\citet}[1]{\citeauthor{#1} [\citeyear{#1}]}

\newcommand{\ERS}{ERS\xspace}

\title{Optimization of Edge Directions and Weights for Mixed Guidance Graphs in Lifelong Multi-Agent Path Finding}

\author{
Yulun Zhang$^1$
\and
Varun Bhatt$^2$\and
Matthew C. Fontaine$^{3}$\footnote{Work done before joining Lila Sciences.}\and
Stefanos Nikolaidis$^2$\And
Jiaoyang Li$^1$
\affiliations
$^1$Robotics Institute, Carnegie Mellon University
\hspace{1cm}$^3$Lila Sciences\\
$^2$Thomas Lord Department of Computer Science, University of Southern California
\emails
yulunzhang@cmu.edu,
vsbhatt@usc.edu,
mfontaine@lila.ai,
nikolaid@usc.edu,
jiaoyangli@cmu.edu
}

\begin{document}

\maketitle

% why use rotation model
% Application: differential drive has rotation

% why we want direction
% Reduce search space of search-based mapf algo
% Greedy: help to prune bad action.

% Key insight: with MAPF-rotation model, we incorporate edge direction into ggo via three ways:
% 1. Directly optimize edge directions via EA while preserving the strong connectivity of a graph
% 2. Incoroporate into observation/output of update model
% 2.1. Into observation: CMA-ES-PU to optimize for edge weights
% 2.2. Into output and use weighted sum objective (tp + w * edge_sim) to optimize for both edge direction and edge weights

% Outline
% lifelong mapf is important with many applications
% a prior work has attempted to improve throughput via optimizing guidance graphs, providing preferences on movement between (u, v) and (v, u)
% however, the prior work tested GGO with the pebble motion model, simplifying the movement of the agents
% while considering the rotational model, it is sometime desirable to block some edges because the agents will have less opportunities to make turns, which can slow down the movement of them. But there is a trade-off because blocking too many edges can make agents take unnecessary detours.
% In this paper we introduce the concept of mixed Guidance Graph Optimization (MGGO), which incorporate edge directions into the guidance graphs optimization.

\begin{abstract}
% We study how to incorporate edge direction into Guidance Graph Optimization (GGO) for lifelong Multi-Agent Path Finding (MAPF) to improve throughput.
Multi-Agent Path Finding (MAPF) aims to move agents from their start to goal vertices on a graph. Lifelong MAPF (LMAPF) continuously assigns new goals to agents as they complete current ones.
To guide agents' movement in LMAPF, prior works have proposed Guidance Graph Optimization (GGO) methods to optimize a guidance graph, which is a bidirected weighted graph whose directed edges represent moving and waiting actions with edge weights being action costs. Higher edge weights represent higher action costs.
However, edge weights only provide \emph{soft} guidance. An edge with a high weight only \emph{discourages} agents from using it, instead of \emph{prohibiting} agents from traversing it.
In this paper, we explore the need to incorporate edge directions optimization into GGO, providing \emph{strict} guidance. We generalize GGO to Mixed Guidance Graph Optimization (MGGO), presenting two MGGO methods capable of optimizing both edge weights and directions. The first optimizes edge directions and edge weights in two phases separately. The second applies Quality Diversity algorithms to optimize a neural network capable of generating edge directions and weights. We also incorporate traffic patterns relevant to edge directions into a GGO method, making it capable of generating edge-direction-aware guidance graphs.

% Prior work on GGO focuses on optimizing edge weights to guide agents' movement and improve throughput, where a higher edge weight represents higher action cost during planning. However, guidance graphs only provide \emph{soft} guidance. 

% However, modifying edge weights alone provides only \emph{soft} guidance: even edges with high costs remain traversable, allowing agents to select suboptimal routes. Moreover, LMAPF is commonly solved by decomposing it into a sequence of MAPF instances, which leads to inherently myopic planning decisions. As a result, agents may still traverse high-cost edges, causing congestion and limiting throughput.
% To address these limitations, we generalize GGO to explicitly optimize edge directions, introducing Mixed Guidance Graph Optimization (MGGO). We propose two MGGO methods. The first optimizes edge directions and edge weights in two phases separately. The second jointly optimizes a neural network capable of generating edge directions and weights based on Quality Diversity (QD) algorithms. In addition, we introduce an improved edge-weight optimization strategy that enhances the effectiveness of prior GGO methods.
\end{abstract}

\section{Introduction}

We study the problem of leveraging a mixed guidance graph with optimized edge weights and directions to guide agents' movement, improving throughput of LMAPF algorithms. Multi-Agent Path Finding (MAPF)~\cite{SternSoCS19} aims to move agents from their corresponding start to goal vertices on a given graph $G$ without collisions, while lifelong MAPF (LMAPF)~\cite{Li2020LifelongMP} continuously assigns new goal vertices to agents while they finish the current ones. Applications of LMAPF include autonomous warehouses~\cite{Li2020LifelongMP}, robotic sorting systems~\cite{zhang2025tmo}, and video games~\cite{MaAIIDE17,Jansen2008DirectionMF}.

LMAPF problems are solved by decomposing them into a sequence of MAPF instances and solving them in order. Inevitably, planned paths are myopic, which could result in traffic congestion. To mitigate this issue, \citet{zhang2024ggo} propose a guidance graph, a bidirected weighted graph in which each directed edge $(u, v)$ represents a moving or waiting action from vertex $u$ to $v$ and its weight represents the action cost, where higher weights represent higher costs. Agents are asked to plan paths that minimize the sum of action costs on the guidance graph.
Two Guidance Graph Optimization (GGO) techniques are presented to optimize the edge weights with the objective of maximizing throughput, defined as the number of goals reached per timestep.
% A well-optimized guidance graph provides global guidance by manipulating the edge weights, where an edge with higher weights indicates higher costs, discouraging agents to move through it. 
A well-optimized guidance graph reduces traffic congestion by encouraging agents to move in the same direction on an edge.
% reduces the number of head-on collisions need to be resolved during planning, which occur when two agents move through the same edge from opposite directions.

% Rewrite: prior GGO works considers the pebble motion agent model. If we additionally consider rotation (in discretized timesteps), only optimizing edge weight might create issue. Three types of actions: move forward, rotate in place, wait. One might consider that wait is the least desirable 
% not optimizing edge direction can create issue because it can increase the number of rotation

% However, prior GGO methods are only tested assuming that agents move with the pebble motion model~\cite{SternSoCS19}, where in each discretized timestep, they can move to their adjacent vertices on $G$ in an omni-directional manner or wait in the current vertex. When agents move with the rotation motion model~\cite{chan2024lorr}, where they can only move forward and rotate in place, the guidance graph with edge weights alone cannot provide sufficient guidance. Under LMAPF settings, MAPF planner can make myopic decisions to traverse through the high-cost edges, making traffic worse 

\begin{figure}
    \centering
    \includegraphics[width=1\linewidth]{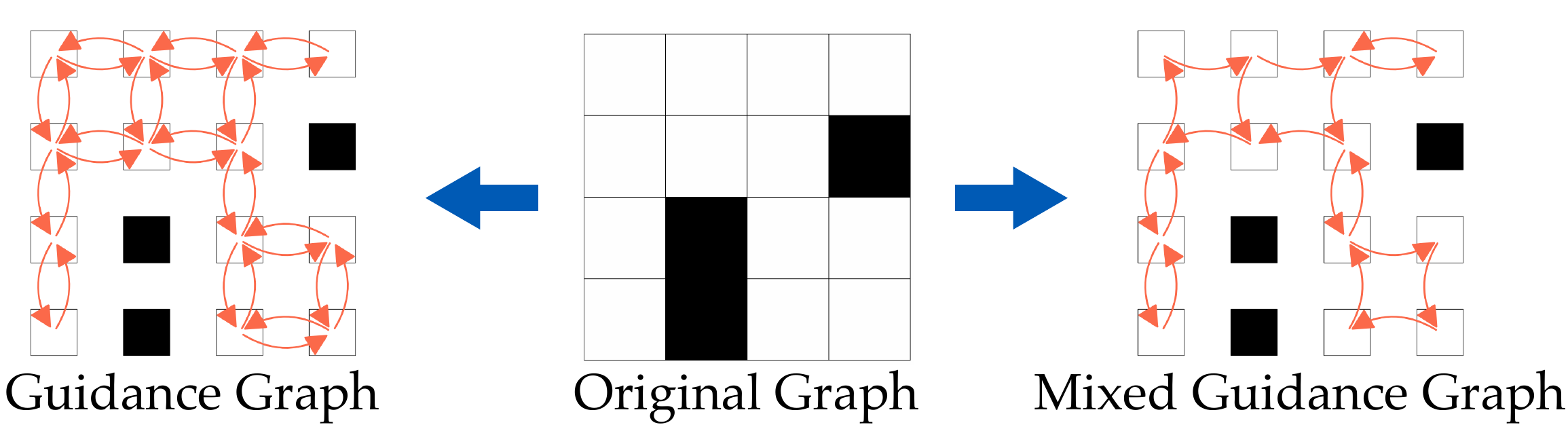}
    \caption{A guidance graph and a mixed guidance graph defined for a grid graph. Edge weights are all 1 for simplicity.}
    \label{fig:front-fig}
\end{figure}

However,
% optimizing edge weights only \emph{reduces} the number of head-on collisions, not \emph{eliminates} them. 
for a high-weight edge, agents can use it if it results in a smaller overall cost, resulting in head-on collisions, which happens when agents travel in opposite directions on the same edge.
Resolving a head-on collision in MAPF is challenging, as it requires at least one of the agents to take a detour.
% This slows agent movement and creates more traffic congestion. 
% In addition, if agents apply the rotational motion model~\cite{chan2024lorr}, where they can only move forward and rotate in place, such negative impacts of head-on collisions can be amplified.
This issue can be amplified if agents move in more realistic motion models. For example, if agents move under the rotation motion model~\cite{Jiang2024Competition}, in which agents can only move forward and rotate in place, taking a detour can take longer than with the pebble motion model~\cite{SternSoCS19}, where agents move omni-directionally, because of the additional rotation actions.
% This issue can be amplified under the rotational motion model
% The rotational motion model can model robots more realistically in real-world applications of LMAPF, such as autonomous warehouses~\cite{yan2025mass} and robotic sorting systems~\cite{KouAAAI20}. Taking detours with the rotational model is slower because it requires additional rotation actions. 
Since the rotational model more accurately reflects the motion of differentiable drives commonly used in real-world LMAPF applications, including autonomous warehouses~\cite{yan2025mass} and robotic sorting systems~\cite{KouAAAI20}, it is desirable to optimize edge directions, determining which edges to block.
% Although the rotational model more accurately reflects the motion of differentiable drives commonly used in real-world LMAPF applications, including autonomous warehouses~\cite{yan2025mass} and robotic sorting systems~\cite{KouAAAI20}, making detours is particularly costly due to additional rotation actions.
% To further reduce the number of head-on collisions, it is desirable to optimize edge directions to determine which edges to block.

Optimizing edge directions is a non-trivial problem. First, we need to guarantee strong connectivity. All pairs of vertices of the graph must be reachable from each other. Second, blocking too many edges might force agents to take unnecessary detours. 
% For example, a strongly connected directed graph can eliminate all head-on collisions between agents because no edges can be traversed from opposite directions. However, it might force agents to take long detours, reducing system throughput.
We need a technique that considers the trade-off between reducing head-on collisions and shortening agents' travel distances. 
% Third, optimizing direction alone does not provide sufficient guidance either. When agents can traverse in opposite directions on an edge, we still need to optimize edge weights to provide more fine-grained guidance to the agents' movement.
Therefore, in this paper, we explore whether the incorporation of edge direction optimization into GGO can help improve throughput. We generalize GGO to Mixed Guidance Graph Optimization (MGGO), which optimizes both edge weights and directions, and present two MGGO methods. \Cref{fig:front-fig} shows an example of a mixed guidance graph compared to a guidance graph.
% . First, we propose a two-phase method that optimizes edge directions and edge weights separately. In phase one, we use an Evolutionary Algorithm (EA) to optimize the edge directions directly. 
% \Cref{fig:random-32-32-20-gd} shows an optimized guidance graph with most of the edges unidirectional. 
To ensure strong connectivity, we introduce a simple algorithm to repair the graph. 
% In phase two, we apply GGO~\cite{zhang2024ggo} to optimize the edge weights. 
% Second, we propose a co-optimization technique based on Quality Diversity (QD) algorithms~\cite{justin_qd_2016} that jointly optimizes the direction and weights of the edges. 
In addition, we incorporate edge-direction-related information into the optimization process of a GGO method, making it optimize \emph{edge-direction-aware} guidance graphs.
% We also incorporate edge directions into a prior GGO method, making it optimize edge-direction-aware guidance graphs.

We make the following contributions: (1) generalizing the guidance graph to the mixed guidance graph as a more general representation of the guidance for LMAPF, (2) proposing two MGGO methods, showing their superior performance compared to state-of-the-art GGO methods, and (3) providing insights into the trade-off of edge weights and edge directions as different means of representing guidance for LMAPF.

\section{Background}

\subsection{Lifelong Multi-Agent Path Finding}

Since the rotational motion model~\cite{Jiang2024Competition} can model robots of real-world applications of LMAPF more realistically than the pebble motion model~\cite{SternSoCS19}, this paper focuses primarily on the MAPF and LMAPF problems defined on the rotational motion model.

\begin{definition}[MAPF]
Given a four-connected 2D grid graph $G(V,E)$ and $k$ agents with their start and goal vertices in $G$, MAPF aims to find collision-free paths from the start to the goal vertices. The state of each agent is determined by its current vertex $v \in V$ and its orientation $o \in \{ \text{North, South, East, West}\}$. In each discretized timestep, agents can move forward to an adjacent vertex, rotate 90$^\circ$, or wait. Two agents collide if they arrive at the same vertex or move through the same edge from opposite directions at the same timestep. MAPF minimizes the sum-of-cost, the sum of path length of all agents.
\end{definition}

\begin{definition} [LMAPF]
Lifelong MAPF (LMAPF) is a variant of MAPF in which agents are continuously assigned new goal vertices when they arrive at the current ones. LMAPF maximizes throughput, the average number of goal vertices reached by all agents at each timestep.
\end{definition}

% \subsubsection{Lifelong MAPF Algorithms}

% Solving MAPF optimally is known to be NP-hard~\cite{YuAAAI13}, and LMAPF poses a greater challenge, as it is usually solved by decomposing it into a series of MAPF problem instances and solving each in sequence. 
To solve LMAPF, \citet{SilverAIIDE05} proposes WHCA$^*$, which uses priority planning~\cite{Erdmann1987PP} to find collision-free paths for all agents within a pre-defined planning window. RHCR~\cite{Li2020LifelongMP} advances this idea to using any MAPF algorithm to search for collision-free paths within a planning window $w$ every $h$ timesteps ($w \geq h)$. 
% Although replanning all agents can provide better solutions, it is computationally expensive. Therefore, some works propose only replanning agents that have reached their goals at every timestep while avoiding agents that are still moving towards their goals~\cite{CapICAPS15,MaAAMAS17,LiuAAMAS19}.
Instead of explicitly searching for collision-free paths, rule-based algorithms~\citet{WangICAPS08} move agents step by step by following a set of pre-defined rules. PIBT~\cite{okumura2019priority} is a ground-breaking work in this category that moves by selecting their most preferred actions in each step. If collisions occur, a pre-defined rule based on priority inheritance and backtracking is applied to resolve the collisions. \citet{YukhnevichEPIBT2025} extends PIBT by planning multiple steps.
To show that our proposed MGGO methods work for different types of LMAPF algorithms, we select RHCR~\cite{Li2020LifelongMP} and PIBT~\cite{okumura2019priority,Jiang2024Competition} to conduct the experiments.

\subsection{Edge Weights as Guidance} \label{sec:ggo}

% Many prior works have represented guidance as edge weights in LMAPF~\cite{Jansen2008DirectionMF,liron_highway16,Yu2023,ChenAAAI24}.

Many prior works have represented guidance as edge weights in MAPF. However, they mostly manually design or use handcrafted equations to generate the edge weights. \citet{Cohen2015FeasibilitySU} proposes the concept of highways, which selects a set of preferred edges as highways and assigns a lower edge weight to them, encouraging agents to traverse the highways. \citet{liron_highway16} generates highways based on handcrafted equations, but does not outperform Crisscross~\cite{lironPhDthesis,li2023study}, which manually designs highways by alternating directions in even and odd rows and columns. Meanwhile, \citet{Jansen2008DirectionMF} assigns a direction vector to every vertex in $G$, and computes a movement cost based on past traffic using a handcrafted equation.
More recent works applies a similar idea to collect traffic patterns and use handcrafted equations to compute edge weights~\cite{han2022spaceutil,Yu2023,ChenAAAI24}.
Attempting to unify the prior works, \citet{zhang2024ggo} propose the guidance graph. 

\paragraph{Notation.}
For a pair of vertices $u$ and $v$, we use $\{u,v\}$ and $(u,v)$ to denote undirected and directed edges, respectively. The connection between $u$ and $v$ is bidirected if both $(u,v)$ and $(v,u)$ exist. The connection is unidirectional if exactly one of $(u,v)$ and $(v,u)$ exists. 

\begin{definition} [Guidance Graph]
    Given a connected graph $G(V, E)$ for LMAPF, a guidance graph $G_g(V, E_g, \omegaV)$ is a bidirected weighted graph where each directed edge $(u,v) \in E_g$, $u,v \in V$, corresponds to an action that agents can take in $G$ to move from $u$ to $v$. $E_g = E_{wr} \cup E_{move}$, where $E_{wr}$ contains self-loops where $u = v$, representing waiting or rotating. $E_{move}$ contains edges with $u \neq v$, representing moving forward. 
    % $\forall \{u,v\} \in E$, $E_{move}$ contains both $(u,v)$ and $(v,u)$.
    All edge weights are represented as a vector $\omegaV \in \mathbb{R}^{|E_g|}_{>0}$.
\end{definition}

\begin{definition} [Guidance Graph Optimization]
    Given a graph $G(V, E)$, an objective function $f: \mathbb{R}^{|E_g|} \rightarrow \mathbb{R}$, and lower and upper bounds $\omegaVlb$ and $\omegaVub$ ($0 < \omegaVlb \le \omegaVub$) for edge weights, GGO searches for the best guidance graph $G_g^*(V_g,E_g,\boldsymbol{\omega^*})$ with
    \begin{equation}
        \boldsymbol{\omega^*} = \argmax_{\omegaVlb \leq \omegaV \leq \omegaVub} f(\omegaV)
    \end{equation}
\end{definition}

% Note that an edge weight of positive infinity in a guidance graph does not prohibit agents from using it.

\citet{zhang2024ggo} propose two GGO methods. They first propose GGO Direct Search (GGO-DS), which uses the Covariance Matrix Adaptation Evolutionary Strategy (CMA-ES)~\cite{hansen2016cmaes}, a black-box optimizer, to directly search for all edge weights with the objective of maximizing throughput. 
CMA-ES maintains a multi-variate Gaussian distribution to model the edge weights. Starting from a standard normal Gaussian, GGO-DS iteratively samples new batches of edge weights from the Gaussian. 
To ensure that the edge weights satisfies the pre-defined range, GGO-DS uses min-max normalization.
After evaluation, the mean and covariance matrix of the Gaussian is updated to the high-throughput region of the search space.
However, CMA-ES scales poorly as the number of edges in the guidance graph increases. Therefore, they propose GGO Parameterized Iterative Update (GGO-PIU), which uses a neural-network-based \emph{update model} to generate edge weights in an iterative manner. Starting from an unweighted guidance graph where all edge weights are 1, PIU runs LMAPF simulations to collect the frequency of waiting at each vertex and uses the update model to update edge weights. PIU runs for $N_p$ iterations. The model is optimized by CMA-ES to maximize throughput.

GGO-PIU has several weaknesses.
% First, it is only tested with the pebble motion model, ignoring rotation actions of the agents. 
First, it requires running at least one LMAPF simulation in each iteration, which is computationally expensive. For example, PIU does not work for RHCR~\cite{zhang2024ggo}.
% Third, the rationale for having a iterative update procedure is to allow the model to generate a higher-quality guidance graph in each iteration. However, an iterative procedure also makes it more difficult to optimize the update model because it has to understand how to update a variety of guidance graphs. In fact, if we provide a better initial guidance graph to the update model, the model can also generate better guidance graphs.
Second, the edge weights of the initial guidance graph are 1. Intuitively, a better initial guidance graph can provide more hints to the update model, making it easier for the model to generate better guidance graphs. To tackle these weaknesses, we propose to use Parameterized Update (PU), a variant of PIU with $N_p = 1$. 
% We justify the advantage of PU in \Cref{sec:exp}.

\subsection{Edge Directions as Guidance}

Although many works have studied edge weights as guidance, similar studies on edge directions are relatively limited. In MAPF, 
% while it is beneficial to design a graph with unidirectional edges to reduce head-on collisions, 
no prior works have studied edge directions optimization. The directed Crisscross pattern~\cite{WangICAPS08,li2023study} is a common manually designed pattern of edge directions.
% Crisscross~\cite{lironPhDthesis} selects a set of preferred edges by alternating edge directions in even and odd rows and columns of a grid graph. It assigns a smaller edge weight to the preferred edges than the rest. 
It is a variant of Crisscross, in which only the preferred edges are kept in the graph. However, directed Crisscross can only be generated from biconnected graphs, which are undirected graphs that do not have bridges. A bridge is an undirected edge in an undirected graph whose removal disconnects the graph. 
% In addition, no previous work in MAPF has explored how to design effective edge directions.

In urban traffic network design and navigation of multiple Automated Guided Vehicles, a similar problem of Strong Network Orientation Problem (SNOP)~\cite{duhamel_strong_2024} is studied. Given an undirected graph, SNOP determines a direction for each undirected edge in the graph so that the resulting directed graph is strongly connected and the total distance between each pair of vertices is optimized. \citet{robbins_theorem_1939} has established that directions can be assigned to the edges of a biconnected graph to make it strongly connected. 
As an NP-hard problem~\cite{burkard_minimum-cost_1999}, existing works solve SNOP by using Mixed Integer Linear Programming (MILP)~\cite{gaskins_flow_1987,duhamel_strong_2024}, Branch-and-Bound~\cite{kaspi_optimal_1990}, and neighborhood search~\cite{gallo_meta-heuristic_2010}. 
However, it is non-trivial to adapt these methods to optimize edge directions for MAPF. First, they do not scale to large graphs with thousands of vertices. Second, they do not optimize throughput. Third, they enforce strong connectivity using MILP~\cite{duhamel_strong_2024}, which scales poorly, or naive rejection sampling~\cite{gallo_meta-heuristic_2010}, which is not sample efficient.

% \citet{robbins_theorem_1939} has established that directions can be assigned to the edges of an undirected graph to make it strongly connected if and only if the graph does not have bridges. A bridge is an edge whose removal disconnects the graph. An undirected graph with no bridges is considered biconnected. \citet{birkhoff_graph_1978} has shown that running the Depth First Search (DFS) can generate a strongly connected directed graph on a biconnected graph.

\subsection{Quality Diversity Algorithm}

% Although single-objective optimizers, such as CMA-ES, have been used to optimize black-box objectives, we are also interested in 
QD algorithms optimize an objective and diversify a set of diversity measures simultaneously. A measure space is defined based on the given diversity measures and it is evenly discretized into cells, referred to as an \emph{archive}. 
% Each cell corresponds to a range of values of diversity measures. 
QD attempts to find the best solution in each cell, optimizing the QD-score, defined as the sum of objective values of all solutions in the archive. Covariance Matrix Adaptation MAP-Annealing (CMA-MAE)~\cite{Fontaine2022CovarianceMA} is the state-of-the-art QD algorithm for continuous search spaces. Similar to CMA-ES, it maintains, samples, and updates a multi-variate Gaussian distribution to optimize QD-score. 
Although QD methods are used primarily to generate diverse high-quality solutions~\cite{fontaine2020illuminating,Bhatt2022DeepSA}, prior work has shown both theoretically~\cite{Qian2024qdhelpful} and empirically~\cite{ZhangNCA2023} that diversity measures can assist objective optimization.

\section{Problem Definition}

\begin{definition}[Mixed Guidance Graph]
    Given a connected undirected graph $G(V,E)$, a mixed guidance graph $G_{mg}(V,E_{mg},\omegaV_{mg})$ is a directed weighted graph. For each non-bridge undirected edge $\{u,v\} \in E$, we add $(u,v)$ or $(v,u)$ or both $(u,v)$ and $(v,u)$ to $E_{mg}$. For each bridge $\{u,v\}$, we add both $(u,v)$ and $(v,u)$ to $E_{mg}$. $\forall v \in V$, we also add $(v,v)$ to $E_{mg}$. All edge weights are represented as a vector $\omegaV_{mg} \in \mathbb{R}^{|E_{mg}|}_{>0}$.
\end{definition}

% Intuitively, a mixed guidance graph is a guidance graph in which a set of unidirectional edges $E_{move} \setminus E_{mg\_move}$ are removed and their corresponding actions are prohibited in planning. Agents can still wait or rotate on all vertices.
To plan with a mixed guidance graph without compromising feasibility, we change the objective of MAPF from sum-of-costs to sum of action costs on the mixed guidance graph $G_{mg}$, similar to \citet{zhang2024ggo}.

\begin{definition} [Mixed Guidance Graph Optimization]
Given a connected unweighted graph $G(V, E)$, an objective function $f: \mathcal{G}_{mg} \rightarrow \mathbb{R}$, where $\mathcal{G}_{mg}$ is the space of all mixed guidance graphs, and lower and upper bounds $\omegaVlb$ and $\omegaVub$ ($0 < \omegaVlb \le \omegaVub$) for edge weights, the Mixed Guidance Graph Optimization (MGGO) problem searches for the optimal mixed guidance graph $G^*_{mg}(V, E_{mg}^*, \boldsymbol{\omega^*_{mg}})$ with:
\begin{align}
    &G^*_{mg} = \argmax_{G_{mg} \in \mathcal{G}_{mg}} f(G_{mg}) \\ 
    \textrm{s.t.} \quad 
    &G_{mg}\textrm{ is strongly connected}\\
    &\omegaVlb \leq \boldsymbol{\omega_{mg}} \leq \omegaVub
\end{align}
% $E_g = E_{wr} \cup E_{move}$ contains directed edges that agents can traverse derived from $E$.
\end{definition}

% In this paper, our objective function $f$ is a LMAPF simulator that evaluate a given mixed guidance graph by running a given LMAPF algorithm and returning the throughput.

% \section{From PIU to PU}

\section{Approach}

% We first present a two-phase MGGO method that searches for a mixed guidance graph by optimizing edge weights and edge directions in two separate phases. We then present a joint optimization method that optimizes both simultaneously. Finally, we present an enhanced edge-direction-aware GGO-PU to optimize the edge weights. 
We first introduce a simple algorithm to repair a given graph for strong connectivity.
We then present three approaches to incorporate edge directions.
The first two are MGGO methods: one searches for a mixed guidance graph by optimizing edge weights and directions in two separate phases, and the other optimizes both simultaneously.
The third approach only optimizes edge weights via GGO-PU, but is guided by information from manually designed mixed guidance graphs with unidirectional edges.

\subsection{Edge Reversal Search}

\begin{algorithm}[!t]
\caption{Edge Reversal Search (\ERS)}\label{alg:edge-dir-flip-search}
\LinesNumbered
\small
\SetAlgoNlRelativeSize{0}

\SetKwInput{KwInput}{Input}
\SetKwInput{KwOutput}{Output}
\SetKwProg{Fn}{Function}{:}{}

% Helper functions
% \SetKwFunction{isStronglyConnected}{is\_strongly\_connected}
\SetKwFunction{TarjanSCC}{tarjanAlgo}
\SetKwFunction{BuildCond}{buildCondenseGraph}
\SetKwFunction{OutCut}{outgoingEdges}
\SetKwFunction{Reverse}{ReverseEdge}

\KwInput{$G_{in}=(V_{in},E_{in})$: a mixed guidance graph.\\
$E_b$: bridge edges.\\
\TarjanSCC: Function that runs Tarjan algorithm and return the SCCs and the number of SCCs.\\
\BuildCond: Function to build a condensation graph on the SCCs of $G_{in}$.\\
\OutCut: Function to find the outgoing edges of vertices in $G_{in}$ in the SCC corresponding to a meta vertex.\\
\Reverse: Function to reverse the edge direction.
}
\KwOutput{A strongly connected mixed graph obtained by reversing edge directions.}

% $G_{in}(V',E') \gets G(V,E)$\nllabel{efs:create-G'}\;

\If(\nllabel{efs:if-no-bridge}){$E_b \not\subseteq E_{in}$}{
    $E_{in} \gets E_{in} \cup \bigcup_{\{u,v\}\in E_b} \{(u,v),(v,u)\}$\nllabel{efs:add-bridge}\;
}

$\mathsf{scc},C \gets \TarjanSCC(G_{in})$\nllabel{efs:init-tarjan}\;

\While(\nllabel{efs:while}){$C > 1$}{
    $\mathcal{D(V,E)} \gets \BuildCond(G_{in}, \mathsf{scc}, C)$\nllabel{efs:build-metag}\;
    
    % \If{$\mathcal{S}=\emptyset$}{
    %     \Return{\textbf{fail}} \tcp*[f]{should not happen: condensation is a DAG}
    % }
    $\eta \gets$ a meta vertex  s.t. $\eta \in \mathcal{V} \And deg^-(\eta) = 0$\nllabel{efs:find-source}\;

    $E_{\text{out}} \gets \OutCut(G_{in}, \eta)$\nllabel{efs:find-out-edge}\;
    
    % \If{$|E_{\text{out}}| \le 1$}{
    %     \Return{\textbf{fail}} \tcp*[f]{no sufficient cut edges to flip}
    % }

    $E_{rev} \gets$ sample $\floor{\frac{|E_{\text{out}}|}{2}}$ distinct edges from $E_{\text{out}}$\nllabel{efs:sample-rev-edge}\;
    
    \For(\nllabel{efs:rev-edge-for}){$(u,v) \in E_{rev}$}{
        \Reverse($G_{in}$,$(u,v)$)\nllabel{efs:rev-edge}\;
        % \tcp{flip direction while preserving weight}
        % $w \gets \mathrm{weight}(u,v)$\;
        % remove $(u,v)$ from $E'$\;
        % add $(v,u)$ to $E'$ with weight $w$\;
    }

    $\mathsf{scc},C \gets \TarjanSCC(G_{in})$\nllabel{efs:end-while-tarjan}\;
}
\Return $G_{in}$\nllabel{efs:return}\;

% \Fn{\BuildCond($G_{in}$,$\mathsf{scc}$,$C$)}{
%     $\mathcal{V} \gets \{0,1,...,C-1\}$, $\mathcal{E} \gets \{\}$\nllabel{efs:init-meta-ve}\;
%     % Initialize $\mathcal{D(V,E)}$\;
%     \For(\nllabel{efs:meta-add-e-for}){$(u,v)\in E_{in}$}{
%         \If(\nllabel{efs:meta-meta-add-e-check}){$\mathsf{scc}(u) \neq \mathsf{scc}(v) \And (\mathsf{scc}(u),\mathsf{scc}(v)) \not\in \mathcal{E})$}{
%             $\mathcal{E} \gets \mathcal{E} \cup {(\mathsf{scc}(u),\mathsf{scc}(v))}$\nllabel{efs:meta-add-e}\;
%         }
%     }
%     \Return $\mathcal{D(V,E)}$\;
% }

\end{algorithm}

We propose \textbf{E}dge \textbf{R}eversal \textbf{S}earch (\ERS), a greedy algorithm to repair a given mixed guidance graph for strong connectivity by reversing edge directions. \Cref{alg:edge-dir-flip-search} shows the pseudocode. Given an input mixed graph $G_{in}(V_{in},E_{in})$ and all bridges $E_b$ in the original undirected graph $G$ corresponding to $G_{in}$, we first add $(u,v)$ and $(v,u)$ for all bridges $\{u,v\} \in E_b$ to $E_{in}$ (\cref{efs:if-no-bridge,efs:add-bridge}) because these edges must be present to ensure strong connectivity. Bridges can be easily found in $G$ by using a linear-time algorithm based on DFS~\cite{tarjan_note_1974}.
% To enforce strong connectivity after initialization and mutation, it is necessary to keep bridges as bidirected edges. Therefore, we add bridges to the initialized and mutated mixed guidance graph, and bridges are not involved in the optimization process. We then apply \ERS to repair the graphs.
We then run Tarjan algorithm~\cite{tarjan1972scc} to find all strongly connected components (SCC) of $G_{in}$ (\cref{efs:init-tarjan}) and $C$, the number of SCCs. 
% We store the SCCs as a hashmap $\mathsf{scc}$ from the vertices in $V_{in}$ to unique indices of the SCCs to which the vertices belong. 
If there is more than 1 SCC (\cref{efs:while}), then $G_{in}$ is not strongly connected, and we perform an iterative edge reversal procedure. We build a condensation graph $\mathcal{D(V,E)}$ (\cref{efs:build-metag}) following \citet{cormen2009introalgo} by contracting each SCC into a vertex in $\mathcal{V}$.
% where each vertex in $\mathcal{V}$ corresponds to an SCC of $G_{in}$ (\cref{efs:init-meta-ve}) 
% For each edge $(u,v) \in E_{in}$, we add a meta edge $(\mathsf{scc}(u),\mathsf{scc}(v))$ if $u$ and $v$ are in different SCCs in $G_{in}$ and the meta edge is not yet added (\cref{efs:meta-add-e-for,efs:meta-meta-add-e-check,efs:meta-add-e}). 
Since $\mathcal{D}$ is a DAG~\cite{cormen2009introalgo}, we can find a source meta vertex $\eta \in \mathcal{V}$ s.t. $\eta$ has no incoming edges (\cref{efs:find-source}). We then find all outgoing edges $E_{out} \subseteq E_{in}$ of the vertices in the SCC of $G_{in}$ corresponding to $\eta$ (\cref{efs:find-out-edge}) and sample half of them as $E_{rev}$ (\cref{efs:sample-rev-edge}). If $|E_{out}| = 1$, there are no edges in $|E_{rev}|$. We prove that $|E_{out}| \geq 2$ and thus $|E_{rev}| \geq 1$ in \Cref{appen:efs-proof}. We reverse the direction of the edges in $E_{rev}$ while maintaining their edge weights (\cref{efs:rev-edge-for,efs:rev-edge}) and run the Tarjan algorithm to check for updated SCCs (\cref{efs:end-while-tarjan}). We return $G_{in}$ when it has exactly 1 SCC (\cref{efs:return}).
Intuitively, reversing directions of the edges in $E_{rev}$ allows $\eta$ to have incoming edges in $\mathcal{D}$, potentially reducing the number of SCCs in $G_{in}$. Although \ERS is not a complete algorithm, it runs very fast with each iteration taking $O(|E_{in}| + |V_{in}|)$. Empirically, \ERS always returns a valid solution. We provide experimental evaluations of \ERS compared to an optimal MILP solver~\cite{duhamel_strong_2024} in \Cref{appen:efs-exp}.
% We prove the completeness of \ERS in \Cref{appen:efs-proof}.

% we enter a while loop that checks if $G_{in}$ is not strongly connected (\cref{efs:while}), which can be done by running DFS twice~\cite{}

\subsection{Two-Phase MGGO-DS} \label{subsec:two-phase-mggo}

% We propose a two-phase MGGO Direct Search (MGGO-DS) method, where we sequentially optimize edge directions and edge weights in two phases.

\subsubsection{Phase One: Optimizing Edge Directions}
In phase one, we optimize the edge directions of the mixed guidance graph by assigning a direction to \emph{all} non-bridge edges in $G(V,E)$.
% Concretely, we solve a variant of the MGGO problem in which the edge weights $\omegaV_{mg} = \boldsymbol{1}^{|E_{mg}|}$.
We define a binary decision variable for each undirected non-bridge edge $\{u, v\} \in E$. The binary variable determines whether $\{u,v\}$ is changed to $(u,v)$ or $(v,u)$. We do not include the option to make $\{u,v\}$ bidirected in the decision variable because an unweighted bidirected edge does not provide any guidance and, therefore, would be abandoned by the optimizer. 
% We justify our choice of design in \Cref{sec:exp}.

We optimize the edge directions using $(1+\lambda)$ Evolutionary Algorithm (EA)~\cite{thomas1997EA}. \Cref{fig:two-phase-mggo} shows the overview. After initializing a batch of $\lambda$ mixed guidance graphs, we repair the graphs for strong connectivity and evaluate each of them by running LMAPF simulations $N_e$ times for $T$ timesteps. We select the one with the highest throughput and mutate it, generating $\lambda$ new graphs. We then repair them with \ERS and evaluate the mutated graphs in the simulator. We stop when the number of evaluations reaches $N_{eval}$.

\begin{figure}[t]
    \centering
    \includegraphics[width=1\linewidth]{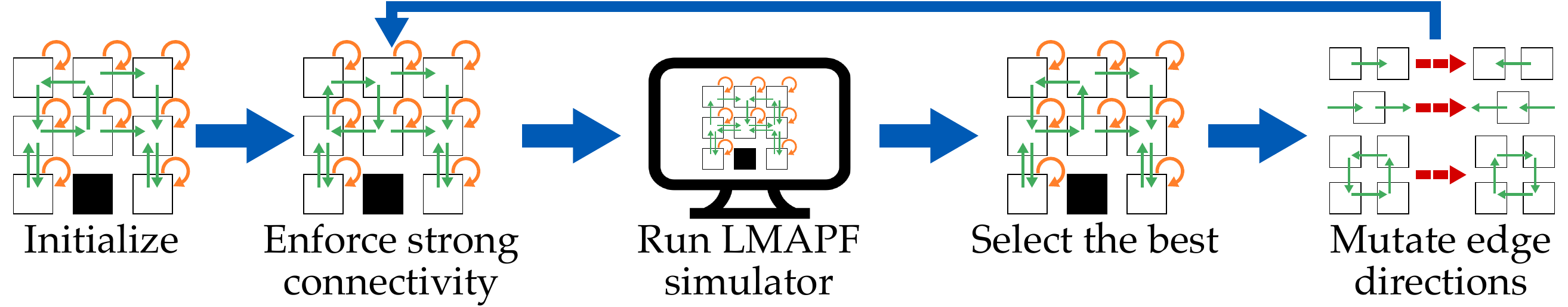}
    \caption{Optimize edge directions in phase one of MGGO-DS.}
    \label{fig:two-phase-mggo}
\end{figure}

\paragraph{Initialization.} We use three initialization methods. (1) \emph{directed Crisscross}: the original Crisscross alternates directions every 1 column or row. We generalize it to alternate directions every $p$ rows and columns $\forall p \in \{1,2,\cdots,\floor{\frac{\min(H,W)}{2}}\}$, where $H$ and $W$ are the height and width of the grid graph, respectively. (2) \emph{DFS}~\cite{birkhoff_graph_1978}: we use DFS to generate different strongly connected mixed guidance graphs. (3) \emph{random}: we generate random directed graphs. With a batch size $\lambda$, we perform each initialization on one-third of the batch.
% we randomly assign a value to each decision variable, deciding a direction for each pair of adjacent vertices.
% For directed Crisscross and random generated graphs, we repair them for strongly connectivity by using \ERS.

\paragraph{Mutation.} 
% To mutate the edge directions of a parent mixed guidance graph, we first construct the parent mixed guidance graph corresponding to the decision variables. Then, 
Inspired by previous works~\cite{gallo_meta-heuristic_2010,fontaine:gecco19}, we use three mutation operators. (1) \emph{$k$-edges}: we randomly sample $k$ distinct \emph{edges} and change their directions. (2) \emph{$k$-vertices}: we randomly sample $k$ distinct \emph{vertices} and change the directions of all their incoming and outgoing edges. In these two operators, $k$ is sampled from a geometric distribution $P(X=k) = (1-p)^{k-1}p$ with $p=\frac{1}{2}$. (3) \emph{random-cycle}: we first select a random vertex and perform a DFS from it to find a random cycle. Since the graph before mutation is guaranteed to be strongly connected, at least one cycle exists in the graph. We then change the directions of all the edges in the cycle. With a batch size $\lambda$, we perform each mutation on one-third of the batch.
% $\floor{\frac{\lambda}{3}}$ $k$-edges mutation, $\floor{\frac{\lambda}{3}}$ $k$-vertices mutations, and $\lambda - 2 \cdot \floor{\frac{\lambda}{3}}$ random-cycle mutations. 

\subsubsection{Phase Two: Optimizing Edge Weights}

After optimizing the edge directions in phase one, we obtain an unweighted mixed guidance graph $G_{mg}(V,E_{mg}, \boldsymbol{1}^{|E_{mg}|})$. We then use GGO-DS~\cite{zhang2024ggo} to optimize the edge weights, forming $G_{mg}(V,E_{mg}, \omegaV_{mg})$.

\subsection{QD-based Joint MGGO-PU} \label{sec:qd-mggo}

\begin{figure}[t]
    \centering
    \includegraphics[width=1\linewidth]{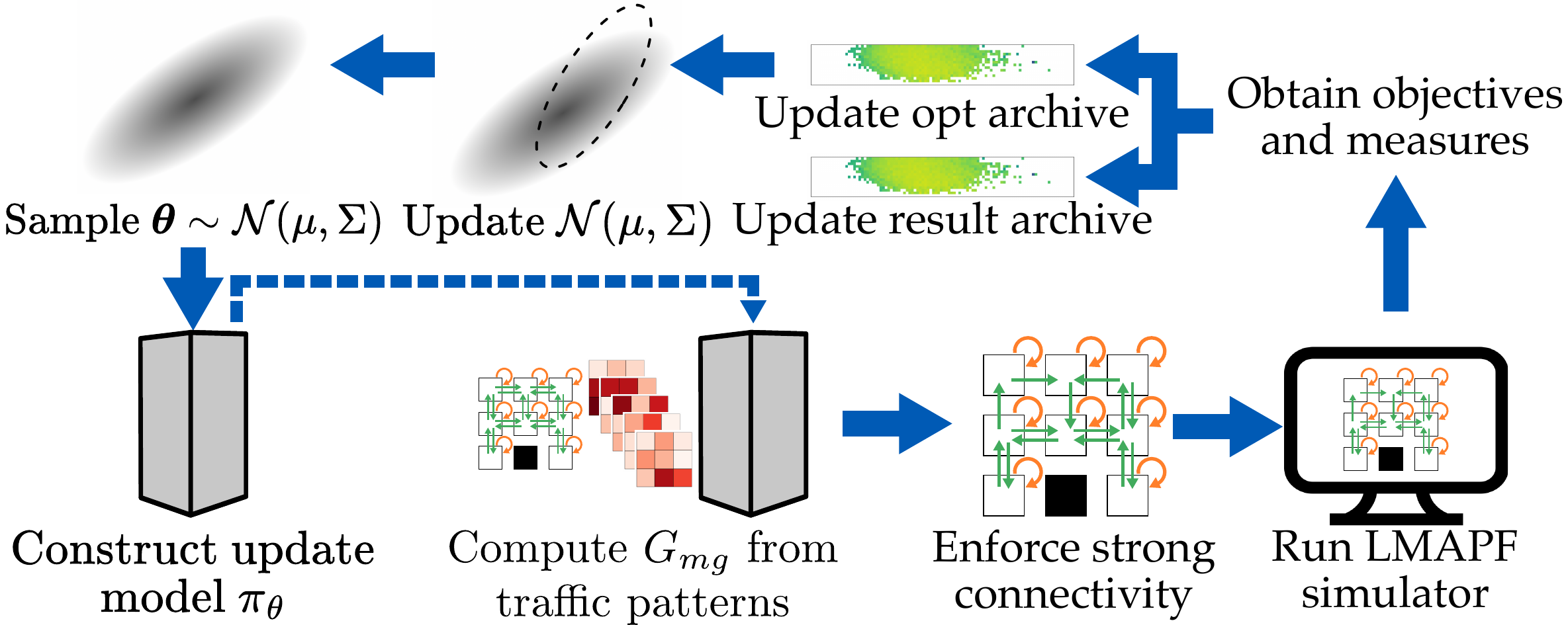}
    \caption{QD-based Joint MGGO-PU.}
    \label{fig:mggo-pu}
\end{figure}

Although the two-phase search can optimize both the edge directions and the edge weights, it maximizes the number of unidirectional edges. Agents can take unnecessary detours with too many unidirectional edges. Therefore, we present joint MGGO-PU, where edge directions and edge weights are optimized jointly. \Cref{fig:mggo-pu} shows the overview. 
We use CMA-MAE~\cite{Fontaine2022CovarianceMA} to optimize the parameters of an update model, which generates mixed guidance graphs given precomputed traffic patterns.
We first sample from the multi-variate Gaussian of CMA-MAE for a batch of $\lambda$ parameter vectors, forming update models. Based on precomputed traffic patterns, we use update models to generate the mixed guidance graphs. We then repair the graphs with \ERS, evaluate them in the LMAPF simulator, and compute the objectives and measures. We then add the evaluated guidance graphs and the corresponding update models to an optimization archive and a result archive. Finally, we update the parameters of the Gaussian. We run CMA-MAE until the total number of evaluations reaches $N_{eval}$.

\subsubsection{Precomputed Traffic Patterns}
To obtain the traffic pattern, we run $N_{obs}$ LMAPF simulations in an unweighted guidance graph and a directed Crisscross mixed guidance graph generated based on directed Crisscross highways~\cite{li2023study}. We run \ERS on the generated directed Crisscross to ensure strong connectivity.
% For each simulation, we collect a tensor of size $[H,W,7]$, where the first four channels are the frequency of agents moving to neighbor vertices in four directions, and the last three channels are the frequency of agents waiting in place, rotating clockwise, and rotating counterclockwise, respectively. 
Each simulation produces a $[H,W,7]$ tensor that encodes the frequencies of move actions in four directions and the frequencies for waiting and clockwise/counterclockwise rotations.
We feed the traffic pattern along with the initial guidance graphs to the update model to generate the resulting mixed guidance graphs.
We use directed Crisscross because it is the state-of-the-art human-designed guidance graph~\cite{zhang2024ggo}, and we make it directed to give the model more information on how edge directions provide guidance.

\subsubsection{Update Model and Graph Representation}
We need to define an update model and represent the edge weights and edge directions of a mixed guidance graph for a grid graph of size $H\times W$.
We follow \citet{zhang2024ggo} to represent the edge weights as a tensor of size $[H,W,5]$, where the five channels represent the edge weights of the four outgoing edges at each vertex and the cost of waiting. We represent the edge directions differently in the input and output. In the input, we use an \emph{independent representation}, where the edge directions are represented as a $[H,W,4]$ tensor. The four channels are binary variables that indicate whether an outgoing edge exists from the corresponding vertices.
% If an edge $(u,v)$ exists in the graph, then the corresponding value in the tensor at vertex $u$ is 1. Otherwise, it is 0. 
However, this cannot be used in the output. For a pair of vertices $u$ and $v$, if the model outputs 0 for both $(u,v)$ and $(v,u)$, there are no edges between $u$ and $v$.
Therefore, we use a \emph{dependent representation} in the output with a tensor of size $[H,W,6]$. The first 3 and the last 3 channels represent a one-hot encoding of the outgoing edges to the east and south, respectively. For a vertex $u$ and its eastern neighbor $v$, for example, the first 3 values indicate the probability of having edge $(u,v)$, $(v,u)$, or both $(u,v)$ and $(v,u)$, respectively. Since CMA-MAE is sensitive to the dimension of the search space, we use the independent representation in the input to keep the update model small and the dependent representation in the output to ensure that the model always gives a valid decision. Therefore, our update model is a CNN that takes in a tensor of size $[H,W,2 \cdot 7 \cdot N_{obs} + 2 \cdot 9]$ and outputs a tensor of size $[H,W,11]$.
% We justify our choice of representation in \Cref{appen:mgg-represent}.

\begin{figure}[t]
    \centering
    \includegraphics[width=1\linewidth]{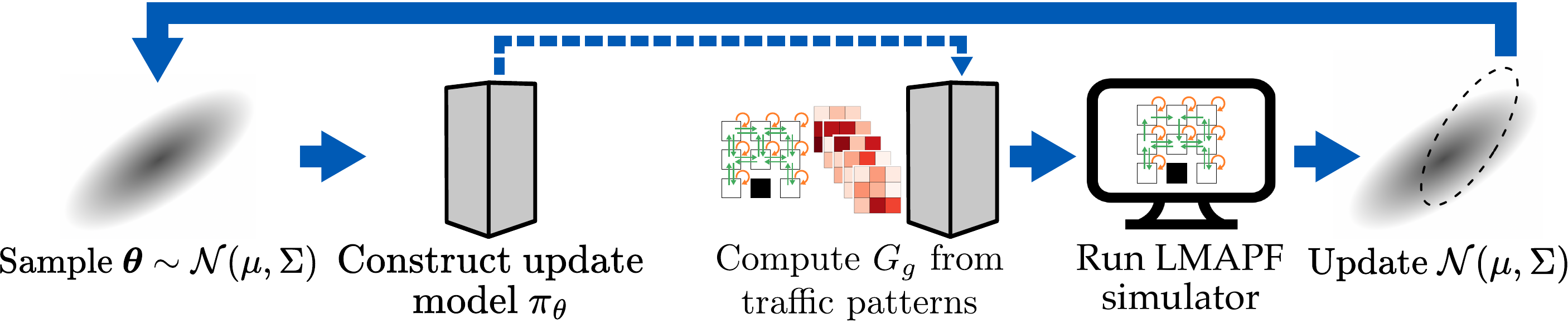}
    \caption{Edge-Direction-Aware GGO-PU.}
    \label{fig:ggo-pu}
\end{figure}

\begin{table}[!t]
    \centering
    \small
    \resizebox{\linewidth}{!}{
    \begin{tabular}{cccrrr}
    \toprule
    Setup & MAPF & Map & $|E_{mg\_max}|$ & $|X|$ & $N_a$\\
    \midrule
    1 & \multirow{3}{*}[-0.5em]{RHCR} & \randomSmall     & 3,359  & 1,258 & 200                     \\
    \cmidrule(lr){3-6}
    2 &                               & \warehouselargeW & 4,074  & 1,563 & 220                     \\    
    \cmidrule(lr){3-6}
    3 &                               & \emptyMid        & 11,328 & 4,512 & 800                     \\
    \midrule
    4 & \multirow{5}{*}[-0.5em]{PIBT} & \randomSmall     & 3,359  & 1,258 & \multirow{2}{*}{400}    \\
    5 &                               & \warehouselargeW & 4,074  & 1,563 &                         \\
    \cmidrule(lr){3-6}
    6 &                               & \emptyMid        & 11,328 & 4,512 & 1,000                   \\
    \cmidrule(lr){3-6}
    7 &                               & \roomLarge       & 14,340 & 5,535 & 1,500                   \\
    \cmidrule(lr){3-6}
    8 &                               & \denSmall        & 11,227 & 4,369 & 1,200                   \\
    \bottomrule
    \end{tabular}
    }
    \caption{
    Summary of the experiment setup.
    % Map Size is the dimension of the grid map, 
    $|E_{mg\_max}|$ is the maximum number of edges the mixed guidance graphs can have, respectively. $|X|$ is the number of decision variables in phase-one of two-phase MGGO-DS.
    $N_a$ is the number of agents.
    }
    \label{tab:exp-setup}
\end{table}

\subsubsection{Objectives and Diversity Measure}
We have two objectives, $f_{opt}$ and $f_{res}$, corresponding to the optimization archive and the result archive. $f_{res}$ runs $N_e$ LMAPF simulations and computes the average throughput, while $f_{opt} = f_{res} + \alpha \cdot \Delta$, computing a weighted sum of $f_{res}$ and an edge direction similarity score $\Delta$. $\Delta$ quantifies the ratio of edges that are reversed by \ERS, and $\alpha$ is a hyperparameter.
We use CMA-MAE to maximize $f_{opt}$ in the optimization archive while keeping the high-throughput guidance graphs in the result archive. We incorporate $\Delta$ to $f_{opt}$ as a regularization term to encourage the model to directly generate strongly connected mixed guidance graphs, reducing the dependency on \ERS. \citet{ZhangNCA2023} used a similar regularization technique to improve the results of QD optimization. 
% We also show in \Cref{sec:exp} that adding $\Delta$ can improve the throughput of optimized mixed guidance graphs.
We use one diversity measure, namely the ratio of unidirectional edges, to control the trade-off between reducing head-on collisions and taking fewer detours in the graph.

\subsection{Edge-Direction-Aware GGO-PU}

We attempt to enhance GGO-PU by making it \emph{edge-direction-aware}, meaning that the update model observes and understands the effectiveness of unidirectional edges in providing guidance and learns to generate guidance graphs. \Cref{fig:ggo-pu} shows the overview of GGO-PU. We use CMA-ES to search for the parameters of an update model capable of generating high-throughput guidance graphs. The key to making GGO-PU edge-direction-aware is using the same traffic pattern from \Cref{sec:qd-mggo} as it includes the traffic pattern observed from a directed Crisscross guidance graph. 
% We show in \Cref{sec:exp} that simply incorporating the new precomputed traffic patterns significantly improves the performance of GGO-PU.

\section{Experimental Evaluation} \label{sec:exp}

% In this section, we compare our proposed MGGO and GGO methods with various baselines. We also show ablation experiments on design choices of our proposed methods.

\subsection{Experimental Setup}

\subsubsection{General Setup}
\Cref{tab:exp-setup} shows the experimental setups. 
Column 2 shows the LMAPF algorithms, namely PIBT~\cite{Jiang2024Competition} and RHCR~\cite{Li2020LifelongMP} with $w=5$, $h=2$, and a runtime limit of 2 seconds for each MAPF run. During optimization, we use $N_e=10$, $T=2,000$ for PIBT and $N_e=5$, $T=1,000$ for RHCR to obtain throughput.
Column 3 shows the maps, including one commonly used warehouse map~\cite{Li2020LifelongMP} and four chosen from the MAPF benchmark~\cite{SternSoCS19}. 
Column 4 shows the maximum number of edges $|E_{mg\_max}|$ that the optimized mixed guidance graph can have. Column 5 shows the number of decision variables in phase one of the two-phase MGGO-DS. Column 6 shows the number of agents $N_a$ used to optimize the mixed guidance graphs.
For the QD-based Joint MGGO-PU, we use $\alpha = 3$. We run all algorithms with $N_{eval} = 20,000$. For the two-phase MGGO-DS, we assign $5,000$ and $15,000$ evaluations to phase one and phase two, respectively. We show the implementation details and compute resources in \Cref{appen:imple-compute}.

\subsubsection{Baselines}

We compare with three baselines: (1) CMA-ES + GGO-DS~\cite{zhang2024ggo}, the state-of-the-art method for GGO, (2) Directed Crisscross~\cite{lironPhDthesis}, the state-of-the-art human-designed guidance graph, and (3) No Guidance, where all edge weights are 1. We do not include GGO-PIU because it does not outperform GGO-DS~\cite{zhang2024ggo}.

\subsubsection{Precomputed Traffic Patterns and Update Models}
We use $N_{obs} = 2$ to collect traffic patterns, resulting in an input of size $[H,W,46]$ to the update model. Our update model is a CNN with 3 convolutional layers with kernel sizes $3\times3$, $1\times1$, and $1\times1$, respectively. All convolutional layers have 8 hidden channels and are followed by a ReLU activation and a batch norm layer, resulting in 3,479 parameters.

\subsection{Experimental Result}

\begin{figure}[!t]
    \centering
    \includegraphics[width=1\linewidth]{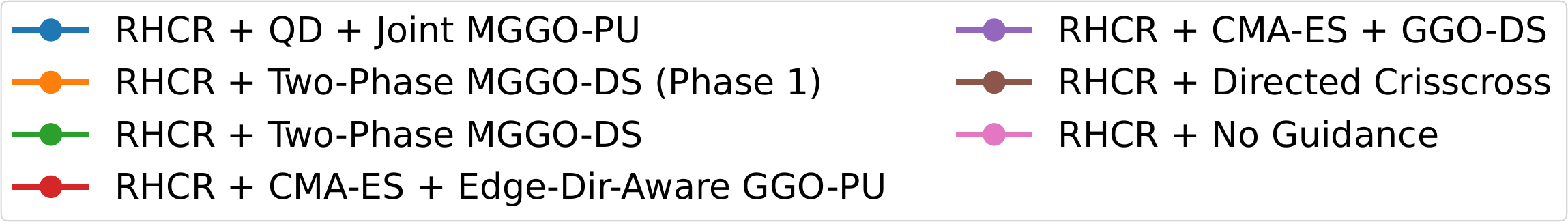}\par\medskip

    % -------- Row 1 --------
    \begin{subfigure}[t]{0.16\textwidth}
        \centering
        \includegraphics[width=\linewidth]{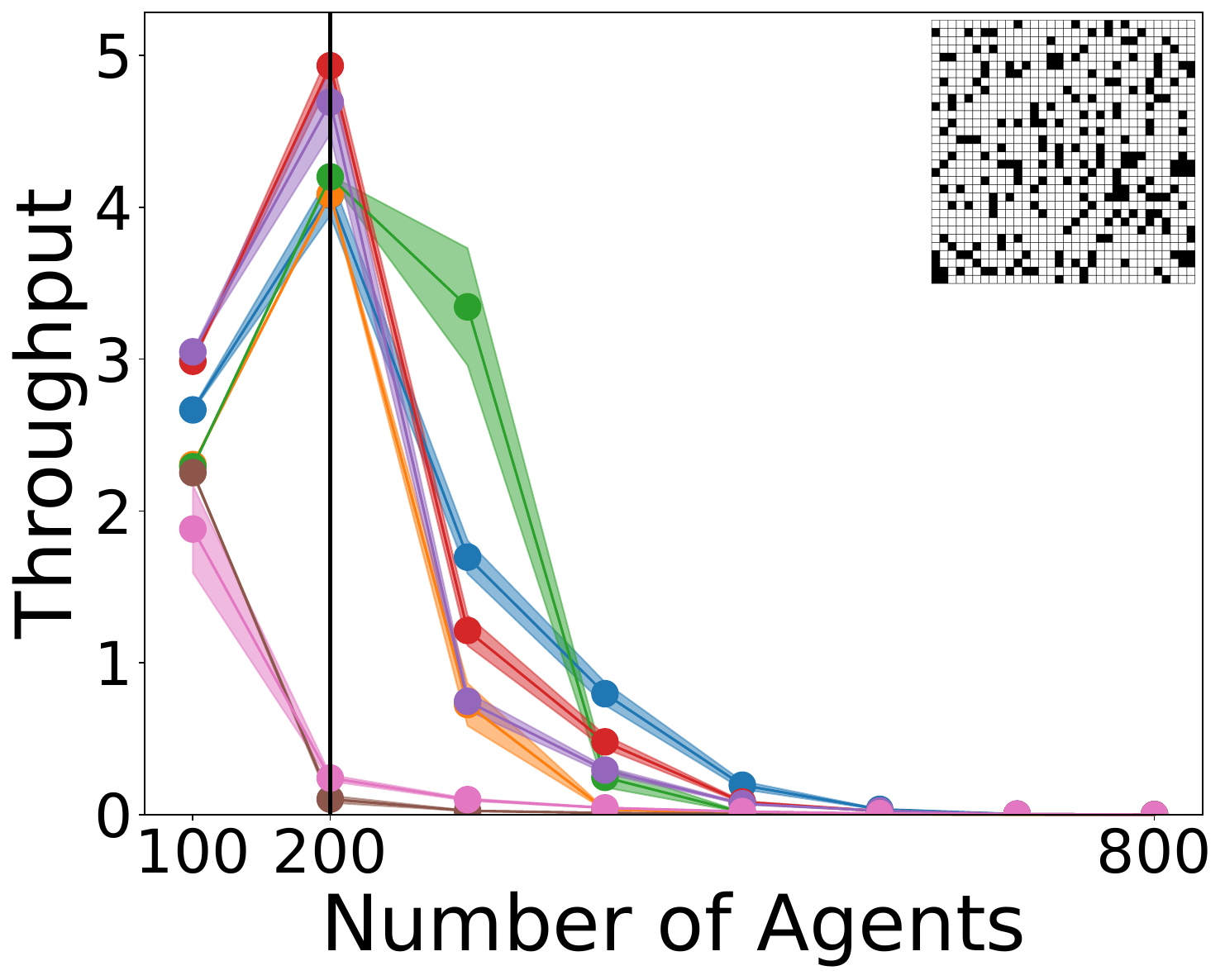}
    \end{subfigure}\hfill
    \begin{subfigure}[t]{0.16\textwidth}
        \centering
        \includegraphics[width=\linewidth]{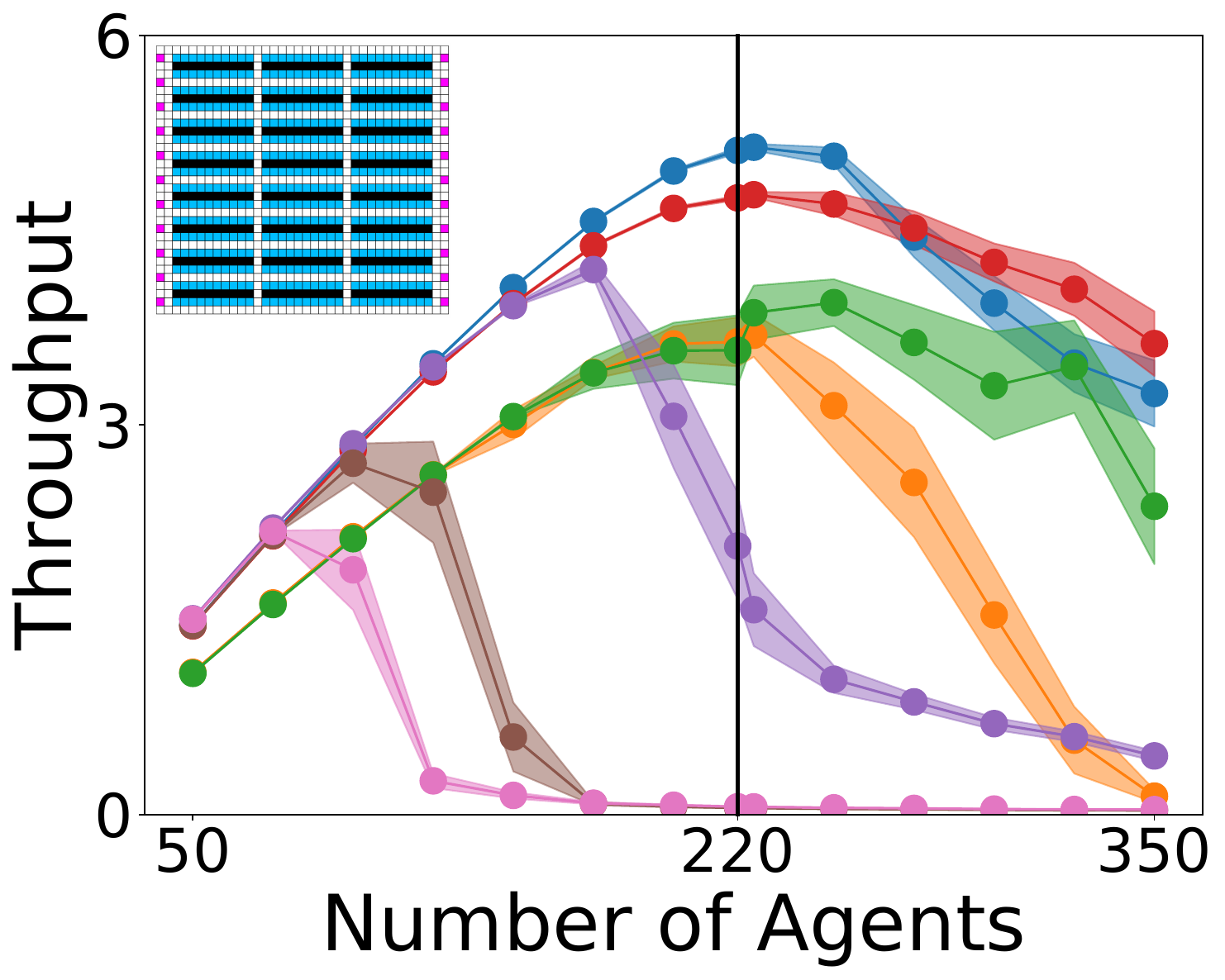}
    \end{subfigure}\hfill
    \begin{subfigure}[t]{0.16\textwidth}
        \centering
        \includegraphics[width=\linewidth]{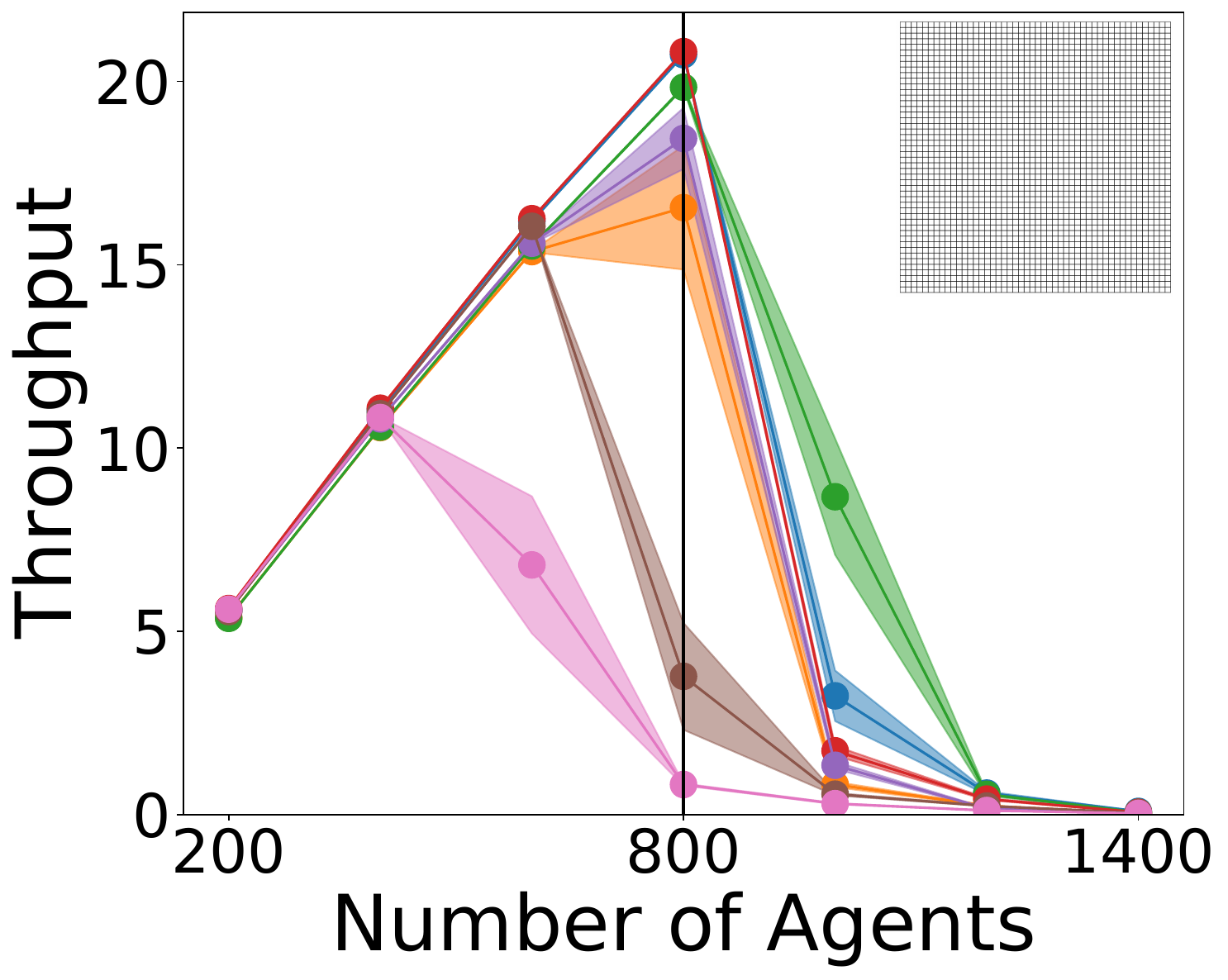}
    \end{subfigure}

    % -------- Row 2 --------
    \begin{subfigure}[t]{0.16\textwidth}
        \centering
        \includegraphics[width=\linewidth]{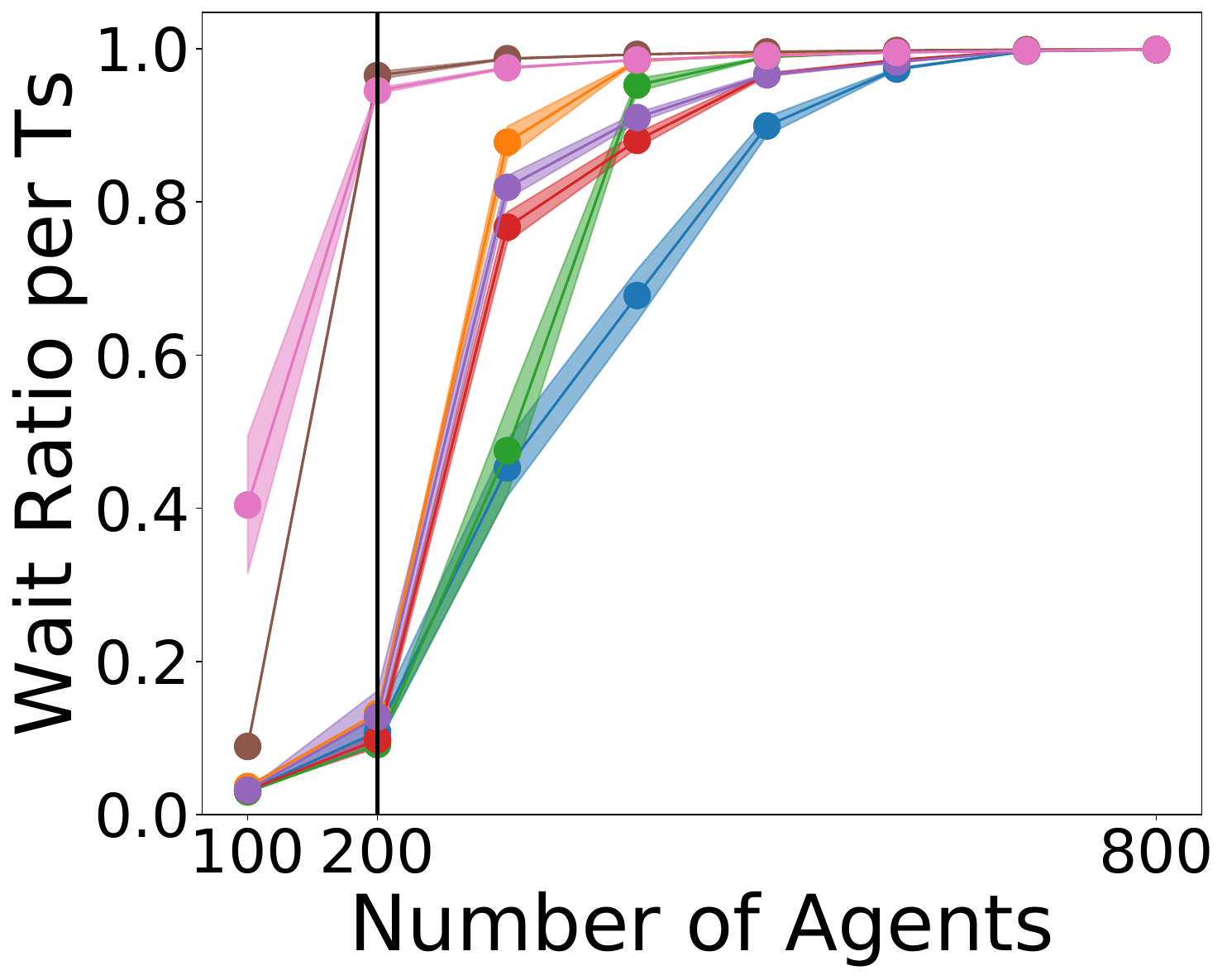}
    \end{subfigure}\hfill
    \begin{subfigure}[t]{0.16\textwidth}
        \centering
        \includegraphics[width=\linewidth]{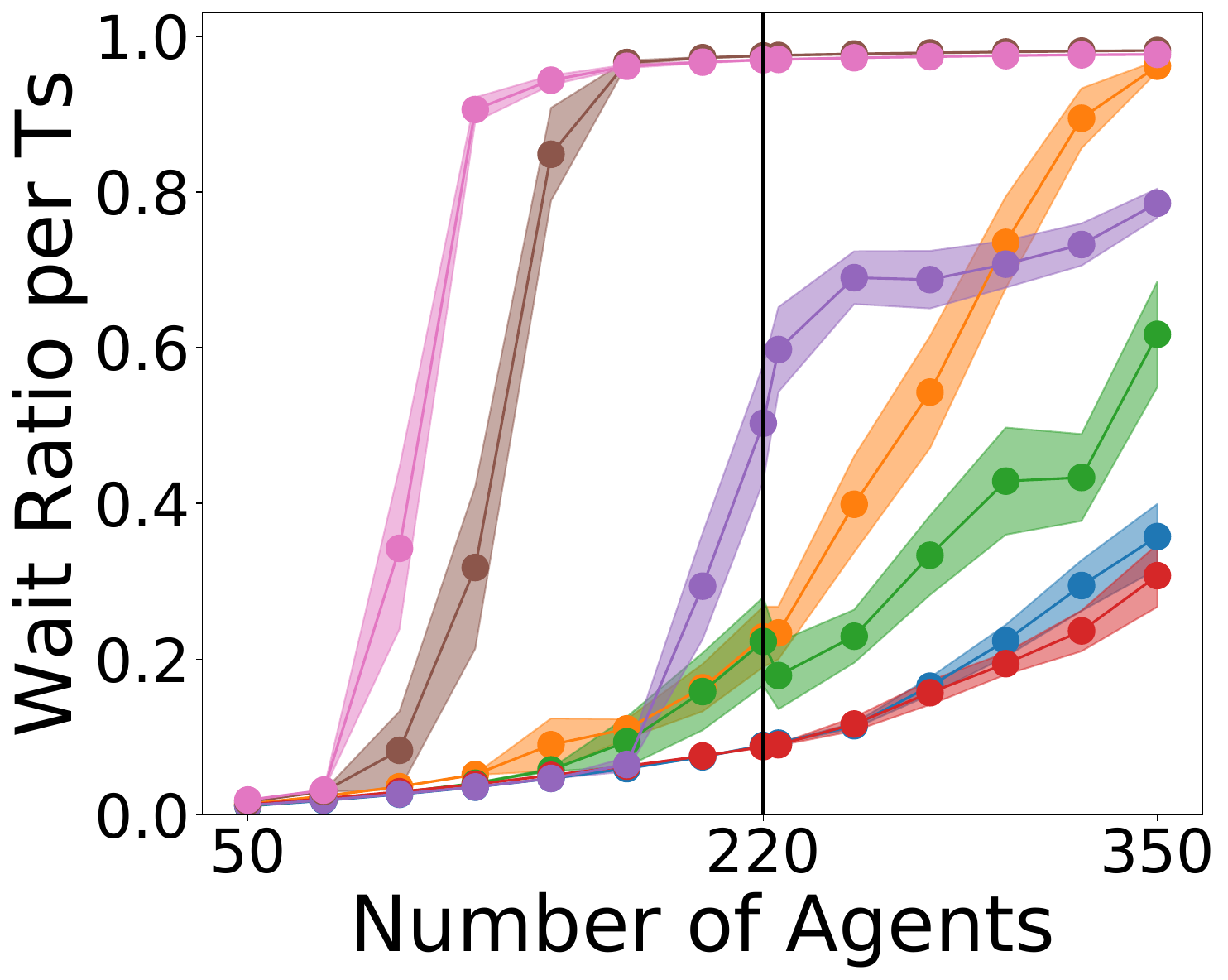}
    \end{subfigure}\hfill
    \begin{subfigure}[t]{0.16\textwidth}
        \centering
        \includegraphics[width=\linewidth]{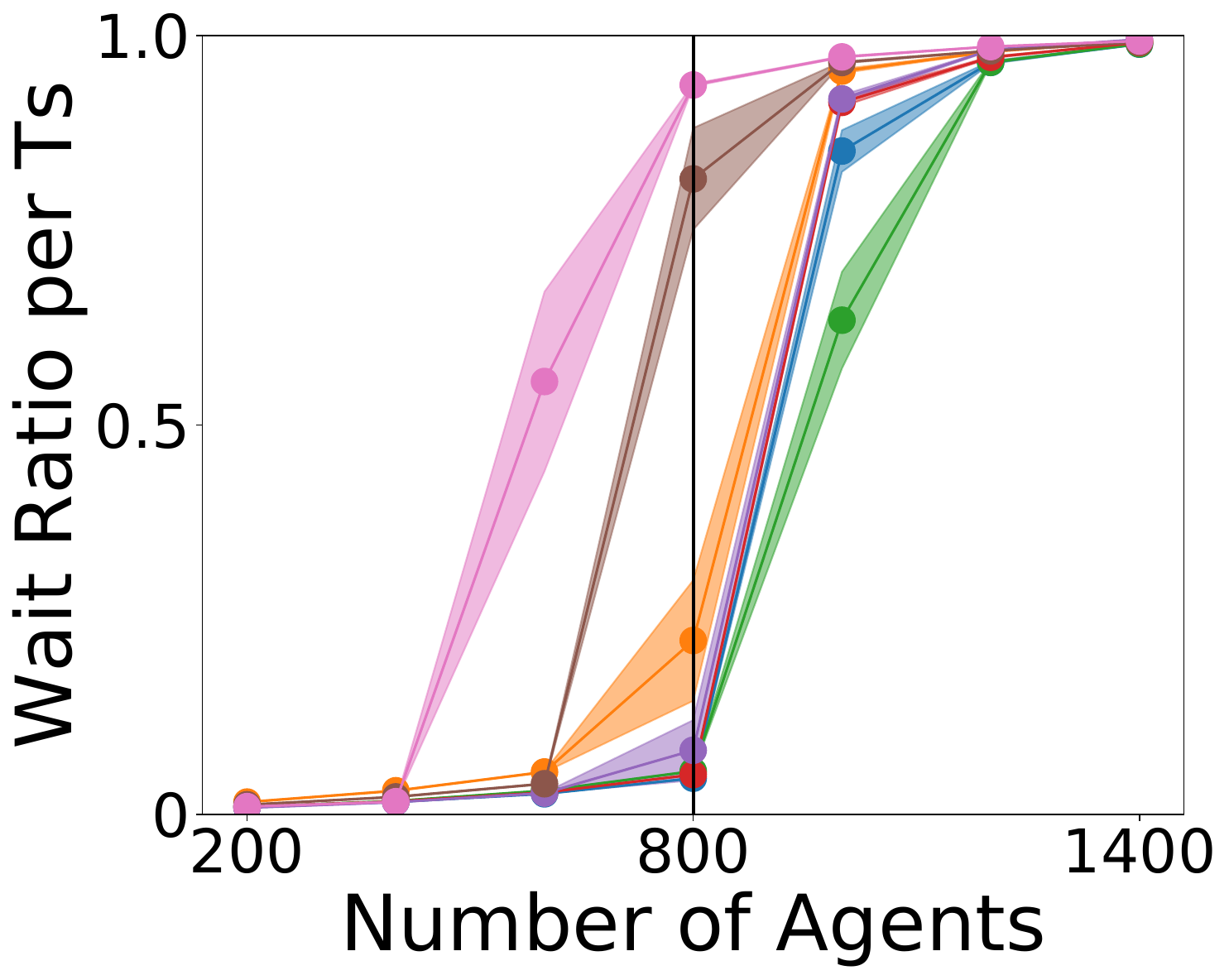}
    \end{subfigure}

    % -------- Row 3 --------
    \begin{subfigure}[t]{0.16\textwidth}
        \centering
        \includegraphics[width=\linewidth]{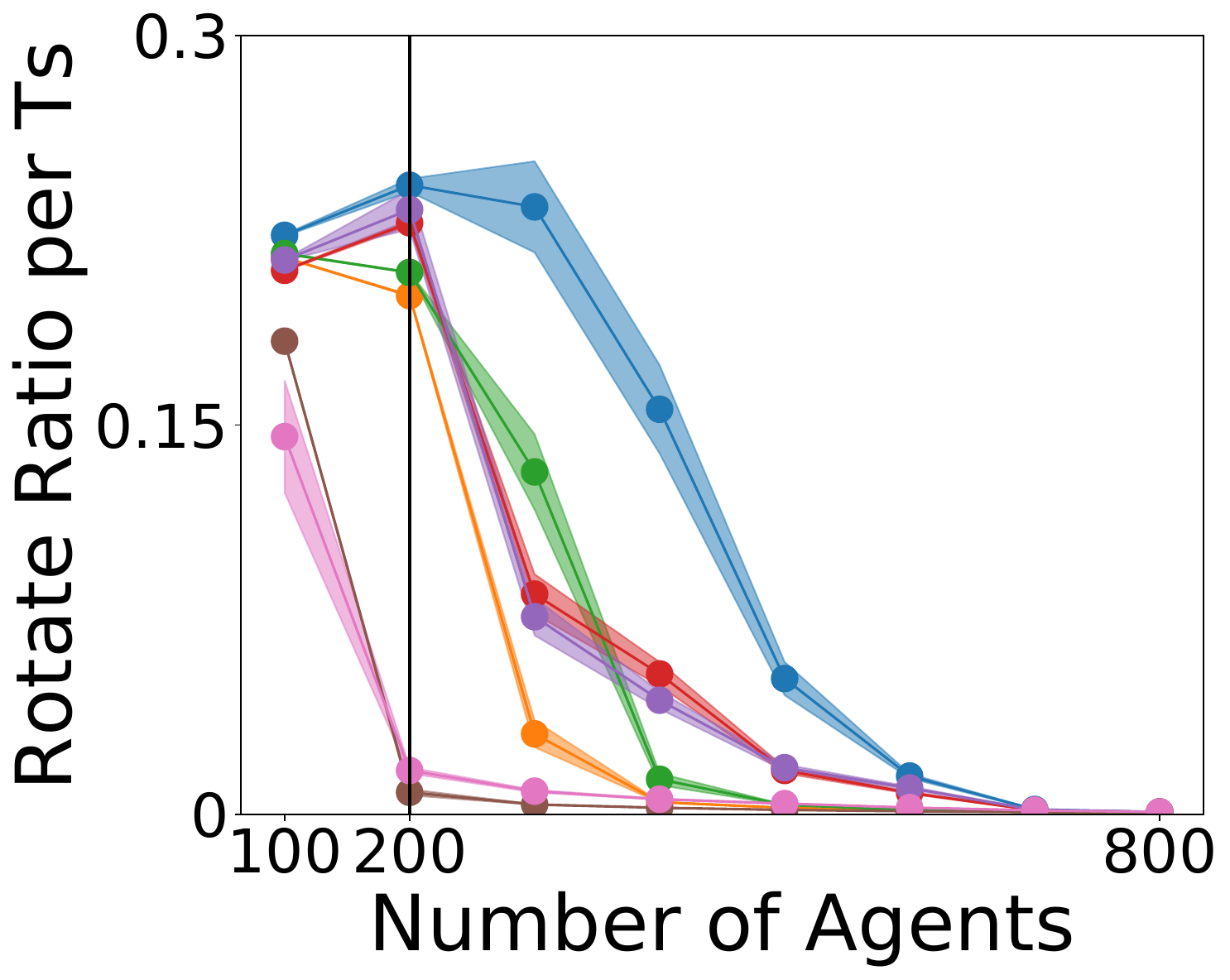}
    \end{subfigure}\hfill
    \begin{subfigure}[t]{0.16\textwidth}
        \centering
        \includegraphics[width=\linewidth]{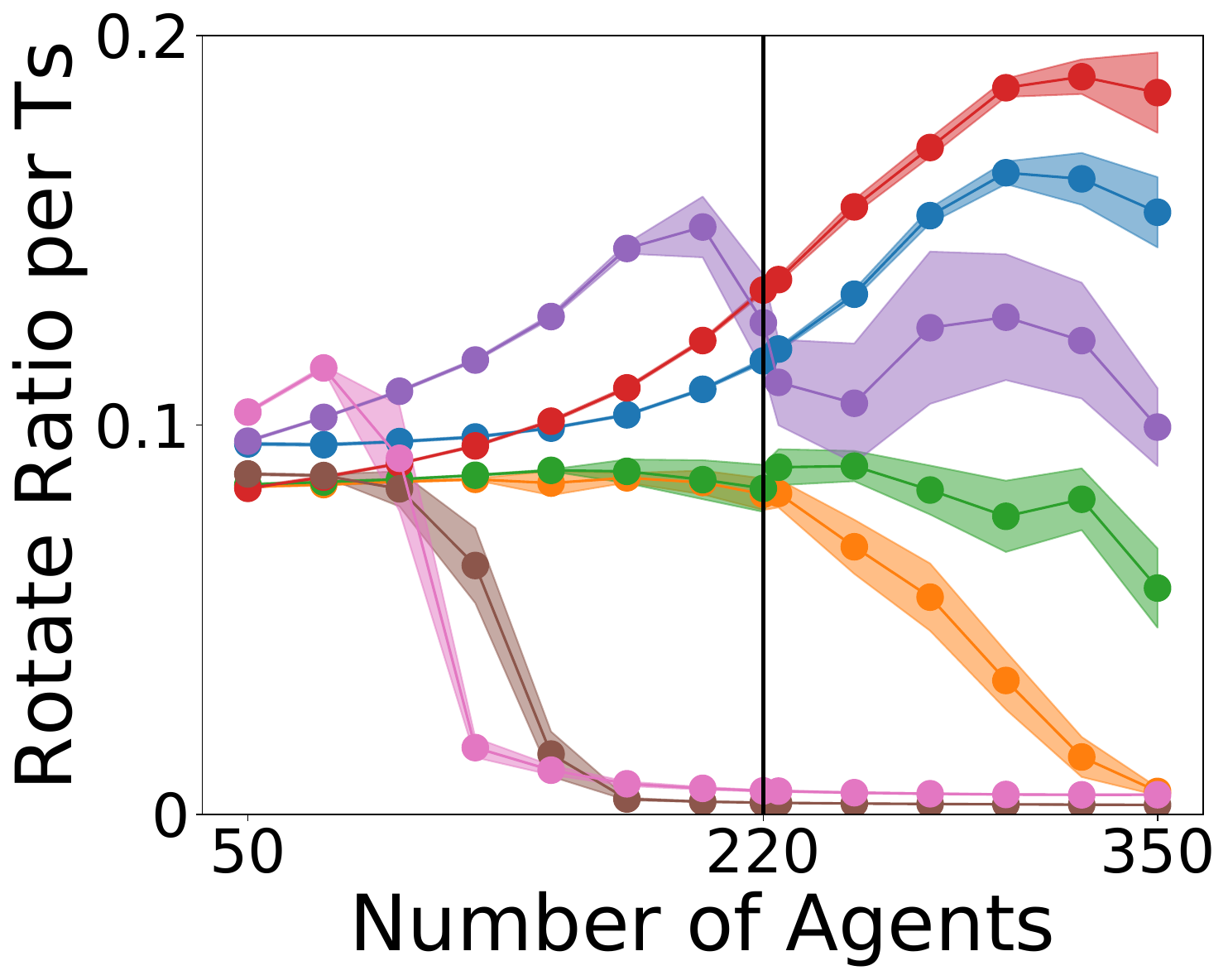}
    \end{subfigure}\hfill
    \begin{subfigure}[t]{0.16\textwidth}
        \centering
        \includegraphics[width=\linewidth]{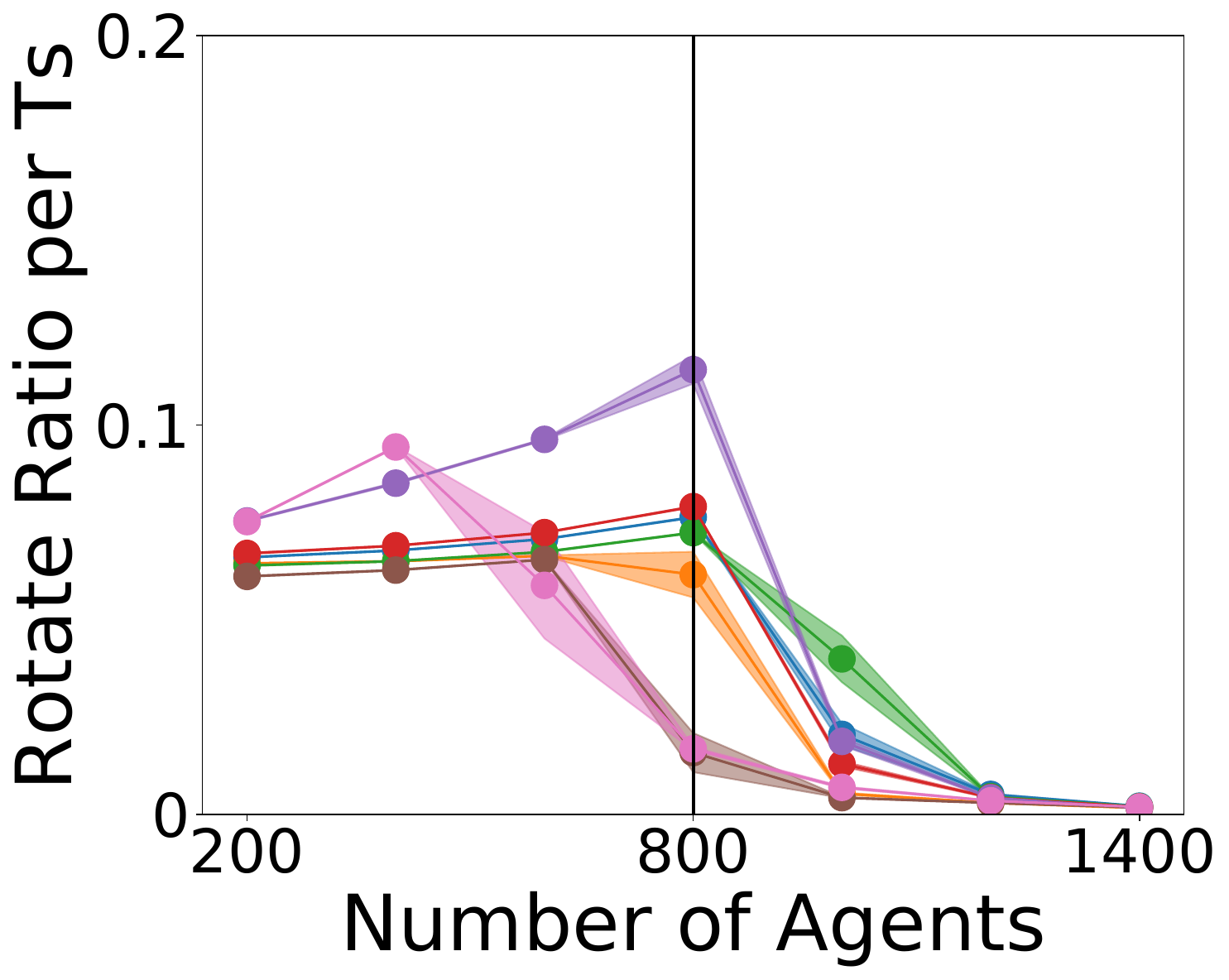}
    \end{subfigure}
    \par\vspace{-\abovecaptionskip}
    \subcaptionbox{Setup 1\label{fig:rhcr-r-random-32-32-20}}
        {\phantom{\rule{0.16\textwidth}{0pt}}}\hfill
    \subcaptionbox{Setup 2\label{fig:rhcr-r-warehouse-33-36}}
        {\phantom{\rule{0.16\textwidth}{0pt}}}\hfill
    \subcaptionbox{Setup 3\label{fig:rhcr-r-empty-48-48}}
        {\phantom{\rule{0.16\textwidth}{0pt}}}\hfill

    \caption{Throughput, wait action ratio, and rotation action ratio with different numbers of agents for RHCR in setups 1 to 3.}
    \label{fig:major-result-rhcr-r}
\end{figure}

\begin{figure*}[!t]
    \centering
    \includegraphics[width=1\linewidth]{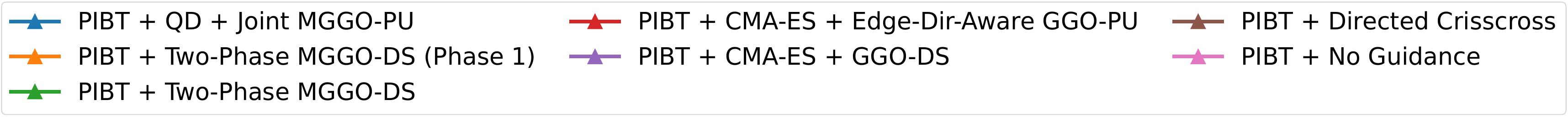}\par\medskip

    % -------- Row 1 --------
    \begin{subfigure}[t]{0.19\textwidth}
        \centering
        \includegraphics[width=\linewidth]{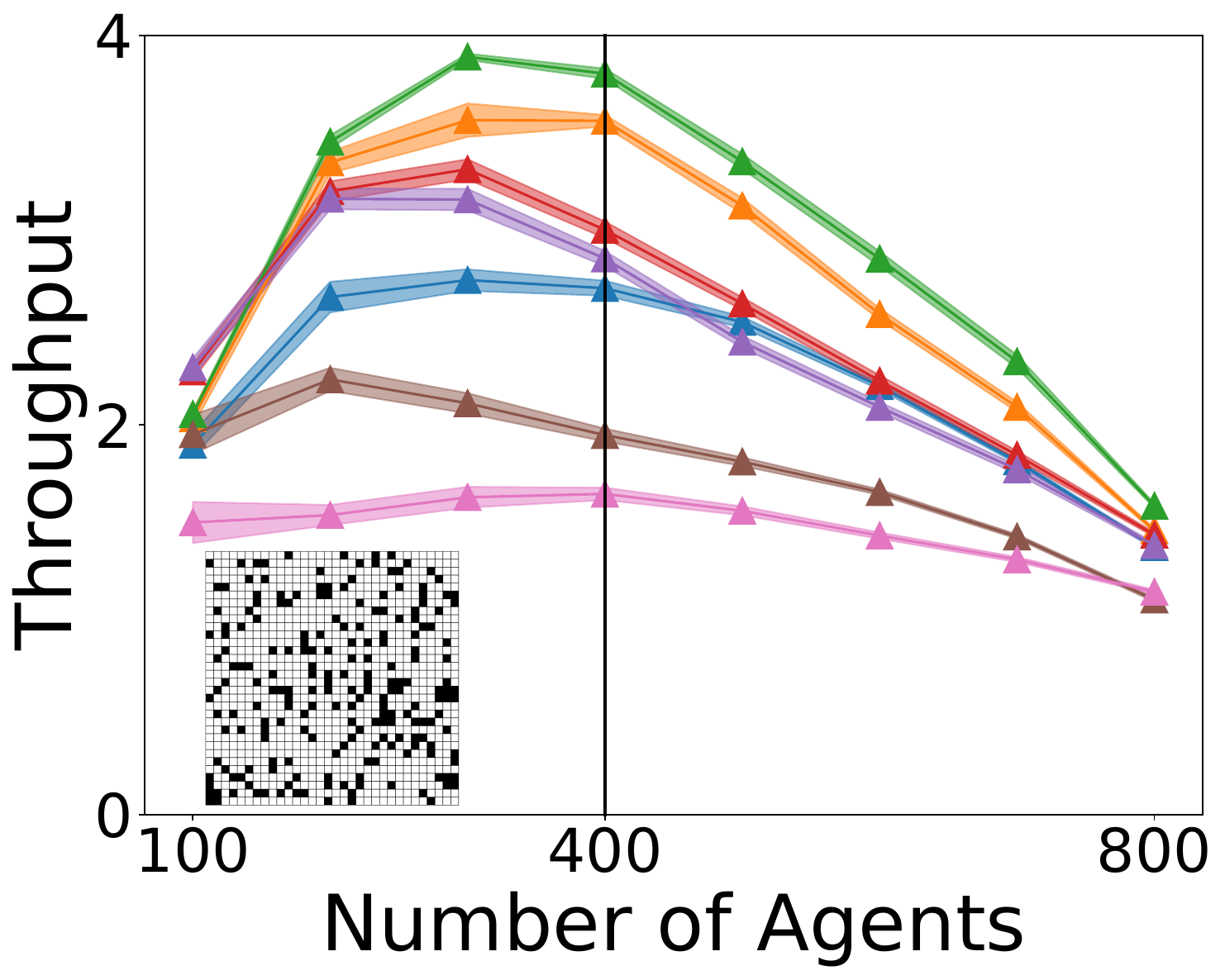}
    \end{subfigure}\hfill
    \begin{subfigure}[t]{0.19\textwidth}
        \centering
        \includegraphics[width=\linewidth]{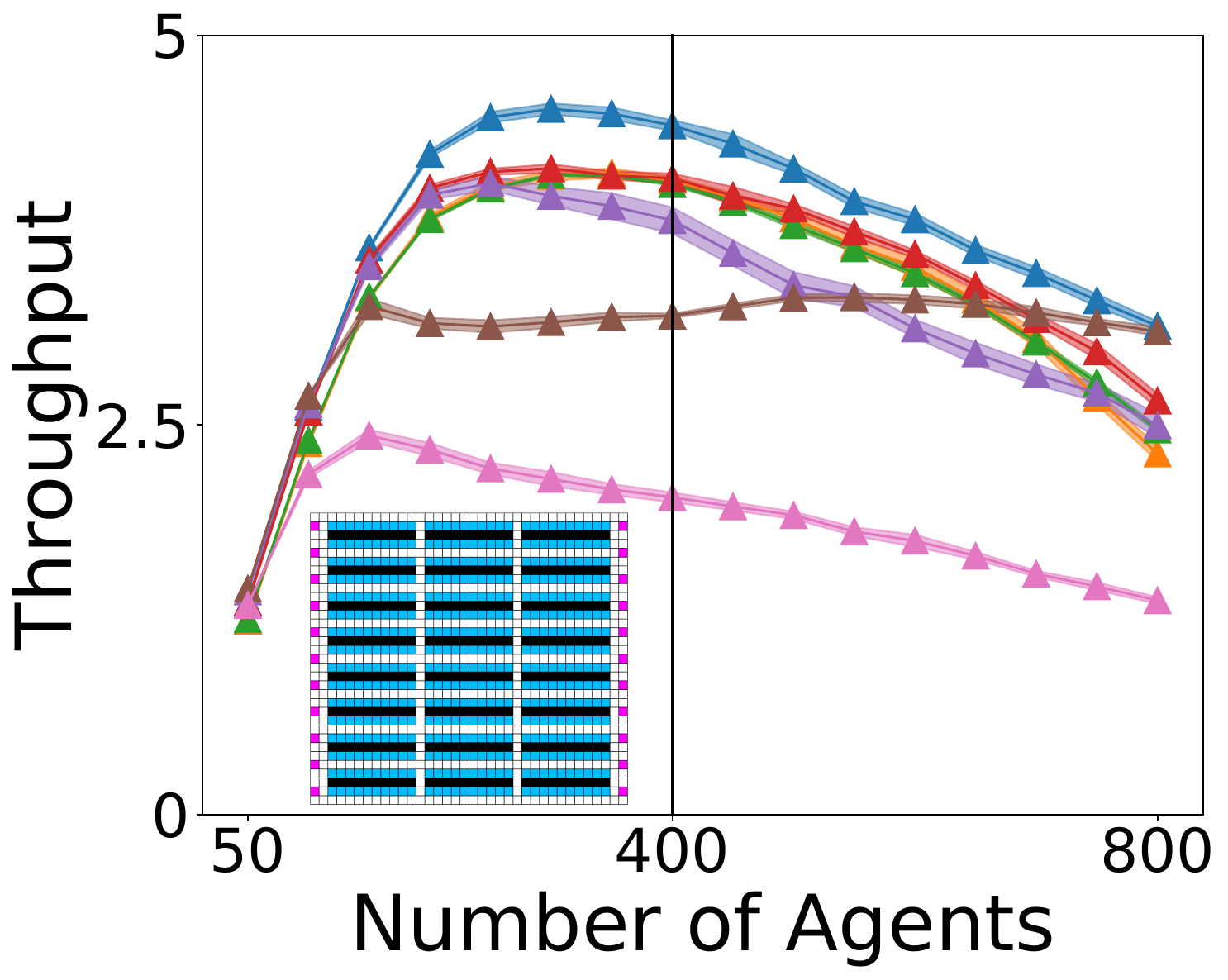}
    \end{subfigure}\hfill
    \begin{subfigure}[t]{0.19\textwidth}
        \centering
        \includegraphics[width=\linewidth]{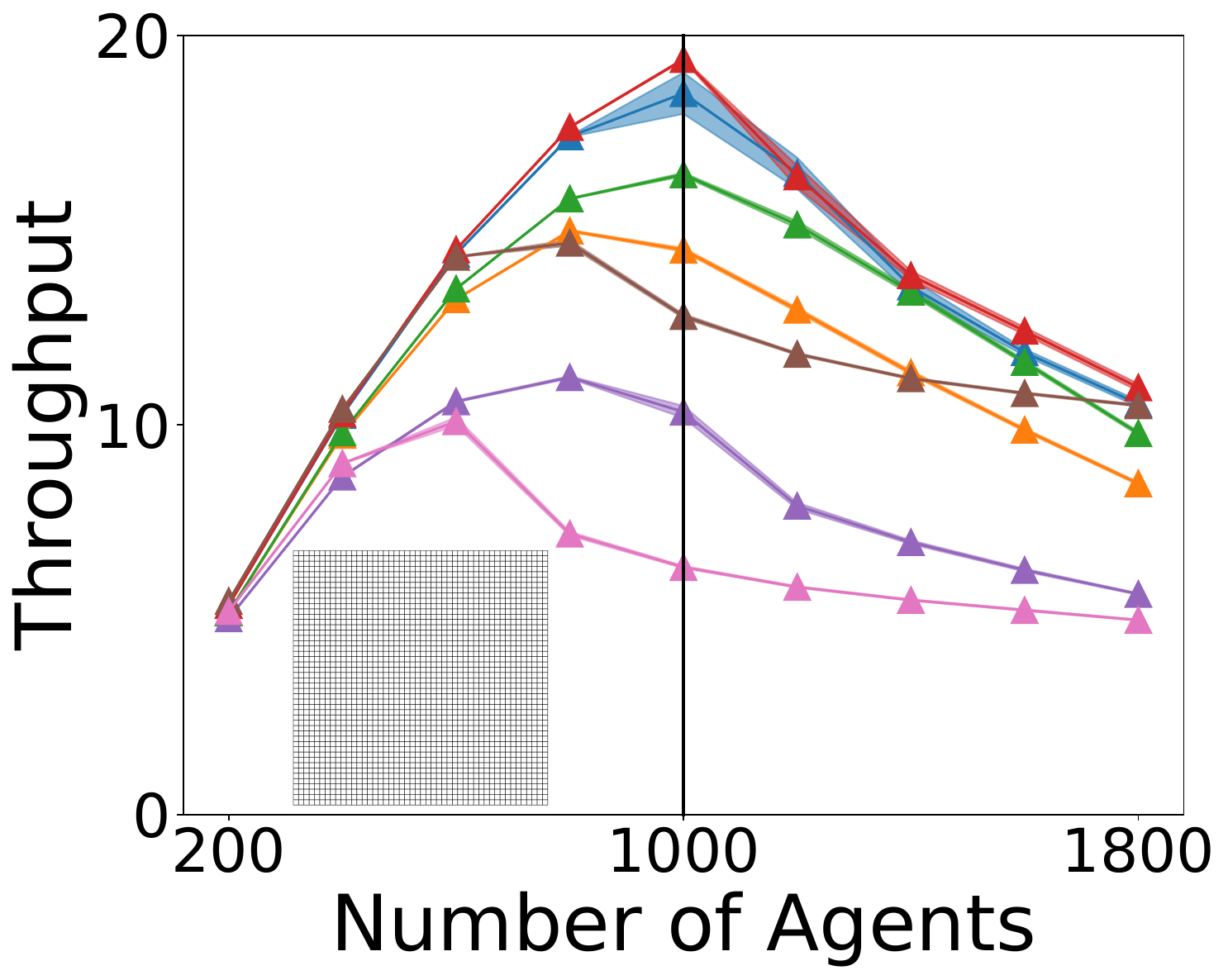}
    \end{subfigure}\hfill
    \begin{subfigure}[t]{0.19\textwidth}
        \centering
        \includegraphics[width=\linewidth]{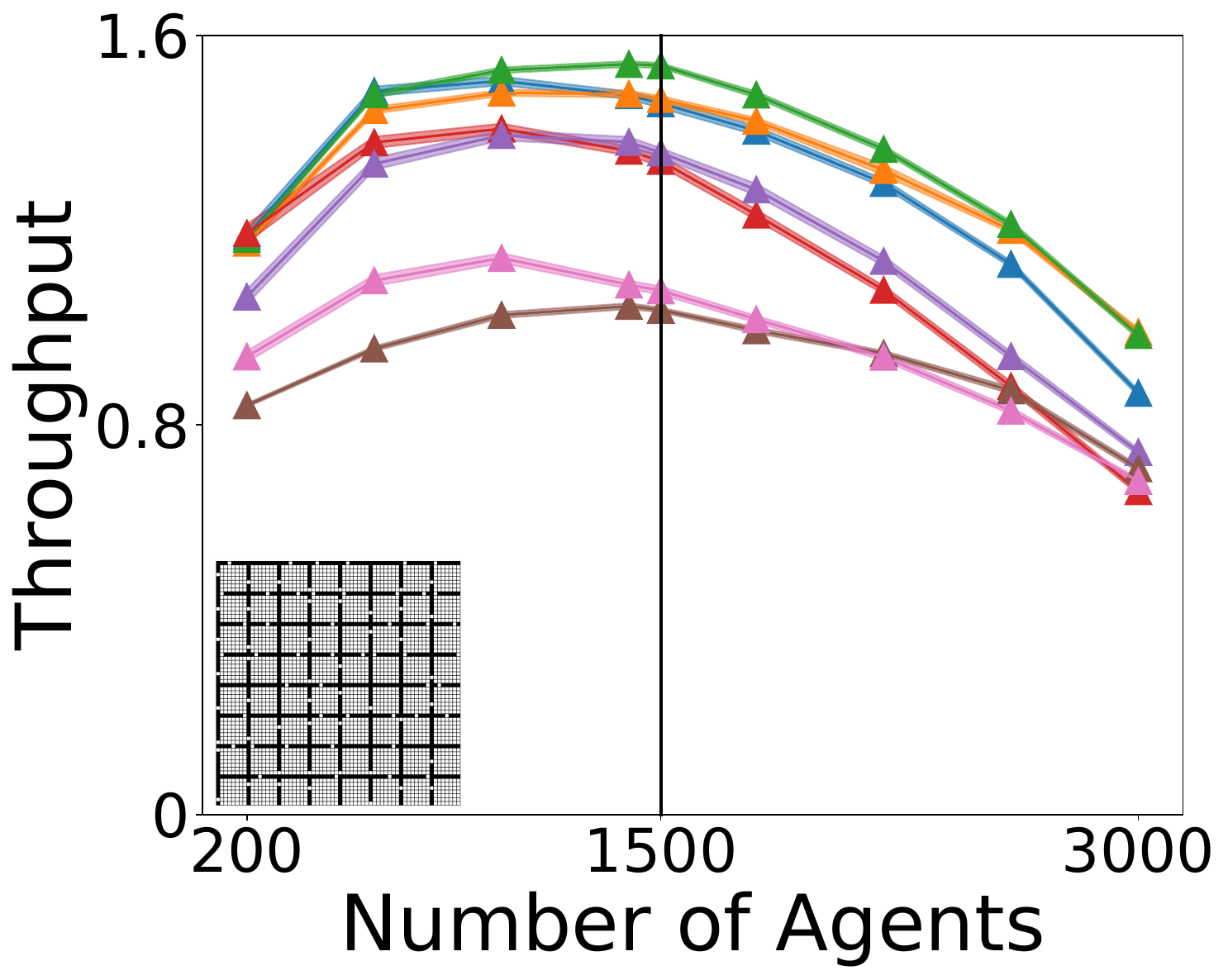}
    \end{subfigure}\hfill
    \begin{subfigure}[t]{0.19\textwidth}
        \centering
        \includegraphics[width=\linewidth]{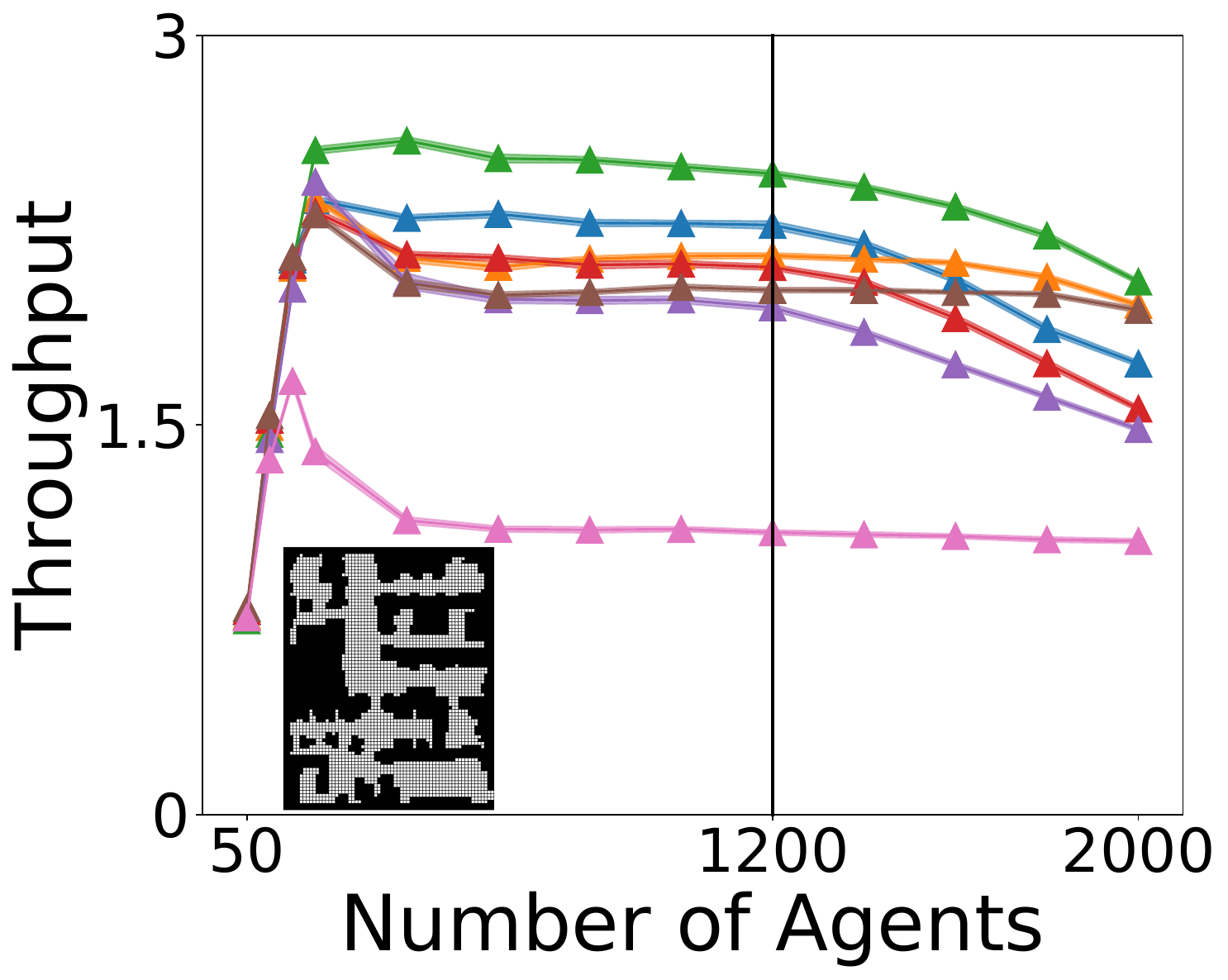}
    \end{subfigure}\par\vspace{3pt}

    % -------- Row 2 --------
    \begin{subfigure}[t]{0.19\textwidth}
        \centering
        \includegraphics[width=\linewidth]{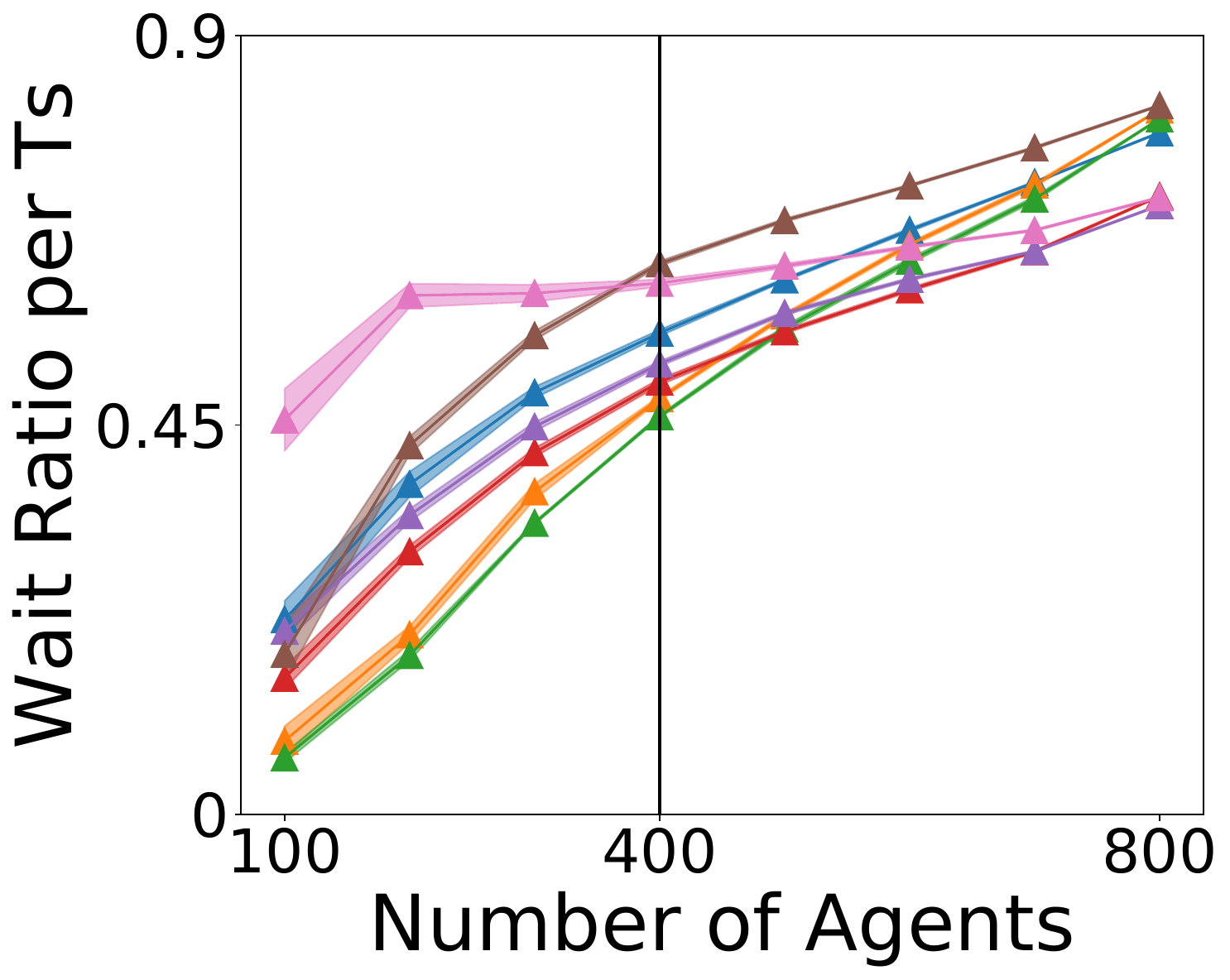}
    \end{subfigure}\hfill
    \begin{subfigure}[t]{0.19\textwidth}
        \centering
        \includegraphics[width=\linewidth]{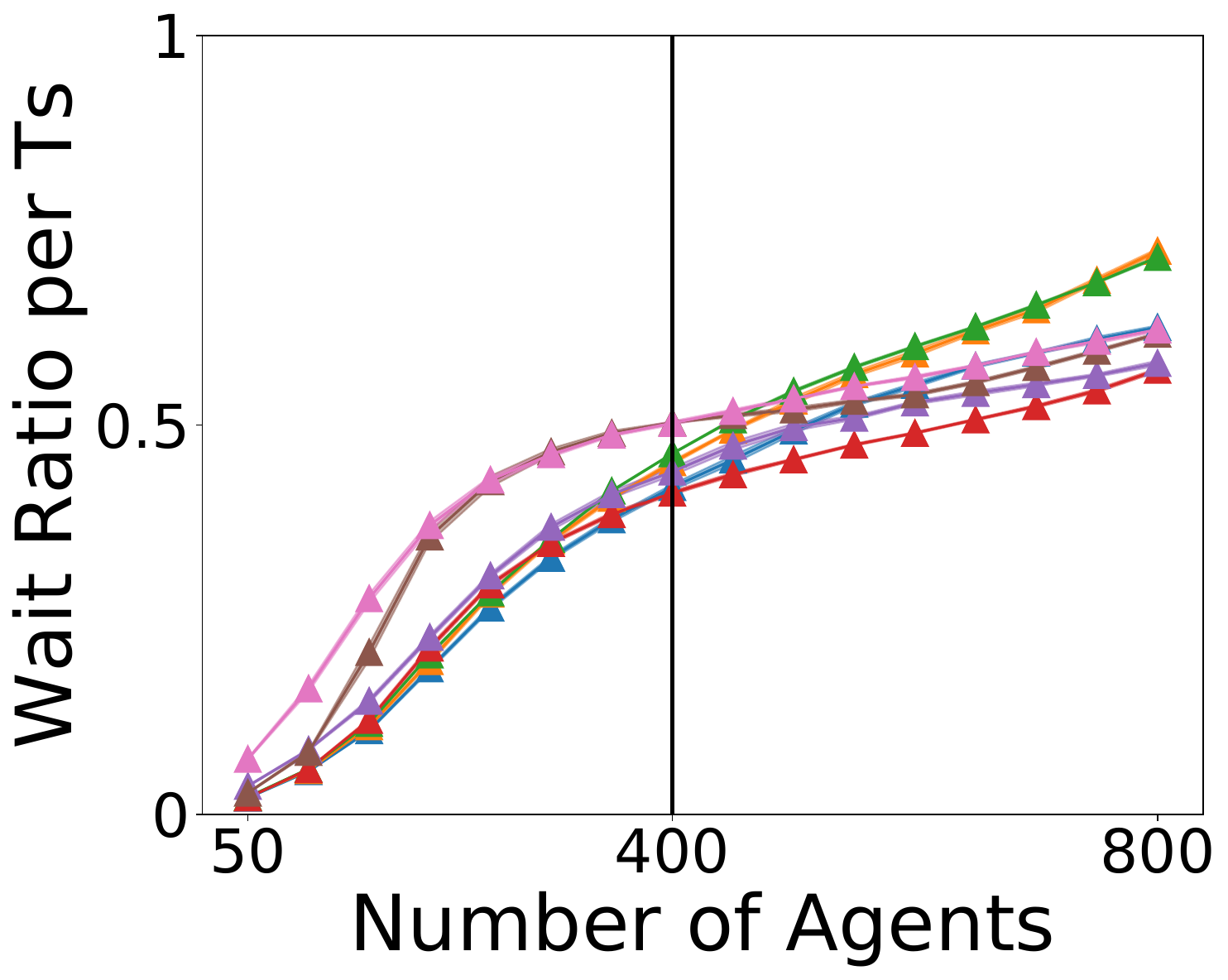}
    \end{subfigure}\hfill
    \begin{subfigure}[t]{0.19\textwidth}
        \centering
        \includegraphics[width=\linewidth]{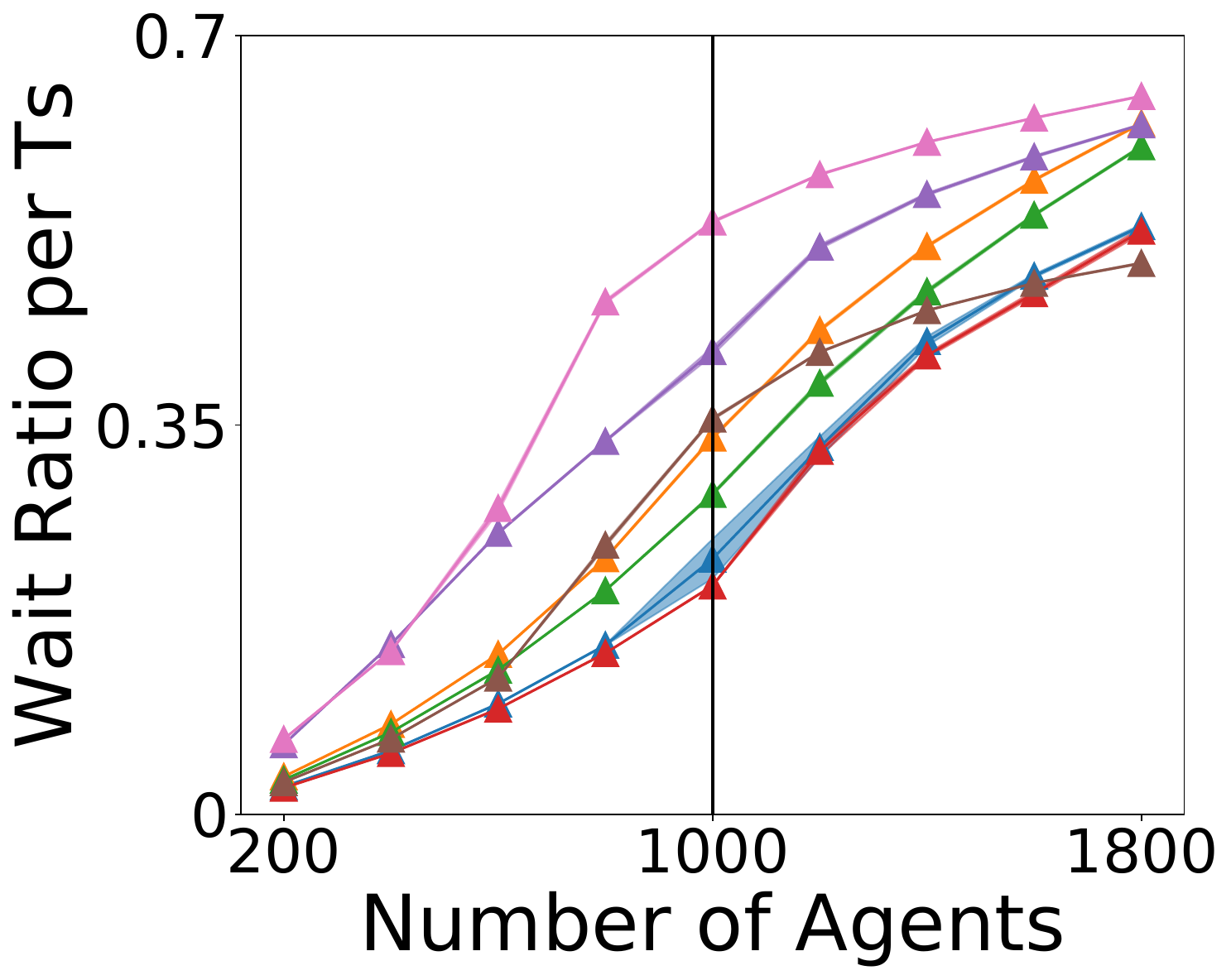}
    \end{subfigure}\hfill
    \begin{subfigure}[t]{0.19\textwidth}
        \centering
        \includegraphics[width=\linewidth]{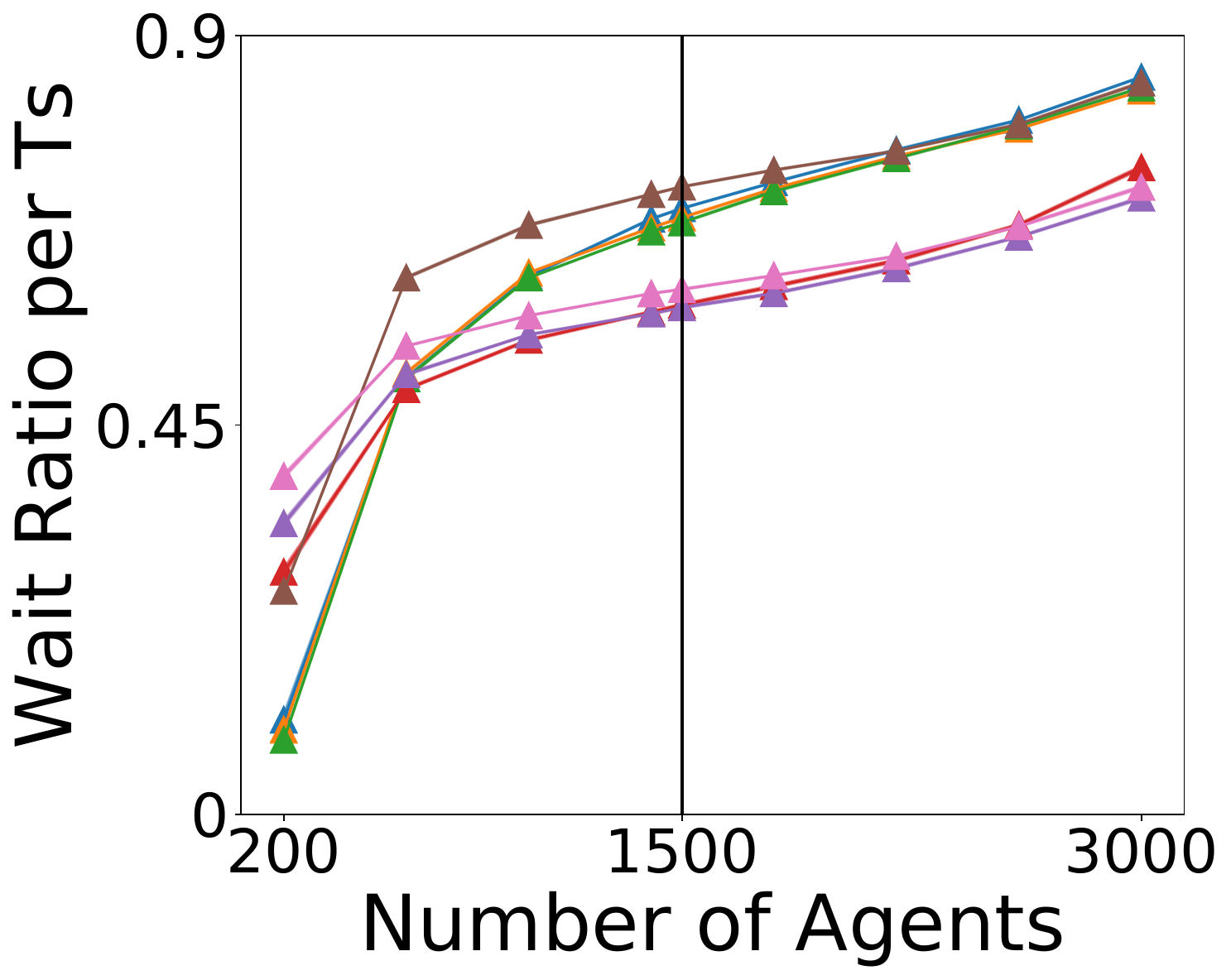}
    \end{subfigure}\hfill
    \begin{subfigure}[t]{0.19\textwidth}
        \centering
        \includegraphics[width=\linewidth]{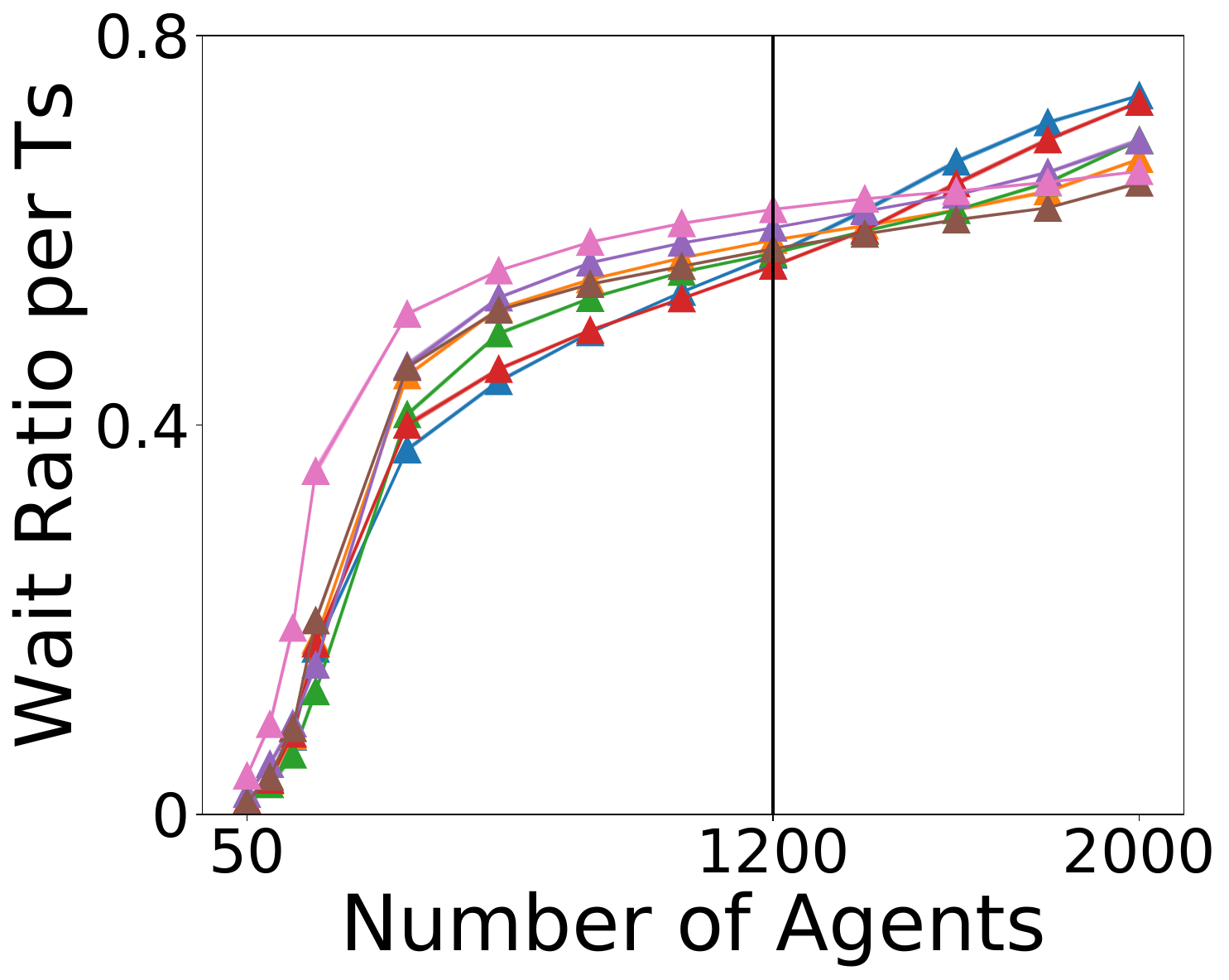}
    \end{subfigure}\par\vspace{3pt}

    % -------- Row 3 --------
    \begin{subfigure}[t]{0.19\textwidth}
        \centering
        \includegraphics[width=\linewidth]{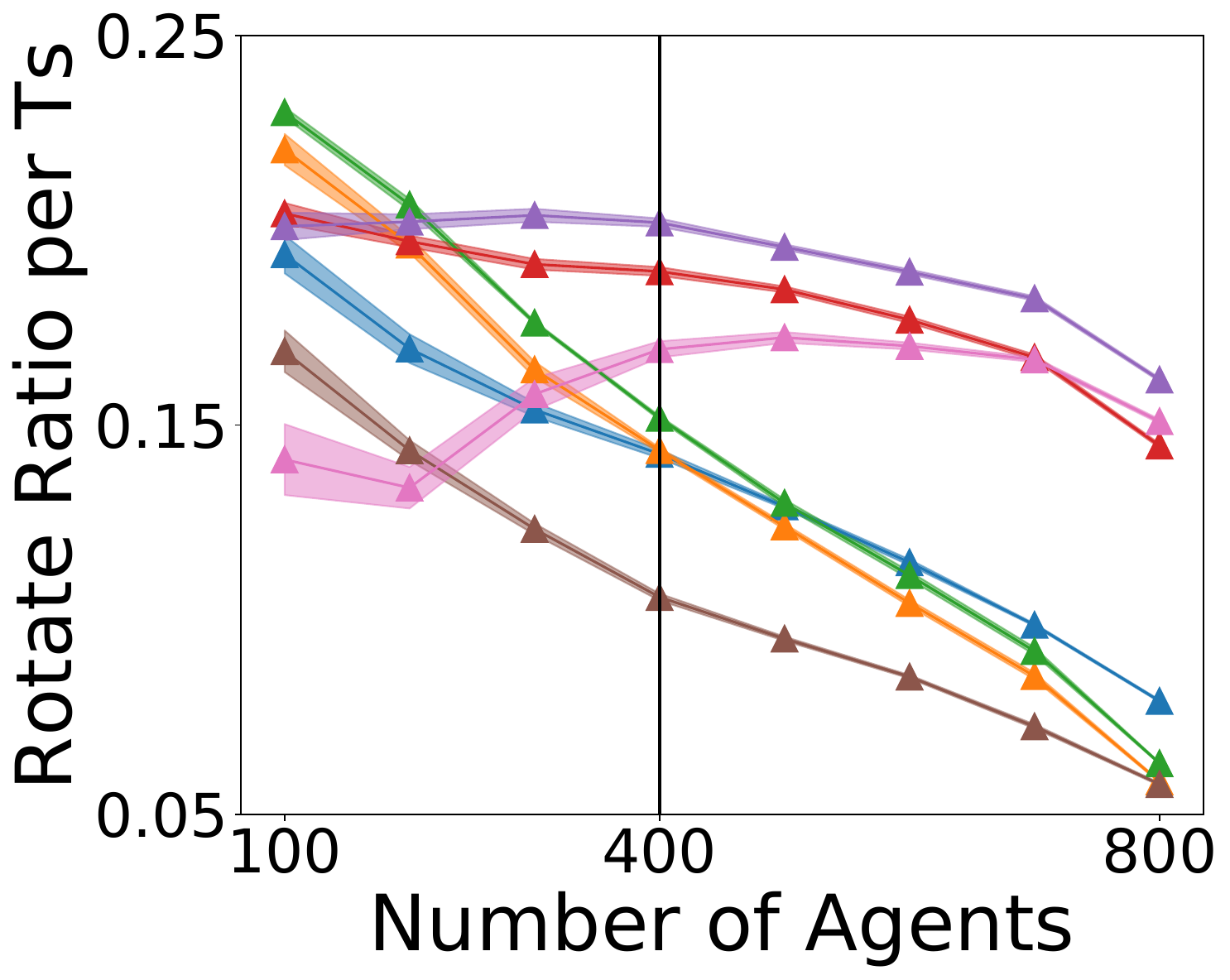}
    \end{subfigure}\hfill
    \begin{subfigure}[t]{0.19\textwidth}
        \centering
        \includegraphics[width=\linewidth]{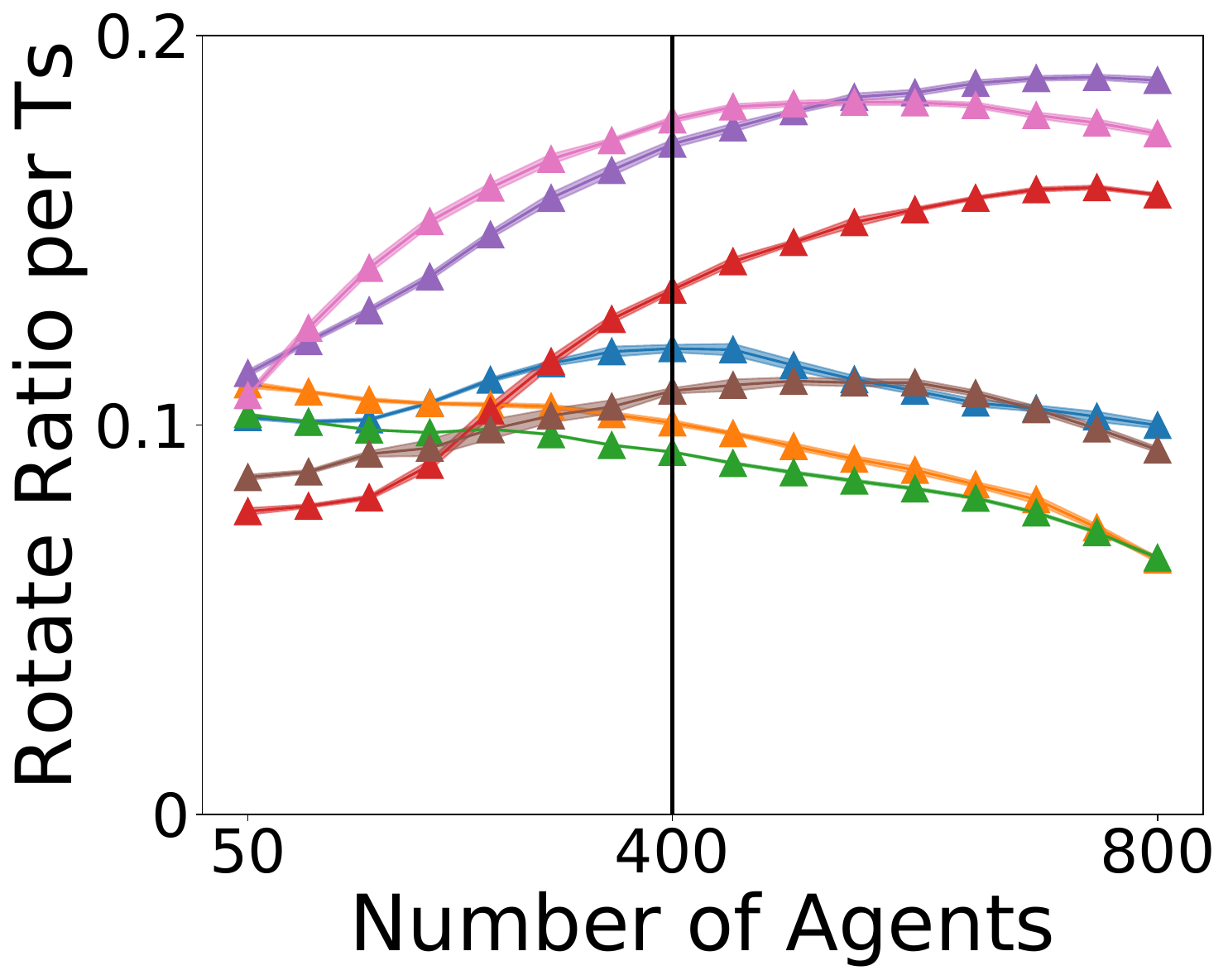}
    \end{subfigure}\hfill
    \begin{subfigure}[t]{0.19\textwidth}
        \centering
        \includegraphics[width=\linewidth]{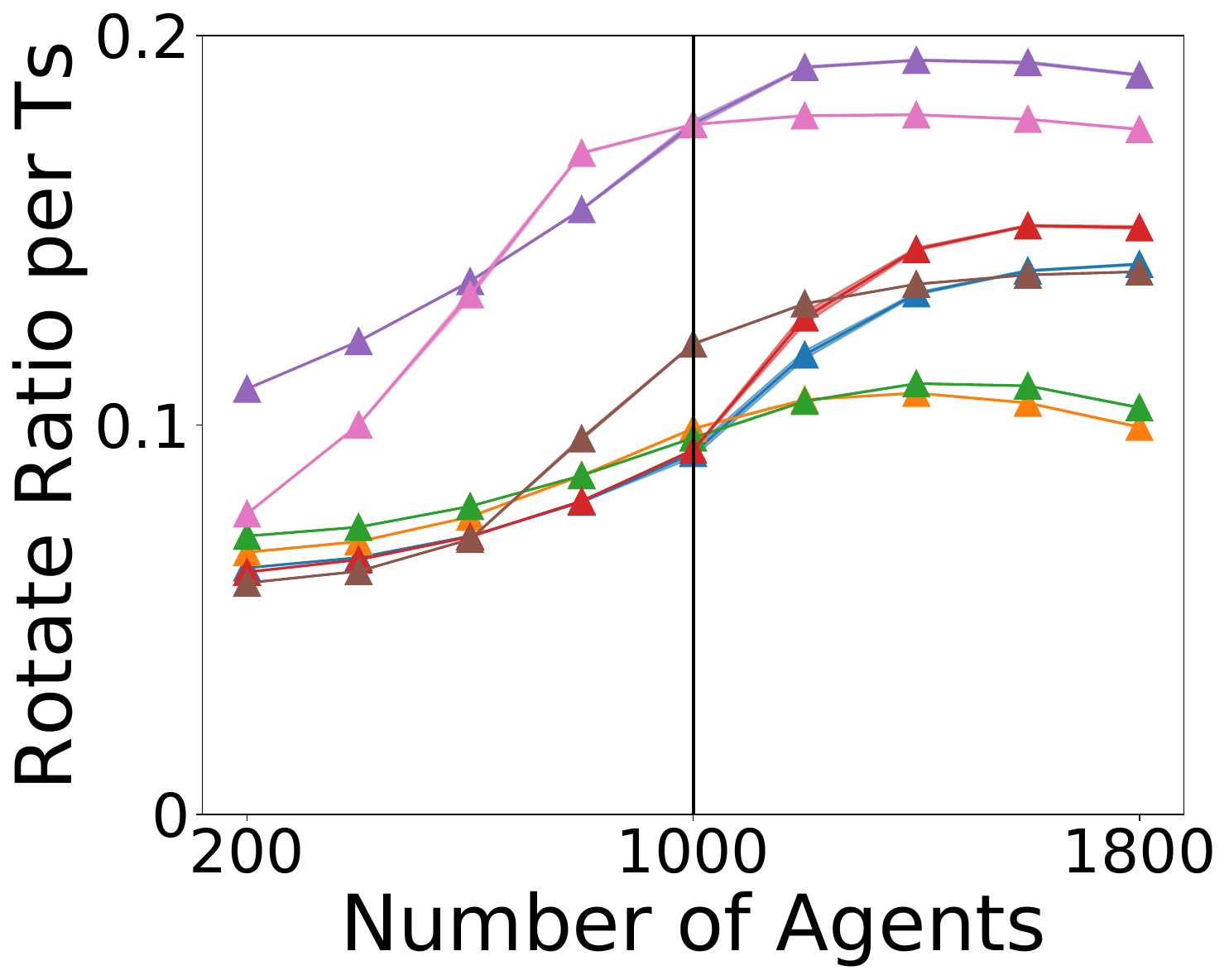}
    \end{subfigure}\hfill
    \begin{subfigure}[t]{0.19\textwidth}
        \centering
        \includegraphics[width=\linewidth]{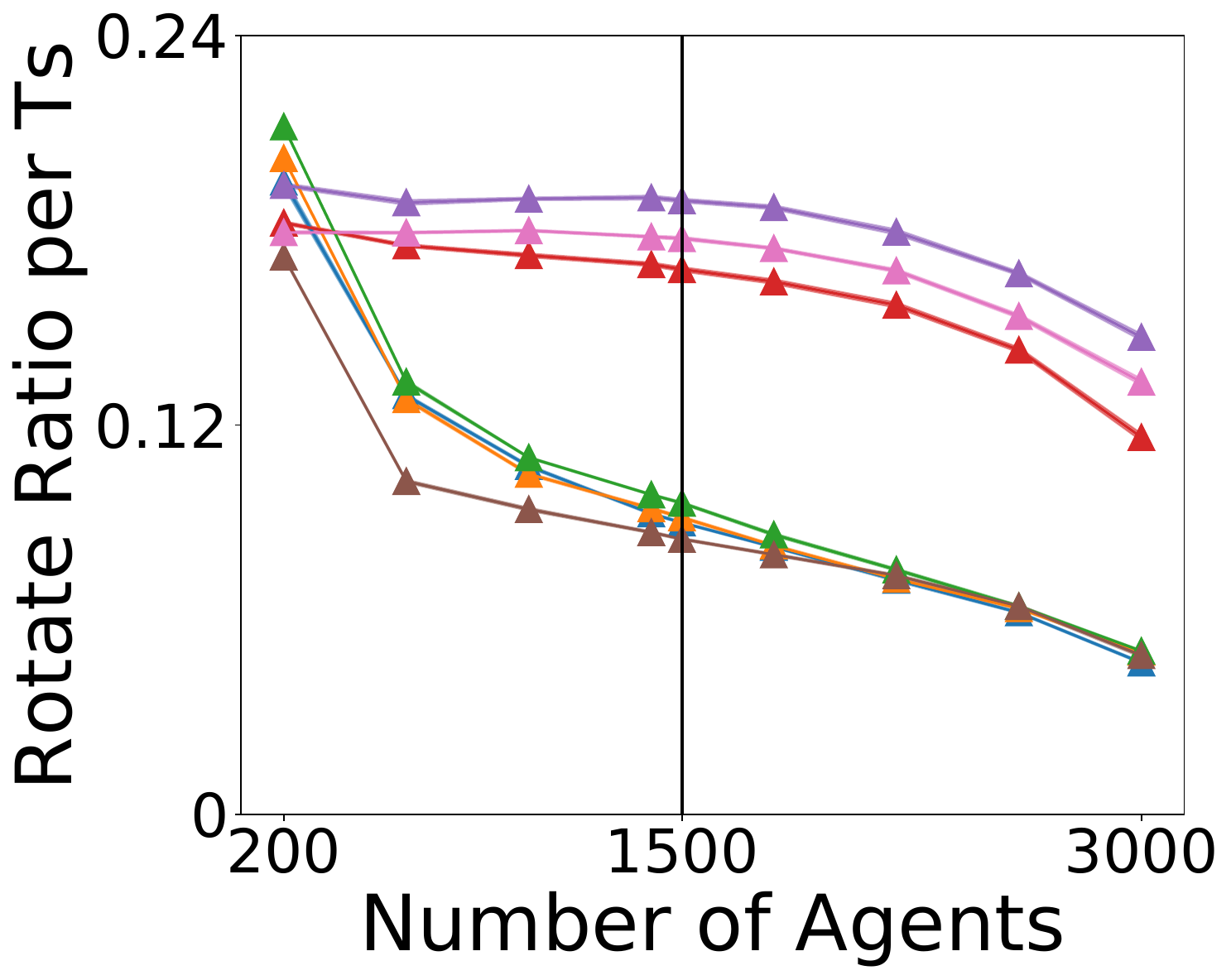}
    \end{subfigure}\hfill
    \begin{subfigure}[t]{0.19\textwidth}
        \centering
        \includegraphics[width=\linewidth]{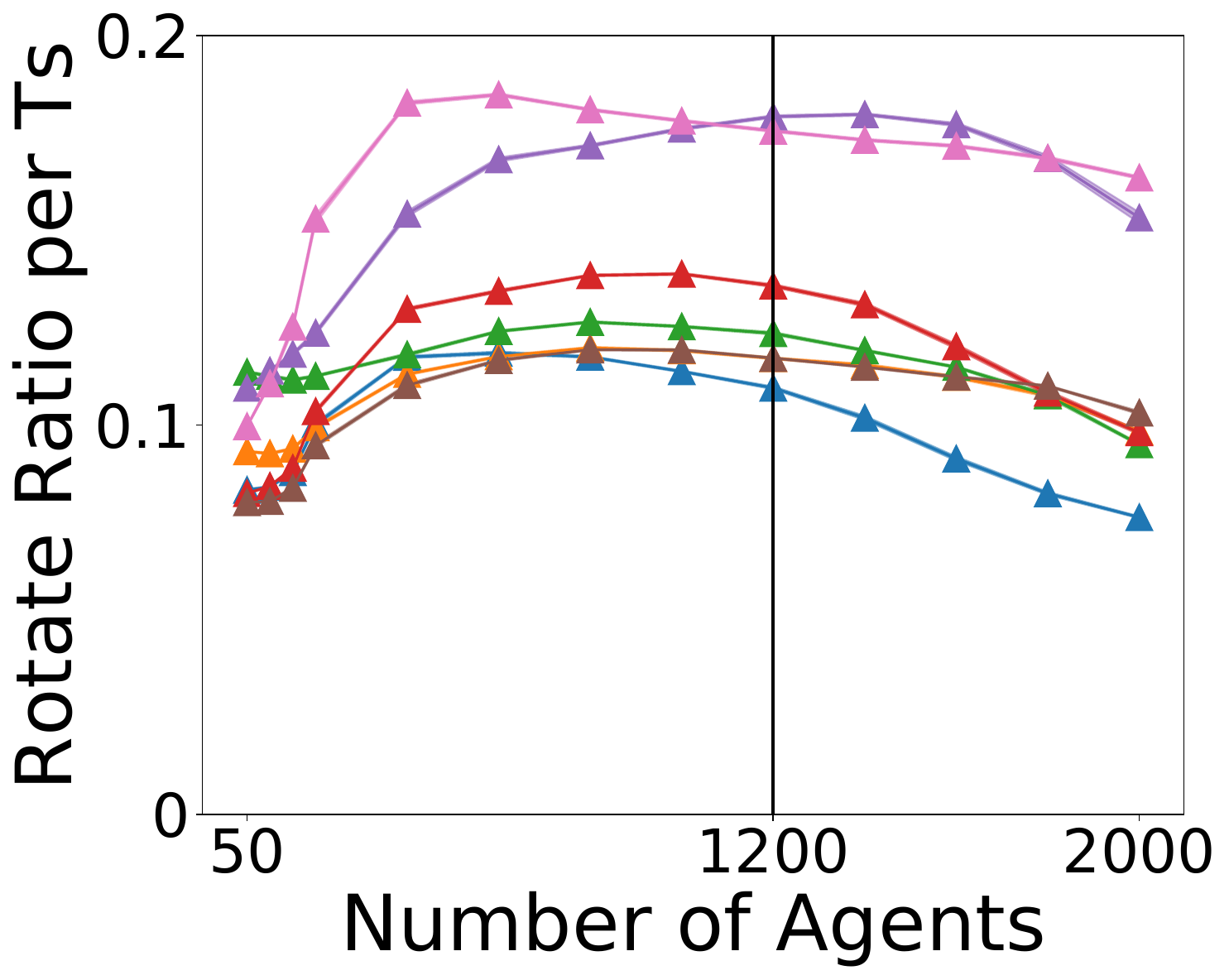}
    \end{subfigure}
    \par\vspace{-\abovecaptionskip}
    \subcaptionbox{Setup 4\label{fig:pibt-r-random-32-32-20}}
        {\phantom{\rule{0.19\textwidth}{0pt}}}\hfill
    \subcaptionbox{Setup 5\label{fig:pibt-r-warehouse-33-36}}
        {\phantom{\rule{0.19\textwidth}{0pt}}}\hfill
    \subcaptionbox{Setup 6\label{fig:pibt-r-empty-48-48}}
        {\phantom{\rule{0.19\textwidth}{0pt}}}\hfill
    \subcaptionbox{Setup 7\label{fig:pibt-r-room-64-64-8}}
        {\phantom{\rule{0.19\textwidth}{0pt}}}\hfill
    \subcaptionbox{Setup 8\label{fig:pibt-r-den312d}}
        {\phantom{\rule{0.19\textwidth}{0pt}}}

    \caption{Throughput, wait action ratio, and rotation action ratio with different numbers of agents for PIBT in setups 4 to 8. }
    \label{fig:major-result-pibt-r}
\end{figure*}

\subsubsection{Proposed Methods vs. Baselines}
We run all optimized (mixed) guidance graphs with various numbers of agents, each for 50 simulations, and show the results in \Cref{fig:major-result-pibt-r,fig:major-result-rhcr-r}. We show throughput and the ratio of agents that wait and rotate at each timestep (Ts). The solid lines are the average, and the shaded areas are 95\% confidence intervals. The black vertical lines indicate $N_a$.

With PIBT, one of our proposed methods always achieves the highest throughput, but the best method varies across setups. Interestingly, maximizing the number of unidirectional edges with the two-phase MGGO-DS reduces the ratio of rotation actions, potentially at the cost of more waiting. This is because unidirectional edges can prevent agents from taking suboptimal detours, especially with PIBT. This helps the two-phase MGGO-DS to achieve the highest throughput in \randomSmall, \roomLarge, and \denSmall. In \emptyMid, however, agents have more space to coordinate, so using edge-direction-aware GGO-PU to optimize the edge weights outperforms all MGGO methods. Nevertheless, our edge-direction-aware GGO.PU outperforms GGO-DS on most maps, highlighting the importance of incorporating edge-direction-related information into GGO. In \warehouselargeW, the QD-based Joint MGGO-PU achieves the best performance, while the other two proposed methods achieve similar performance, indicating the need for joint optimization.

With RHCR in setup 2, the QD-based Joint MGGO-PU achieves the highest throughput in \warehouselargeW, and unidirectional edges are helpful in terms of reducing the rotation action ratio.
However, for setups 1 and 3, optimizing edge directions is less helpful, and we do not observe significant differences on the ratio of rotation actions in these setups. Equipped with an optimal low-level planner, RHCR is less inclined to have agents traversing the high-cost edges than PIBT. Therefore, optimizing edge weights is good enough in most cases for RHCR. Nevertheless, providing edge-direction-related information to GGO continues to be helpful, allowing edge-direction-aware GGO-PU to achieve the best throughput in setups 1 and 3. We show addition numerical results at $N_a$ in \Cref{appen:numerical-result}.

\subsubsection{Ablations}

\begin{table}[!t]
    \centering
    \resizebox{1\linewidth}{!}{
        \begin{tabular}{ccrrr}
\toprule
MGGO & Throughput & CPU Runtime \\
\midrule
\textbf{QD + Joint MGGO-PU} & $\textbf{4.42} \pm \textbf{0.02}$ & $2.76 \pm 0.03$\\
CMA-ES + Joint MGGO-PU  & $4.26 \pm 0.01$ & $3.35 \pm 0.01$\\
QD + Joint MGGO-PU (Indept Edge-Dir Rep) & $4.18 \pm 0.01$ & $\textbf{2.57} \pm \textbf{0.01}$\\
\midrule
\textbf{MGGO-DS (Phase 1)} & $\textbf{4.07} \pm \textbf{0.01}$ & $\textbf{2.40} \pm \textbf{0.02}$\\
MGGO-DS (Phase 1, Bi-Dir) & $3.93 \pm 0.01$ & $3.03 \pm 0.02$\\
MGGO-DS (Phase 1, k-node Only) & $3.86 \pm 0.01$ & $3.01 \pm 0.02$\\
MGGO-DS (Phase 1, k-node/k-edge Only) & $3.90 \pm 0.01$ & $2.76 \pm 0.02$\\
\bottomrule
        \end{tabular}
    }
    \caption{Numerical results of ablation experiments.
    We run each graph for 50 times and report the results in the format of $x \pm y$, where $x$ and $y$ are the average and standard error, respectively.
   }
    \label{tab:numerical-result-ablation}
\end{table}

% Definitely:
% For Joint opt: QD + obj = tp + edge_sim vs. CMA-ES + obj = tp
% model architecture: each node predict direction going right and down (instead of all four directions)
\paragraph{Variants of Joint MGGO-PU.}
We compare our proposed QD-based Joint MGGO-PU with two variants to justify our design choices: (1) CMA-ES + Joint MGGO-PU, where we use CMA-ES to directly optimize throughput to show the necessity of using QD and regularizing with edge similarity scores, and (2) QD + Joint MGGO-PU (Indept Edge-Dir Rep), where we use the independent edge direction representation from \Cref{sec:qd-mggo} in both input and output of the update model to justify our choice of representation. We use the same $N_e$, $N_{eval}$, and $T$ in setup 5.
We present the result with $N_a = 400$ agents in \Cref{tab:numerical-result-ablation}, showing that our proposed Joint MGGO-PU method achieves the highest throughput. 

% Include bi-directed edges or not during optimization for GDO
\paragraph{Variants of $(1+\lambda)$ EA in MGGO-DS Phase One.}
We compare our proposed $(1+\lambda)$ EA with three variants: (1) MGGO-DS (Phase 1, Bi-Dir), where the decision variables can assign an edge to be bidirected, (2) MGGO-DS (Phase 1, $k$-vertices Only), where only the $k$-vertices mutation operator is used, and (3) MGGO-DS (Phase 1, $k$-vertices/$k$-edges Only), where only $k$-vertices and $k$-edges mutation operators are used. \Cref{tab:numerical-result-ablation} shows that including bidirected edges does not help optimize the edge directions. In addition, using diverse mutation operators helps with optimizations. We show more ablation results in \Cref{appen:qd-vs-cma-es-mggo,appen:gdo-ablation}, and example optimized mixed guidance graphs in \Cref{appen:example-mgg}.

% Maybe:
% PIU with iter=1 + observation = NoG + CC vs. PIU with iter=5 + observation = NoG (old setup)
% Have results with PIBT, but not with PIBT_R

\section{Conclusion}

We extend GGO to MGGO by optimizing both edge directions and weights, and introduce \ERS, a fast algorithm to enforce strong connectivity via edge reversals. We also incorporate edge-direction-related information to GGO. Empirically, our proposed methods outperforms the baselines, establishing a new state-of-the-art for guidance in LMAPF.
We discover that edge directions are more beneficial for rule-based planners such as PIBT, primarily by reducing rotation actions while offering more limited gains for search-based methods such as RHCR. 
However, MGGO does not consistently outperform all GGO variants, likely because of reduced sample efficiency due to increased complexity. Future work may improve sample efficiency via surrogate modeling~\cite{Zhang2021DeepSA}.

\section*{Acknowledgments}
This work is in part supported by the National Science Foundation (NSF) under grant numbers \#$2328671$ and \#$2441629$, as well as a gift from Amazon.
This work used Bridge-$2$ at Pittsburgh Supercomputing Center (PSC)~\cite{PSCBridgeTwo2021} through allocation CIS$220115$ from the Advanced Cyberinfrastructure Coordination Ecosystem: Services \& Support (ACCESS) program, which is supported by NSF under grant numbers \#$2138259$, \#$2138286$, \#$2138307$, \#$2137603$, and \#$2138296$.

\bibliographystyle{named}
\bibliography{ijcai26}

\clearpage

\appendix

\section{Theoretical and Empirical Results of \ERS} 

\subsection{Proof} \label{appen:efs-proof}

% \begin{lemma} \label{lemma:d-is-dag}
%     If $G_{in}$ is not strongly connected and we build a condensation graph $\mathcal{D(V,E)}$ following \ERS in \Cref{alg:edge-dir-flip-search}, then $\mathcal{D}$ is a Directed Acyclic Graph (DAG).
% \end{lemma}

% \begin{proof}
%     For the sake of contradiction, assume that $\mathcal{D}$ is not a DAG. Then this is at least one directed cycle in $\mathcal{D}$. Since all meta vertices on the directed cycle can reach each other, the vertices in the SCCs of $G_{in}$ corresponding to the mete vertices on the directed cycle can form a new SCC. This contradicts the fact that all SCCs are found using the Tarjan algorithm. Therefore $\mathcal{D}$ is a DAG.
% \end{proof}

\begin{lemma} \label{lemma:D-weak-connect}
    The condensation graph $\mathcal{D}$ is weakly connected.
\end{lemma}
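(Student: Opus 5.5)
The plan is to argue by contradiction, using the fact that $G_{in}$ comes from the connected undirected graph $G(V,E)$. A mixed guidance graph is built so that every undirected edge $\{u,v\}\in E$ yields at least one of $(u,v)$ or $(v,u)$ in $E_{mg}$. Bridges are made bidirected in \cref{efs:add-bridge}. \Cref{alg:edge-dir-flip-search} only ever reverses existing directed edges (\cref{efs:rev-edge}), and a reversal replaces $(u,v)$ by $(v,u)$. So I would first record an invariant: at every point of the execution, the underlying undirected graph of $G_{in}$, obtained by forgetting directions and self-loops, is exactly $G$. This holds by induction over the loop iterations.

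Next, suppose for contradiction that $\mathcal{D}(\mathcal{V},\mathcal{E})$ is not weakly connected. Then $\mathcal{V}$ splits into nonempty sets $\mathcal{A}$ and $\mathcal{B}$ with no meta edge in either direction between them. Let $A\subseteq V_{in}$ be the union of the vertices in the SCCs of $\mathcal{A}$, and let $B$ be the union for $\mathcal{B}$. The SCCs partition $V_{in}$, so $A$ and $B$ partition $V_{in}$ and are both nonempty. Because $G$ is connected, some undirected edge $\{u,v\}\in E$ has $u\in A$ and $v\in B$. By the invariant, $(u,v)$ or $(v,u)$ lies in $E_{in}$. Since $\mathsf{scc}(u)\in\mathcal{A}$ and $\mathsf{scc}(v)\in\mathcal{B}$, these are different SCCs. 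By the construction of the condensation graph \cite{cormen2009introalgo}, a meta edge between them then exists in $\mathcal{E}$, which contradicts the choice of $\mathcal{A}$ and $\mathcal{B}$. Hence $\mathcal{D}$ is weakly connected.

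The only step that needs care is the invariant, specifically that no undirected adjacency of $G$ is ever lost. Reversal keeps the endpoints. The one subtlety is when both $(u,v)$ and $(v,u)$ are present and one is reversed: the result is a duplicate, never a deletion, and edges inside $E_{out}$ leave a source SCC, so the adjacency survives in any case. Self-loops are irrelevant to the condensation graph and can be ignored. Everything else follows directly from the definitions.
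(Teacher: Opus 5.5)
Your proof is correct, but it takes a different route from the paper's, and yours is the one that actually proves the lemma.

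The paper's proof is one line. It invokes the standard fact that $\mathcal{D}$ is a DAG and concludes that $\mathcal{D}$ is ``trivially'' weakly connected. Acyclicity says nothing about connectivity: two isolated meta vertices already form a DAG. So that step is a non sequitur, and the paper's argument does not establish the statement.

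What is needed is what you supply. First, the underlying undirected graph of $G_{in}$, ignoring self-loops, is the connected graph $G$. Second, this persists through every iteration, because \ERS only adds bridge directions and reverses existing edges, which preserves endpoints. Hence a partition of $\mathcal{V}$ with no meta edge across it would give a partition of $V$ with no edge of $G$ across it, contradicting connectivity of $G$. Your route costs an explicit invariant. In exchange, it exposes the two hypotheses the lemma really depends on: $G$ is connected, and every edge of $G$ keeps at least one direction. Neither appears in the paper's proof. This weak connectivity is exactly what the case $|E_{out}|=0$ of \Cref{theorem:have-rev-edges} relies on.

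A minor remark: if $E_{out}$ is read, as in that theorem's proof, as the edges leaving the source SCC, your ``duplicate'' case never arises. If both $(u,v)$ and $(v,u)$ were present, $u$ and $v$ would lie in the same SCC. Your treatment handles that case harmlessly anyway.
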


\begin{proof}
    According to \citet{cormen2009introalgo}, $\mathcal{D}$ is a DAG. Therefore, $\mathcal{D}$ is trivially weakly connected.
\end{proof}

\begin{theorem} \label{theorem:have-rev-edges}
    If, from a meta vertex $\eta \in \mathcal{V}$ s.t. $deg^-(\eta) = 0$, we follow \ERS to find the set of edges $E_{rev}$, then $|E_{rev}| \geq 1$.
\end{theorem}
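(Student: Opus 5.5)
We need to show $|E_{out}|\ge 2$ whenever \ERS reaches \cref{efs:find-source}, i.e.\ whenever $C>1$. Then $|E_{rev}|=\floor{|E_{out}|/2}\ge 1$. The plan has three steps.

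\textbf{Step 1: $E_{out}$ is nonempty and equals a cut.} Let $S\subseteq V_{in}$ be the vertex set of the SCC corresponding to the source meta vertex $\eta$. Since $C>1$, both $S$ and $V_{in}\setminus S$ are nonempty. The underlying undirected graph $G$ is connected, and every undirected edge $\{u,v\}\in E$ is represented in $E_{in}$ by at least one orientation. This is by the definition of a mixed guidance graph, and bridges are made bidirected in \cref{efs:add-bridge}; reversals in \cref{efs:rev-edge} never delete an edge. Hence some undirected edge $\{u,v\}$ has $u\in S$ and $v\notin S$.

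Because $deg^-(\eta)=0$ in $\mathcal{D}$, no directed edge enters $S$. So every edge of $E_{in}$ crossing the cut $(S,V_{in}\setminus S)$ is oriented out of $S$, and $E_{out}$ is exactly the set of directed edges crossing this cut. Every undirected crossing edge therefore contributes at least one element to $E_{out}$. Lemma~\ref{lemma:D-weak-connect} (weak connectivity of $\mathcal{D}$) can be cited here, to say that $\eta$ has at least one outgoing meta edge.

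\textbf{Step 2: no crossing edge is a bridge or bidirected.} A bridge $\{u,v\}$ crossing the cut would contribute both $(u,v)$ and $(v,u)$. One of these enters $S$, which contradicts $deg^-(\eta)=0$. The same argument rules out any other bidirected crossing edge. So every crossing undirected edge is a non-bridge.

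\textbf{Step 3 (the key step): at least two crossing edges.} We must exclude the case of exactly one undirected edge $\{u,v\}$ crossing the cut. Suppose that were the case. Then removing $\{u,v\}$ from $G$ would separate $S$ from $V_{in}\setminus S$, so $\{u,v\}$ would be a bridge of $G$. This contradicts Step 2. Hence at least two distinct undirected edges cross the cut, each contributing a distinct directed edge to $E_{out}$. This gives $|E_{out}|\ge 2$, and therefore $|E_{rev}|\ge 1$.

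The main care point is making sure the edge set invariants hold in every iteration, not just the first. Specifically, we need that reversal preserves both properties: every undirected edge of $G$ remains represented, and bridges remain bidirected. The first holds because \Reverse only changes orientation. The second holds because a bridge in $G_{in}$ is bidirected. Under our invariant, a bridge never lies in a cut out of a source SCC, so by Step 2 it is never in $E_{out}$ and is never reversed. These invariants should be stated explicitly before Step 1.
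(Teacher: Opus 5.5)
Your proposal is correct and follows the paper's route. The paper also rules out $|E_{out}|=0$ by connectivity, and rules out $|E_{out}|=1$ because the unique edge leaving the source SCC would be a bridge, while bridges are bidirected and so can never join two different SCCs. Your cut-based version is more careful than the paper's: it gets weak connectivity from connectivity of $G$ together with every undirected edge keeping at least one orientation (the paper infers it from $\mathcal{D}$ being a DAG, which is not enough), and it obtains the bridge property from the cut $(S, V_{in}\setminus S)$ rather than from the paper's unnecessary claim that the source SCC contains only one vertex.
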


\begin{proof}
    For the sake of contradiction, assume that $|E_{rev}| = 0$. Then $|E_{out}| \leq 1$.
    
    If $|E_{out}| = 0$, there are no edges between $\eta$ and other meta vertices, contradicting \Cref{lemma:D-weak-connect}.
    
    If $|E_{out}| = 1$, there is only 1 vertex $v \in V_{in}$ in the SCC that corresponds to $\eta$ in $G_{in}$. Let the SCC be $C_\eta$. Then the edge $e = (v,u) \in E_{in}$ that goes from $v$ to a vertex $u \not\in C_\eta$ must be a bridge. This is because removing $e$ will disconnect $v$ from $G_{in}$. Since all bridges are assumed to be bidirected in $G_{in}$, which means that they cannot be directed edges between different SCCs, this contradicts the fact that $e$ is a bridge.
    Therefore, $|E_{out}| \geq 2$, and thus $|E_{rev}| \geq 1$.
\end{proof}

% \begin{theorem}
%     \ERS is a complete algorithm that is guaranteed to return a strongly connected graph.
% \end{theorem}

% \begin{proof}
%     According to \Cref{theorem:have-rev-edges}, we always reverse some edges in each iteration of \ERS if $G_{in}$ is not yet strongly connected. Such a reversal will always increase the number of meta vertices in $\mathcal{D}$ that have both incoming and outgoing edges. Eventually, all meta vertices in $\mathcal{D}$ have both incoming and outgoing edges. 
% \end{proof}

\subsection{Empirical Evaluation} \label{appen:efs-exp}

\subsubsection{Experiment Setup}

We compare \ERS with a MILP solver that reverses edges to ensure strong connectivity of a graph while optimizing the number of reversed edges. We formulate the MILP based on a prior work~\cite{duhamel_strong_2024}. To compare the MILP and our \ERS, we generate directed four-connected empty grid graphs of different sizes and run both algorithms to repair them for strong connectivity. For each graph, we generate 10 instances and give a 1 minute time limit to both methods. We measure (1) the number of reversed edges, (2) the CPU runtime, and (3) the success rate, defined as the number of graphs that each algorithm successfully solved without timeout.

\subsubsection{Experiment Result}

\Cref{fig:ers-vs-milp} shows the result. Although \ERS has a much worse solution quality compared to MILP.(that is, \ERS reverses more edges than MILP), it has a runtime significantly faster than MILP. This is especially important for an evolutionary-based solver such as CMA-ES for CMA-MAE because we run these methods for each generated mixed guidance graph. Using \ERS can make the optimization process much faster.
In addition, with a 1 minute time limit, the success rate of MILP drops to about 60\% in graphs with more than 3,000 vertices, while \ERS maintains a success rate of 100\%.

\section{Additional Experiments}

\subsection{Implementation and Compute Resources} \label{appen:imple-compute}

\subsubsection{Implementation}
We implement CMA-ES and CMA-MAE in Python with Pyribs~\cite{pyribs}, the update model in Pytorch~\cite{Paszke2019PyTorchAI}, and \ERS in Python. We implement PIBT and RHCR in C++ based on prior implementations~\cite{Li2020LifelongMP,Jiang2024Competition}. For RHCR, we use PBS~\cite{MaAAAI19} as the MAPF solver and SIPP~\cite{PhillipsICRA11} as the single-agent solver. To speed up optimization, we follow \citet{zhang2024ggo} to stop RHCR when the agents get congested (i.e. more than half of the agents wait in place).

\begin{figure}[t]
    \centering
    \includegraphics[width=0.5\linewidth]{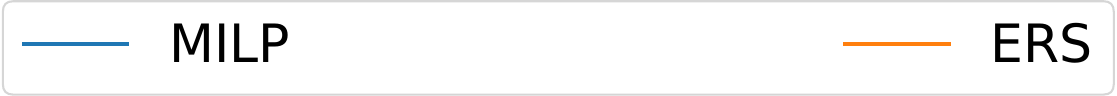}\\
    \begin{subfigure}[t]{0.16\textwidth}
        \includegraphics[width=\linewidth]{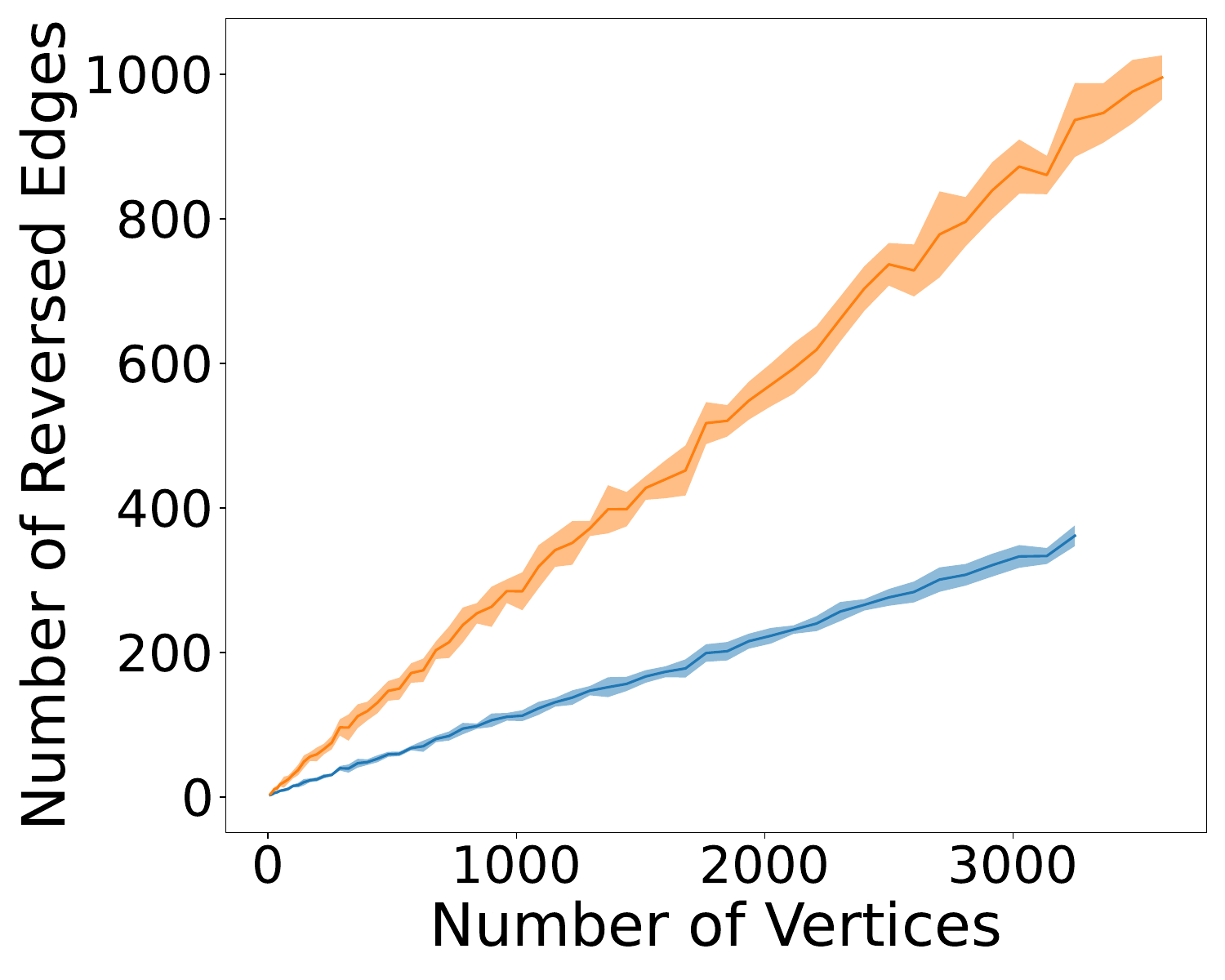}
    \end{subfigure}\hfill
    \begin{subfigure}[t]{0.16\textwidth}
        \includegraphics[width=\linewidth]{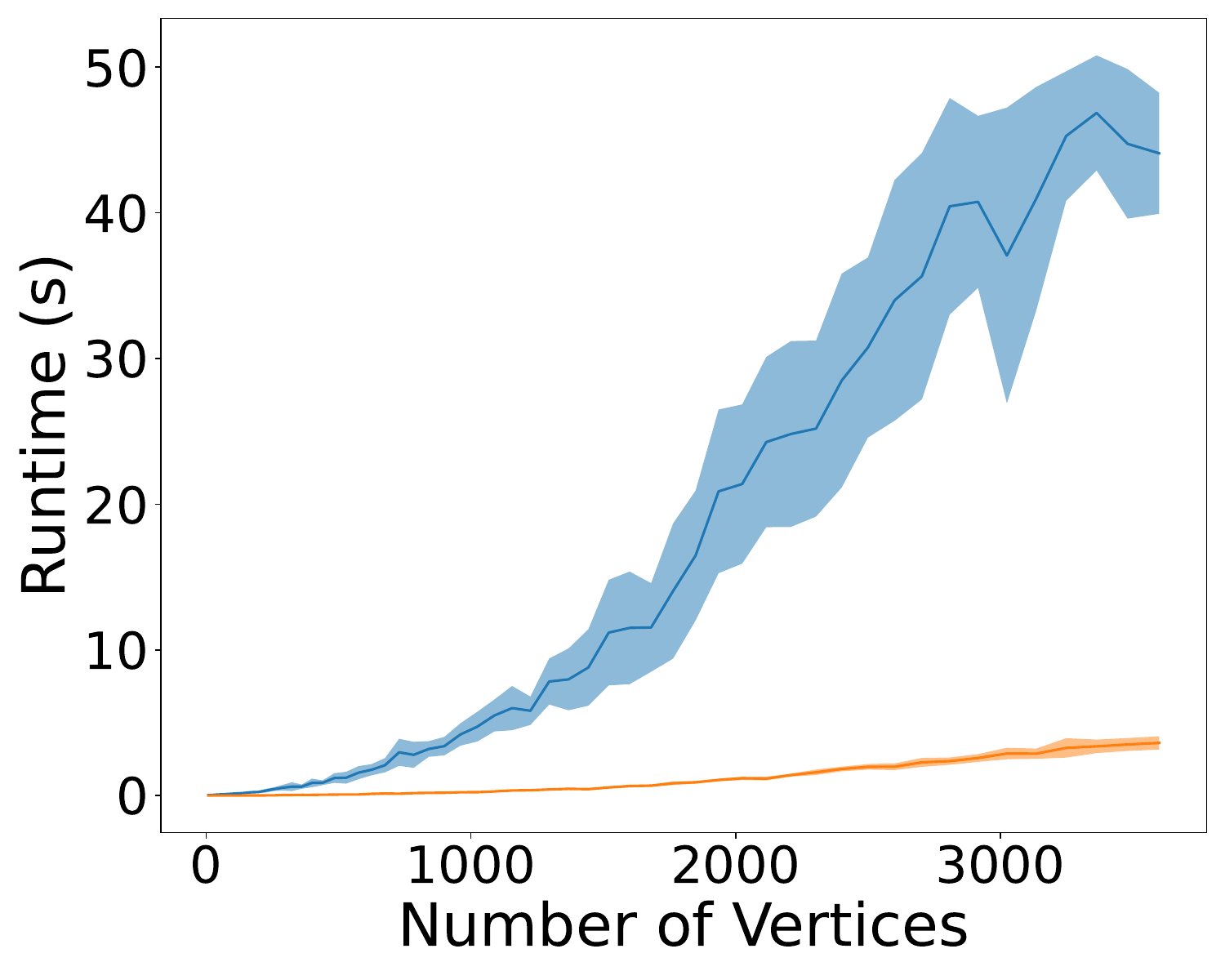}
    \end{subfigure}\hfill
    \begin{subfigure}[t]{0.16\textwidth}
        \includegraphics[width=\linewidth]{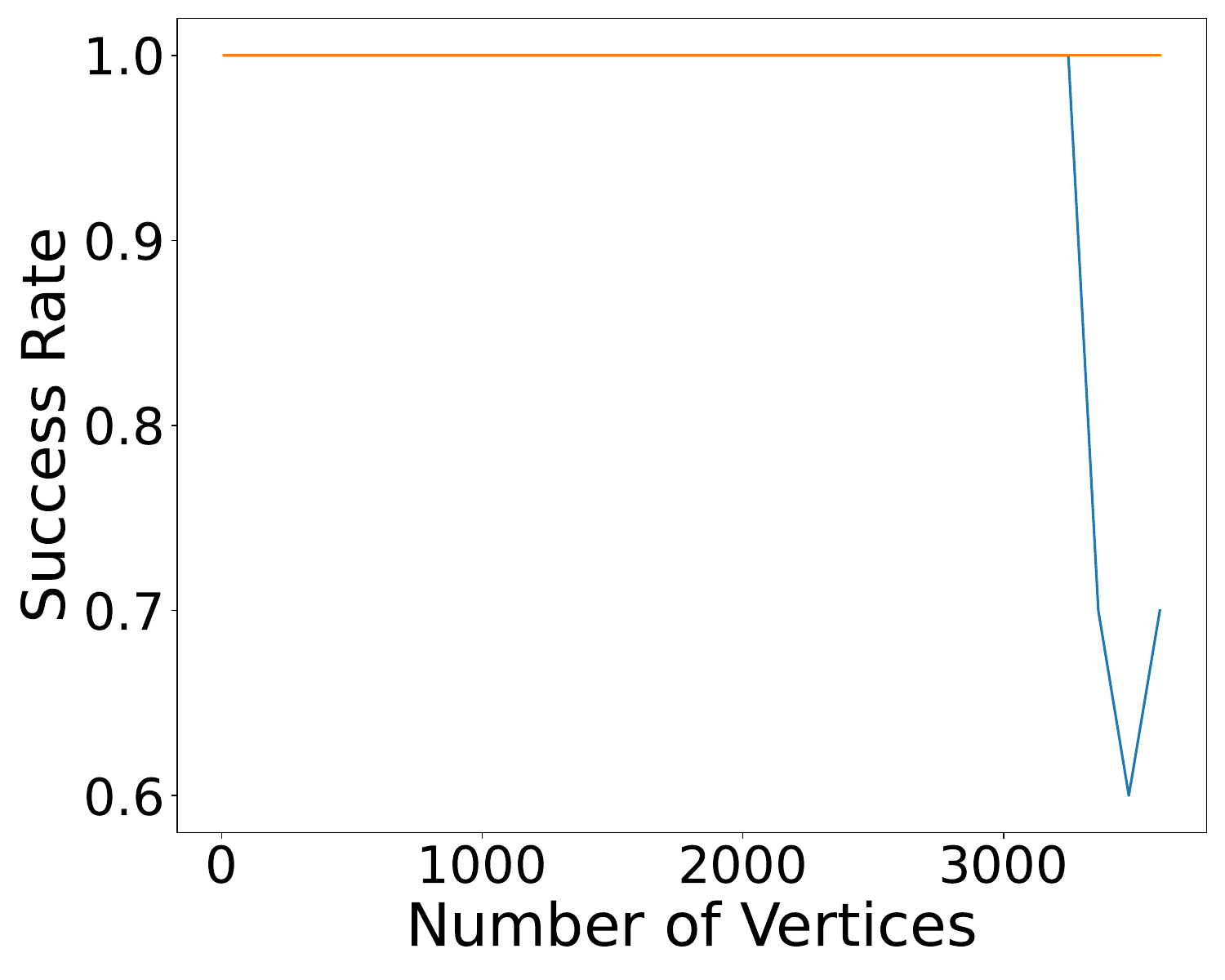}
    \end{subfigure}
    \caption{ERS vs. MILP.}
    \label{fig:ers-vs-milp}
\end{figure}

\begin{figure}[t]
    \centering
    \includegraphics[width=1\linewidth]{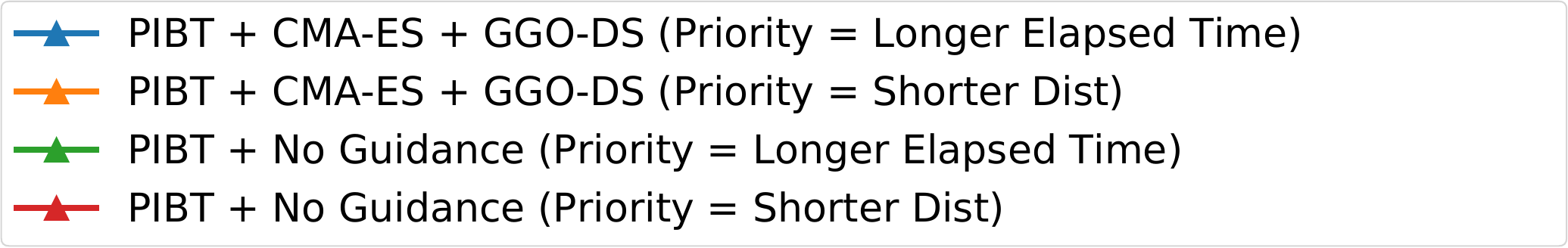}\\
    \begin{subfigure}[t]{0.16\textwidth}
        \includegraphics[width=\linewidth]{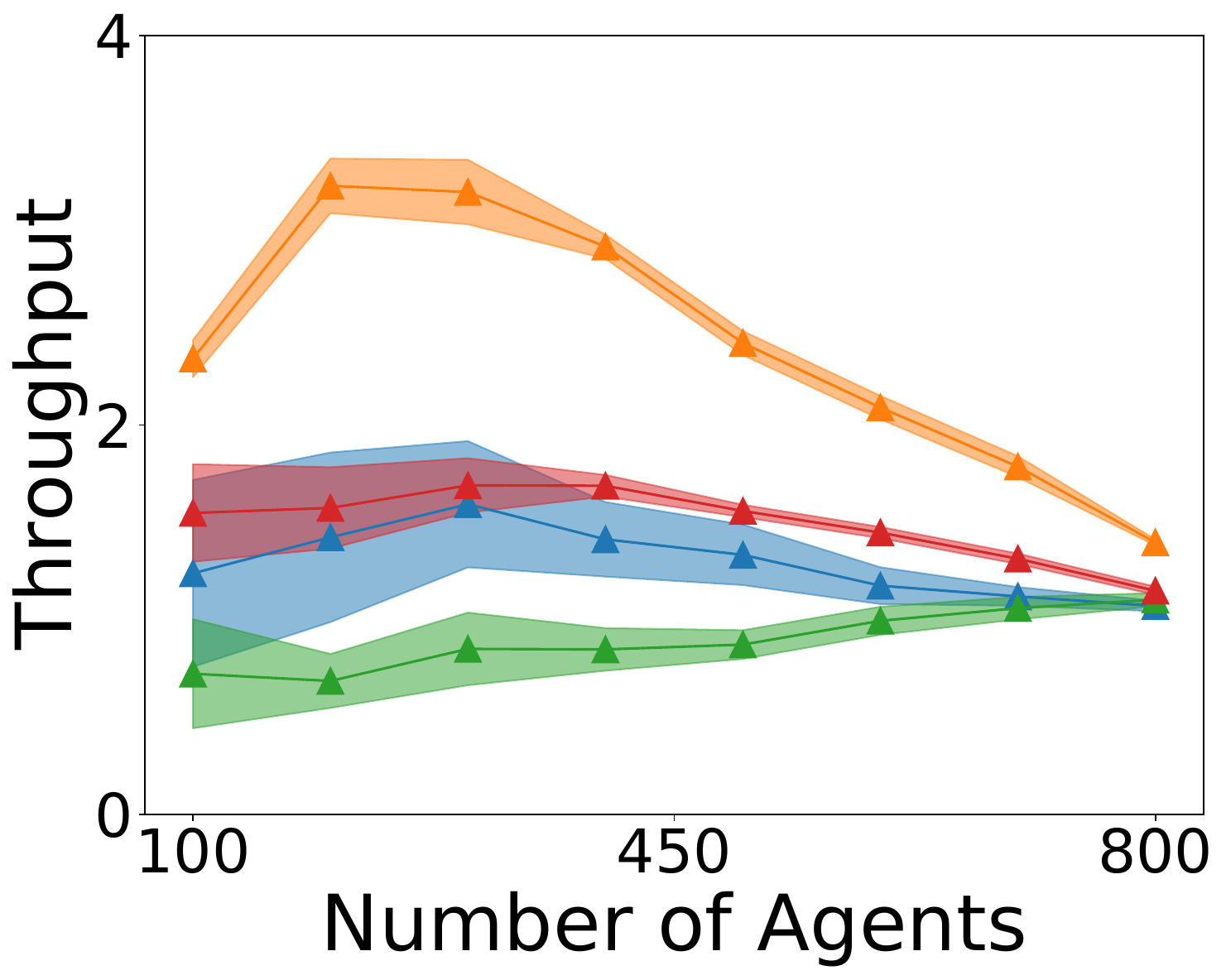}
    \caption{Setup 4}
    \end{subfigure}\hfill
    \begin{subfigure}[t]{0.16\textwidth}
        \includegraphics[width=\linewidth]{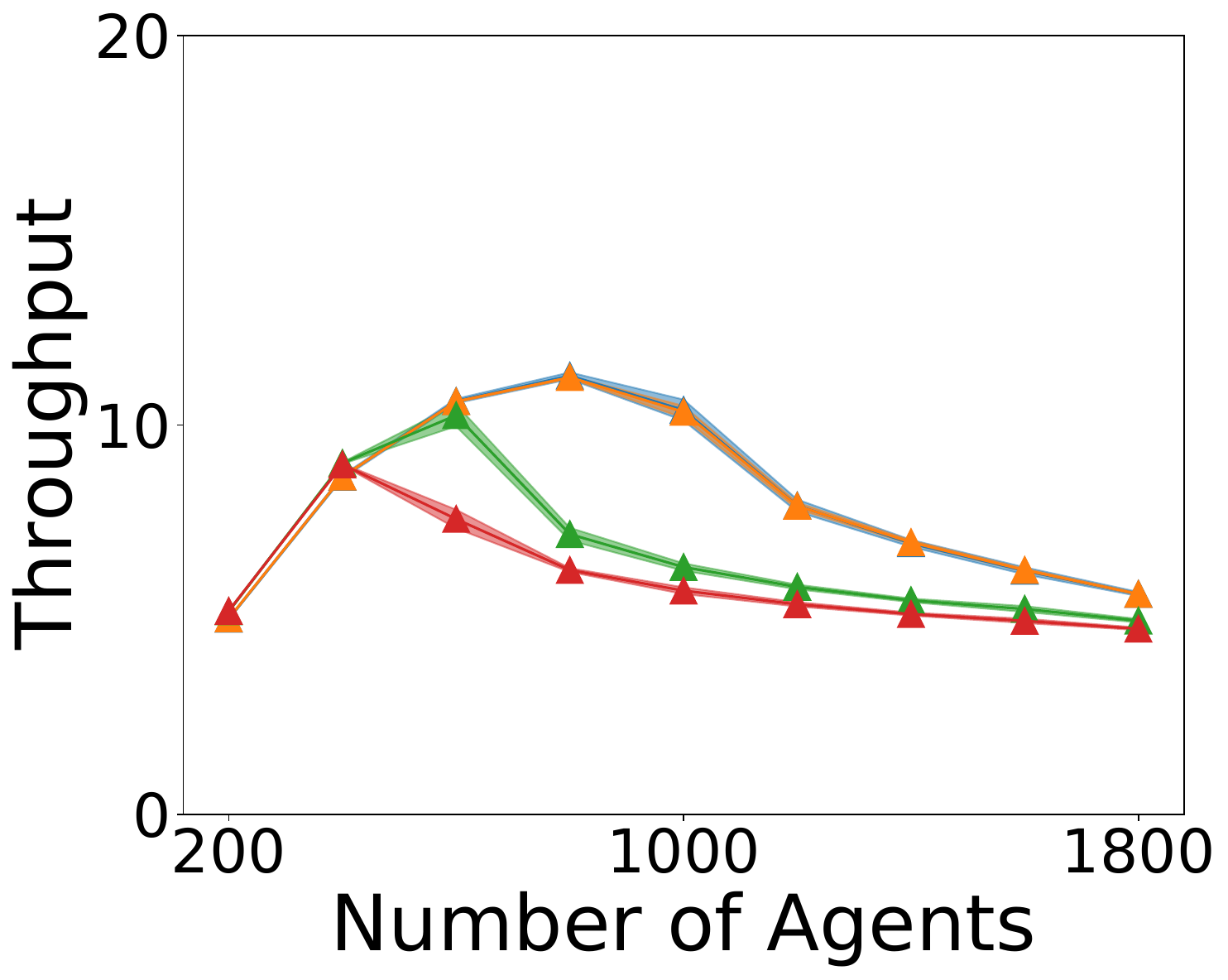}
        \caption{Setup 6}
    \end{subfigure}\hfill
    \begin{subfigure}[t]{0.16\textwidth}
        \includegraphics[width=\linewidth]{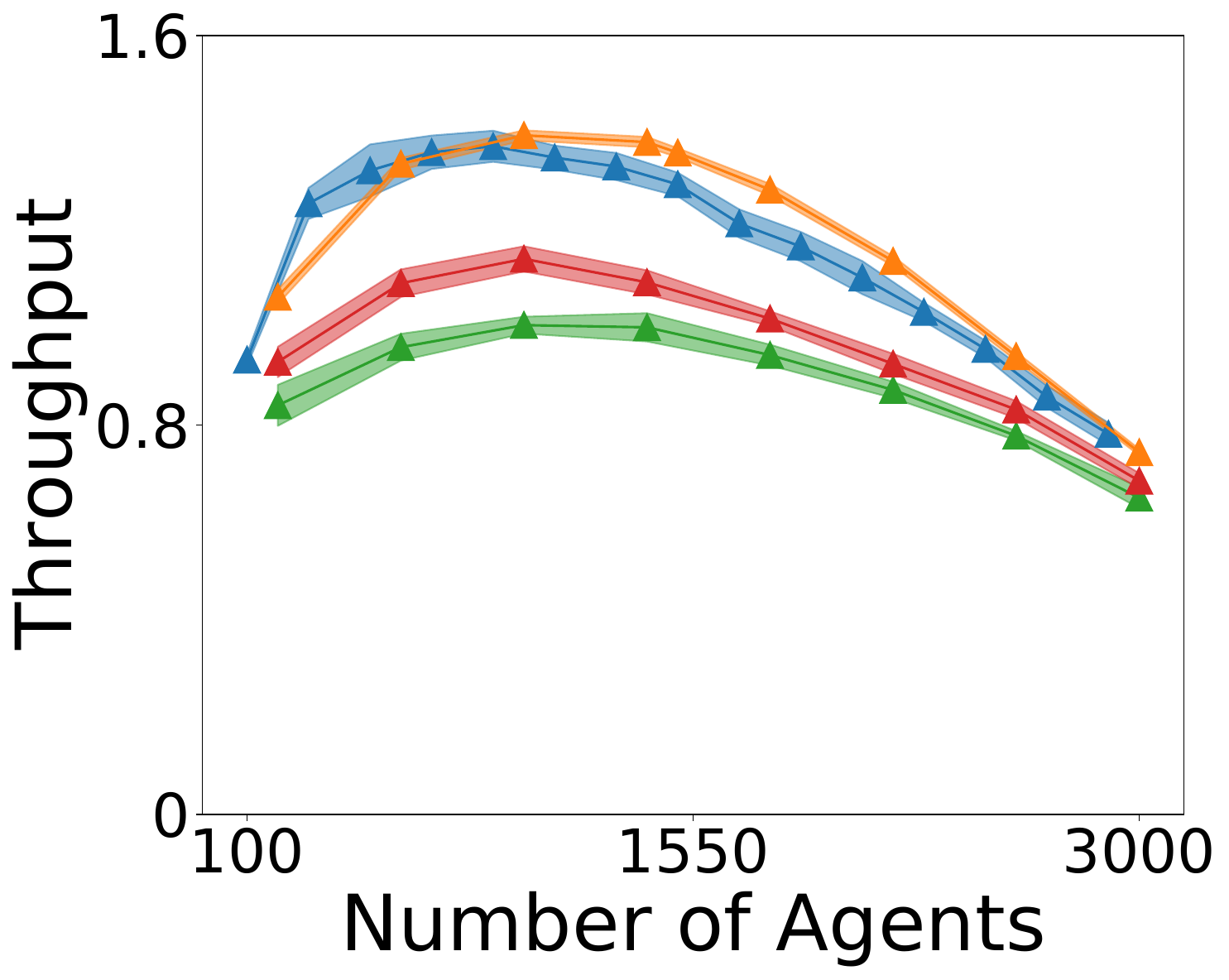}
        \caption{Setup 7}
    \end{subfigure}
    \caption{Comparison of different agent priority strategies of PIBT before and after GGO.}
    \label{fig:pibt-r-order-ablation}
\end{figure}

\paragraph{Agents Priorities in PIBT} We report an interesting side discovery regarding PIBT with the rotational motion model. In PIBT, agents move in a priority order. With the pebble motion model, \citet{okumura2019priority} establishes that priorities are determined based on the number of timesteps elapsed from the point when the agents reached their last goal. Agents that arrive at their last goal earlier shall move first.
On the other hand, \citet{Jiang2024Competition} suggests that with the rotational motion model, it is better if the priorities are determined by the distance of the agents to their next goal. Agents closer to their next goal shall move first. We discover that, in non-biconnected maps, using the latter can significantly reduce the occurrence of deadlocks and livelocks, improving average throughput and reducing the variance in evaluation. We compare the two priority strategies before and after running GGO-DS~\cite{zhang2024ggo} in setups 4, 6, and 7, two of which are non-biconnected (\randomSmall and \roomLarge) and one of which is biconnected (\emptyMid). 
We present the results in \Cref{fig:pibt-r-order-ablation}. It is clear that ordering agents by shorter distances to goals demonstrates much better performance in non-biconnected maps (setups 4 and 7). With the \emptyMid map in setup 6, ordering with longer elapsed time demonstrates better performance, but no significant gap is discovered after GGO.

\subsubsection{Compute Resources}
We run our experiments on five machines: (1) a local machine with a 64-core AMD Ryzen Threadripper 3990X CPU, 192 GB of RAM, (2) a local machine with a 64-core AMD Ryzen Threadripper 3990X CPU, 128 GB of RAM, (3) a local machine with a 64-core AMD Ryzen Threadripper 7980X CPU, 256 GB of RAM, (4) a local machine with a 16-core AMD Ryzen 9950X CPU, 64 GB of RAM, and (5) an HPC with numerous 64-core AMD EPYC 7742 CPUs, each with 256 GB of RAM~\cite{PSCBridgeTwo2021}. We measure all CPU runtime on machine (4).

\subsection{Additional Numerical Results} \label{appen:numerical-result}

\begin{table}[!t]
    \centering
    \resizebox{1\linewidth}{!}{
        \begin{tabular}{ccrrr}
\toprule
Setup & MAPF + (M)GGO & SR & Throughput & CPU Runtime \\
\midrule
\multirow{7}{*}{1} & \textbf{CMA-ES + Edge-Dir-Aware GGO-PU} & 94\% & $\textbf{4.93} \pm \textbf{0.03}$ & $29.47 \pm 0.81$\\
  & CMA-ES + GGO-DS & 92\% & $4.69 \pm 0.04$ & $33.77 \pm 0.85$\\
  & Directed Crisscross & 0\% & N/A & N/A\\
  & No Guidance & 0\% & N/A & N/A\\
  & QD + Joint MGGO-PU & 92\% & $4.08 \pm 0.03$ & $32.59 \pm 0.94$\\
  & Two-Phase MGGO-DS & \textbf{100\%} & $4.20 \pm 0.00$ & $\textbf{18.63} \pm \textbf{0.08}$\\
  & Two-Phase MGGO-DS (Phase 1) & \textbf{100\%} & $4.08 \pm 0.01$ & $23.78 \pm 0.16$\\
\midrule
\multirow{7}{*}{2} & CMA-ES + Edge-Dir-Aware GGO-PU & \textbf{100\%} & $4.75 \pm 0.01$ & $40.00 \pm 0.20$\\
  & CMA-ES + GGO-DS & 26\% & $2.07 \pm 0.03$ & $56.98 \pm 1.33$\\
  & Directed Crisscross & 0\% & N/A & N/A\\
  & No Guidance & 0\% & N/A & N/A\\
  & \textbf{QD + Joint MGGO-PU} & \textbf{100\%} & $\textbf{5.11} \pm \textbf{0.01}$ & $\textbf{29.70} \pm \textbf{0.19}$\\
  & Two-Phase MGGO-DS & 82\% & $3.58 \pm 0.01$ & $32.41 \pm 0.21$\\
  & Two-Phase MGGO-DS (Phase 1) & 90\% & $3.64 \pm 0.01$ & $41.43 \pm 0.24$\\
\midrule
\multirow{7}{*}{3} & \textbf{CMA-ES + Edge-Dir-Aware GGO-PU} & \textbf{100\%} & $\textbf{20.81} \pm \textbf{0.01}$ & $203.48 \pm 0.78$\\
  & CMA-ES + GGO-DS & 96\% & $18.44 \pm 0.02$ & $312.43 \pm 1.66$\\
  & Directed Crisscross & 4\% & $3.77 \pm 0.08$ & $226.38 \pm 0.07$\\
  & No Guidance & 0\% & N/A & N/A\\
  & QD + Joint MGGO-PU & \textbf{100\%} & $20.73 \pm 0.01$ & $189.15 \pm 0.50$\\
  & Two-Phase MGGO-DS & \textbf{100\%} & $19.84 \pm 0.01$ & $\textbf{181.16} \pm \textbf{0.53}$\\
  & Two-Phase MGGO-DS (Phase 1) & 80\% & $16.56 \pm 0.02$ & $248.92 \pm 1.39$\\
\midrule
\multirow{7}{*}{4} & CMA-ES + Edge-Dir-Aware GGO-PU & \textbf{100\%} & $3.00 \pm 0.02$ & $2.31 \pm 0.01$\\
  & CMA-ES + GGO-DS & \textbf{100\%} & $2.85 \pm 0.02$ & $2.25 \pm 0.01$\\
  & Directed Crisscross & \textbf{100\%} & $1.95 \pm 0.02$ & $2.14 \pm 0.01$\\
  & No Guidance & \textbf{100\%} & $1.65 \pm 0.02$ & $2.19 \pm 0.01$\\
  & QD + Joint MGGO-PU & \textbf{100\%} & $2.70 \pm 0.02$ & $\textbf{2.13} \pm \textbf{0.01}$\\
  & \textbf{Two-Phase MGGO-DS} & \textbf{100\%} & $\textbf{3.80} \pm \textbf{0.01}$ & $2.13 \pm 0.01$\\
  & Two-Phase MGGO-DS (Phase 1) & \textbf{100\%} & $3.56 \pm 0.01$ & $2.14 \pm 0.01$\\
\midrule
\multirow{7}{*}{5} & CMA-ES + Edge-Dir-Aware GGO-PU & \textbf{100\%} & $4.08 \pm 0.01$ & $2.84 \pm 0.02$\\
  & CMA-ES + GGO-DS & \textbf{100\%} & $3.81 \pm 0.04$ & $2.48 \pm 0.02$\\
  & Directed Crisscross & \textbf{100\%} & $3.20 \pm 0.01$ & $2.42 \pm 0.01$\\
  & No Guidance & \textbf{100\%} & $2.04 \pm 0.01$ & $2.42 \pm 0.02$\\
  & \textbf{QD + Joint MGGO-PU} & \textbf{100\%} & $\textbf{4.42} \pm \textbf{0.02}$ & $2.76 \pm 0.03$\\
  & Two-Phase MGGO-DS & \textbf{100\%} & $4.04 \pm 0.00$ & $2.56 \pm 0.01$\\
  & Two-Phase MGGO-DS (Phase 1) & \textbf{100\%} & $4.07 \pm 0.01$ & $\textbf{2.40} \pm \textbf{0.02}$\\
\midrule
\multirow{7}{*}{6} & \textbf{CMA-ES + Edge-Dir-Aware GGO-PU} & \textbf{100\%} & $\textbf{19.39} \pm \textbf{0.01}$ & $14.10 \pm 0.03$\\
  & CMA-ES + GGO-DS & \textbf{100\%} & $10.34 \pm 0.08$ & $12.57 \pm 0.03$\\
  & Directed Crisscross & \textbf{100\%} & $12.79 \pm 0.03$ & $12.76 \pm 0.03$\\
  & No Guidance & \textbf{100\%} & $6.35 \pm 0.02$ & $\textbf{11.84} \pm \textbf{0.03}$\\
  & QD + Joint MGGO-PU & \textbf{100\%} & $18.51 \pm 0.26$ & $14.12 \pm 0.03$\\
  & Two-Phase MGGO-DS & \textbf{100\%} & $16.43 \pm 0.02$ & $12.99 \pm 0.04$\\
  & Two-Phase MGGO-DS (Phase 1) & \textbf{100\%} & $14.49 \pm 0.02$ & $13.02 \pm 0.03$\\
\midrule
\multirow{7}{*}{7} & CMA-ES + Edge-Dir-Aware GGO-PU & \textbf{100\%} & $1.34 \pm 0.01$ & $26.53 \pm 0.06$\\
  & CMA-ES + GGO-DS & \textbf{100\%} & $1.36 \pm 0.00$ & $25.21 \pm 0.06$\\
  & Directed Crisscross & \textbf{100\%} & $1.04 \pm 0.00$ & $\textbf{22.42} \pm \textbf{0.05}$\\
  & No Guidance & \textbf{100\%} & $1.08 \pm 0.00$ & $24.81 \pm 0.06$\\
  & QD + Joint MGGO-PU & \textbf{100\%} & $1.46 \pm 0.00$ & $23.01 \pm 0.04$\\
  & \textbf{Two-Phase MGGO-DS} & \textbf{100\%} & $\textbf{1.54} \pm \textbf{0.00}$ & $22.73 \pm 0.05$\\
  & Two-Phase MGGO-DS (Phase 1) & \textbf{100\%} & $1.47 \pm 0.00$ & $22.51 \pm 0.04$\\
\midrule
\multirow{7}{*}{8} & CMA-ES + Edge-Dir-Aware GGO-PU & \textbf{100\%} & $2.11 \pm 0.01$ & $15.45 \pm 0.02$\\
  & CMA-ES + GGO-DS & \textbf{100\%} & $1.95 \pm 0.01$ & $14.23 \pm 0.03$\\
  & Directed Crisscross & \textbf{100\%} & $2.02 \pm 0.00$ & $14.24 \pm 0.02$\\
  & No Guidance & \textbf{100\%} & $1.09 \pm 0.00$ & $\textbf{13.88} \pm \textbf{0.03}$\\
  & QD + Joint MGGO-PU & \textbf{100\%} & $2.27 \pm 0.01$ & $15.16 \pm 0.02$\\
  & Two-Phase MGGO (Phase 1 Only) & \textbf{100\%} & $2.15 \pm 0.01$ & $14.18 \pm 0.02$\\
  & \textbf{Two-Phase MGGO-DS} & \textbf{100\%} & $\textbf{2.47} \pm \textbf{0.01}$ & $14.35 \pm 0.03$\\
\bottomrule
        \end{tabular}
    }
    \caption{Success rates (\textit{SR}), throughput, and CPU runtimes of the simulations on different mixed guidance graphs.
   % \textit{SR} refers to the success rate. 
   For RHCR, the success rate is the percentage of simulations that end without congestion. For PIBT, it is the percentage of simulations that end without timeout.
   We measure the throughput and CPU runtime over only successful simulations.}
    \label{tab:numerical-result}
\end{table}

\Cref{tab:numerical-result} shows the numerical results.
We run RHCR and PIBT on all graphs with the same $N_a$ and $T$ in \Cref{tab:exp-setup} for 50 times and report the numerical results in the format of $x \pm y$, where $x$ and $y$ are the average and standard error, respectively. The numerical results follow the same trend with those in \Cref{fig:major-result-pibt-r,fig:major-result-rhcr-r}.

\subsection{Variant of Joint MGGO-PU} \label{appen:qd-vs-cma-es-mggo}

\begin{figure}[!t]
    \centering
    \includegraphics[width=1\linewidth]{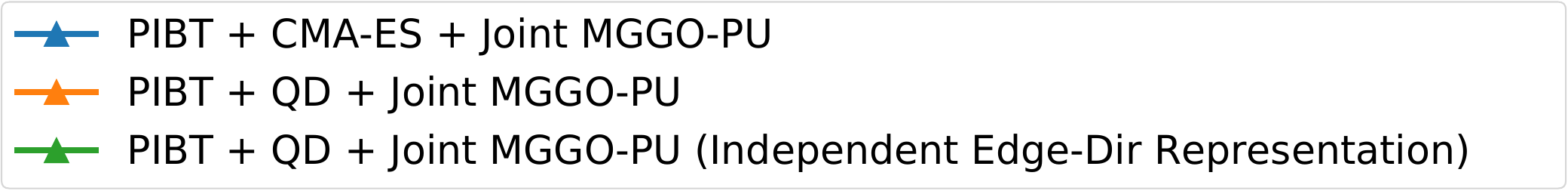}\\
    \begin{subfigure}[t]{0.16\textwidth}
        \includegraphics[width=\linewidth]{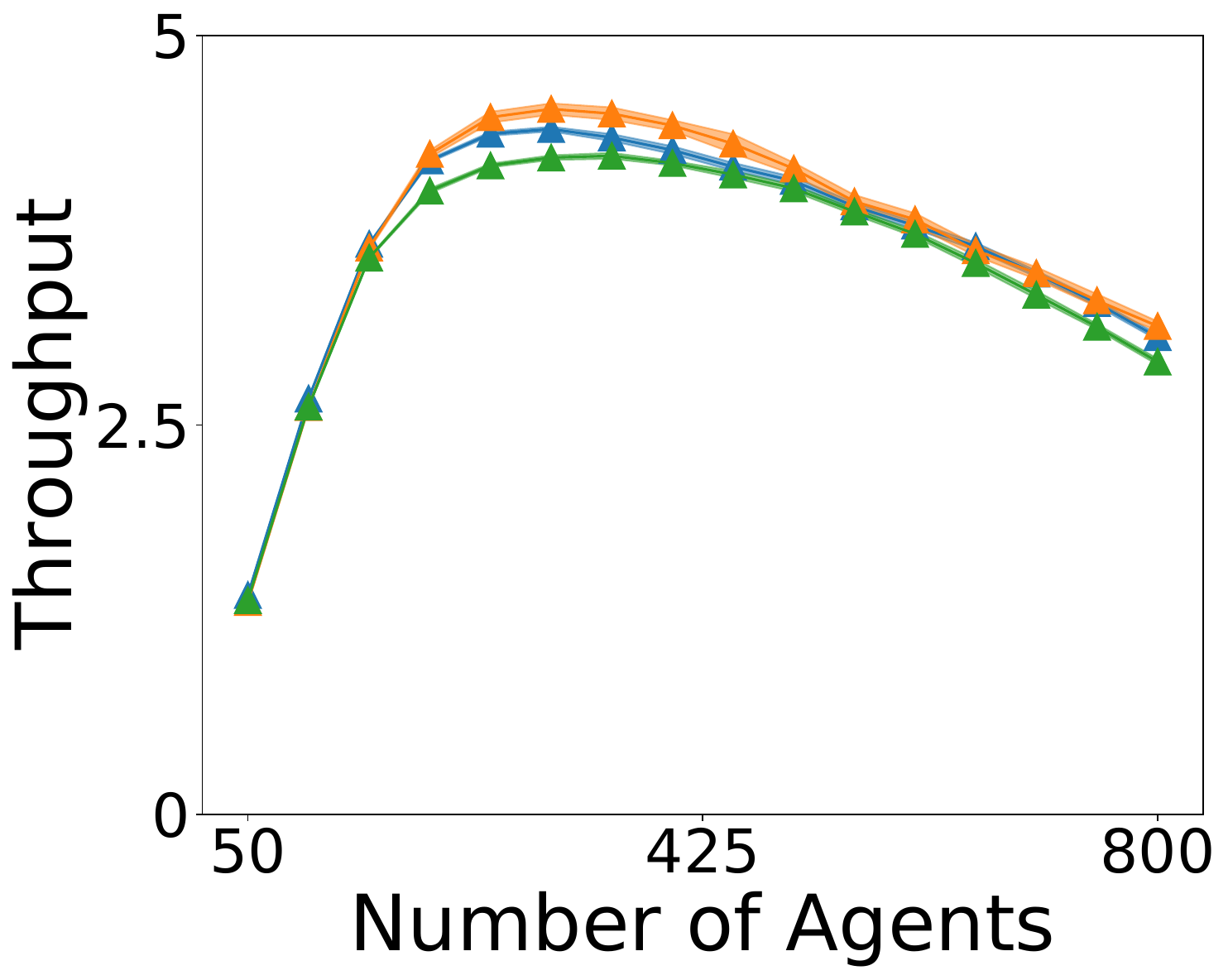}
    \end{subfigure}\hfill
    \begin{subfigure}[t]{0.16\textwidth}
        \includegraphics[width=\linewidth]{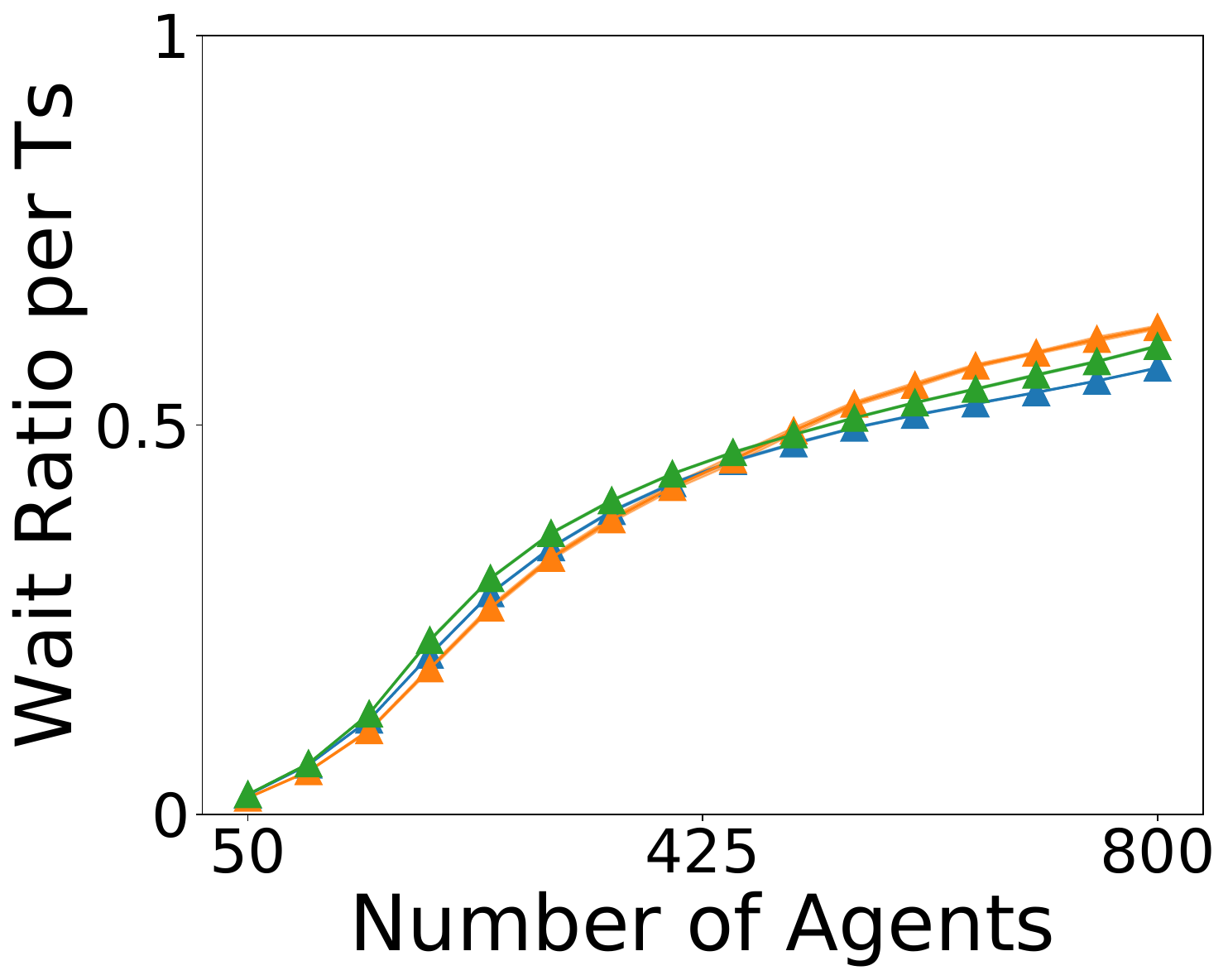}
    \end{subfigure}\hfill
    \begin{subfigure}[t]{0.16\textwidth}
        \includegraphics[width=\linewidth]{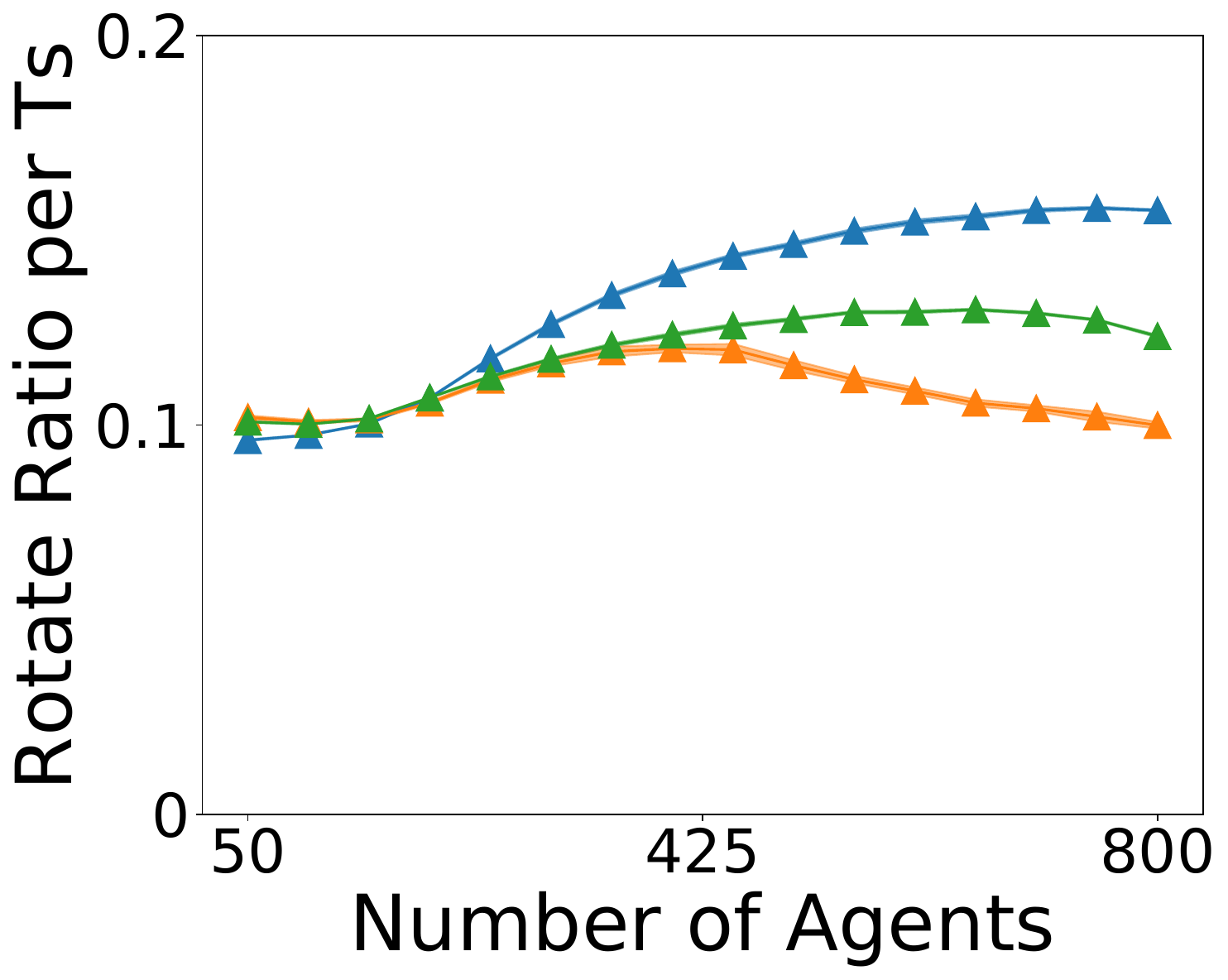}
    \end{subfigure}
    \caption{Throughput, wait action ratio, and rotation action ratio of Joint MGGO-PU ablations experiments.}
    \label{fig:cma-es-qd-mggo-pu}
\end{figure}

We run PIBT with different numbers of agents, each 50 times, and report the average as well as the 95\% confidence interval in \Cref{fig:cma-es-qd-mggo-pu}. Interestingly, QD is able to reduce the ratio of rotation actions during the simulation. This is because the mixed guidance graph found by QD has about 70\% of directed edges, while the one optimized by CMA-ES only has 14\%.

\subsection{Variants of EA in MGGO-DS Phase One} \label{appen:gdo-ablation}

\begin{figure}[!t]
    \centering
    \includegraphics[width=0.95\linewidth]{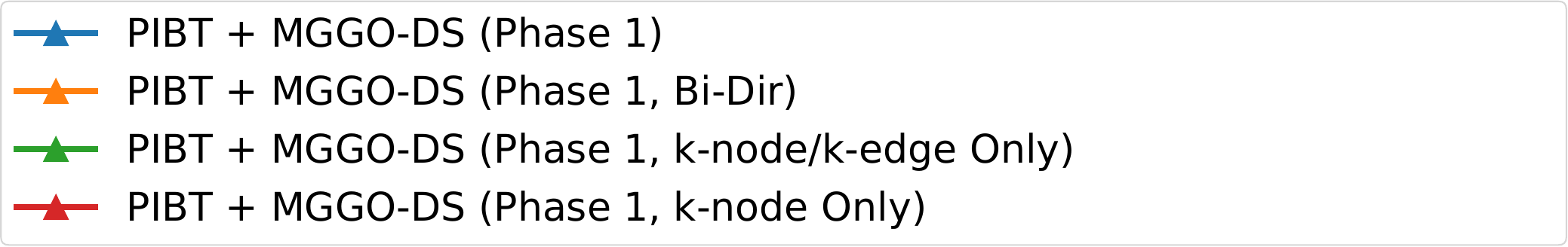}\\
    \begin{subfigure}[t]{0.16\textwidth}
        \includegraphics[width=\linewidth]{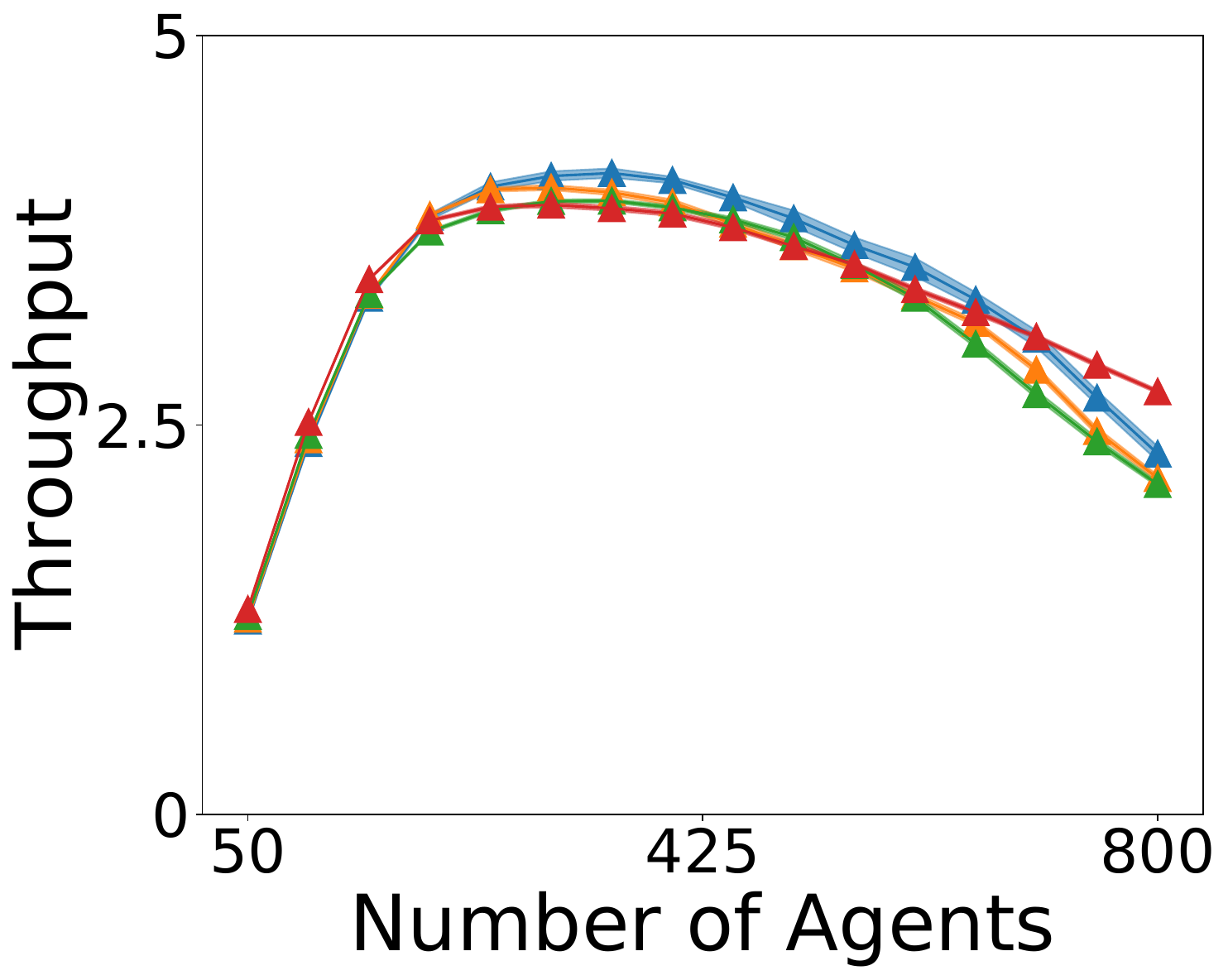}
    \end{subfigure}\hfill
    \begin{subfigure}[t]{0.16\textwidth}
        \includegraphics[width=\linewidth]{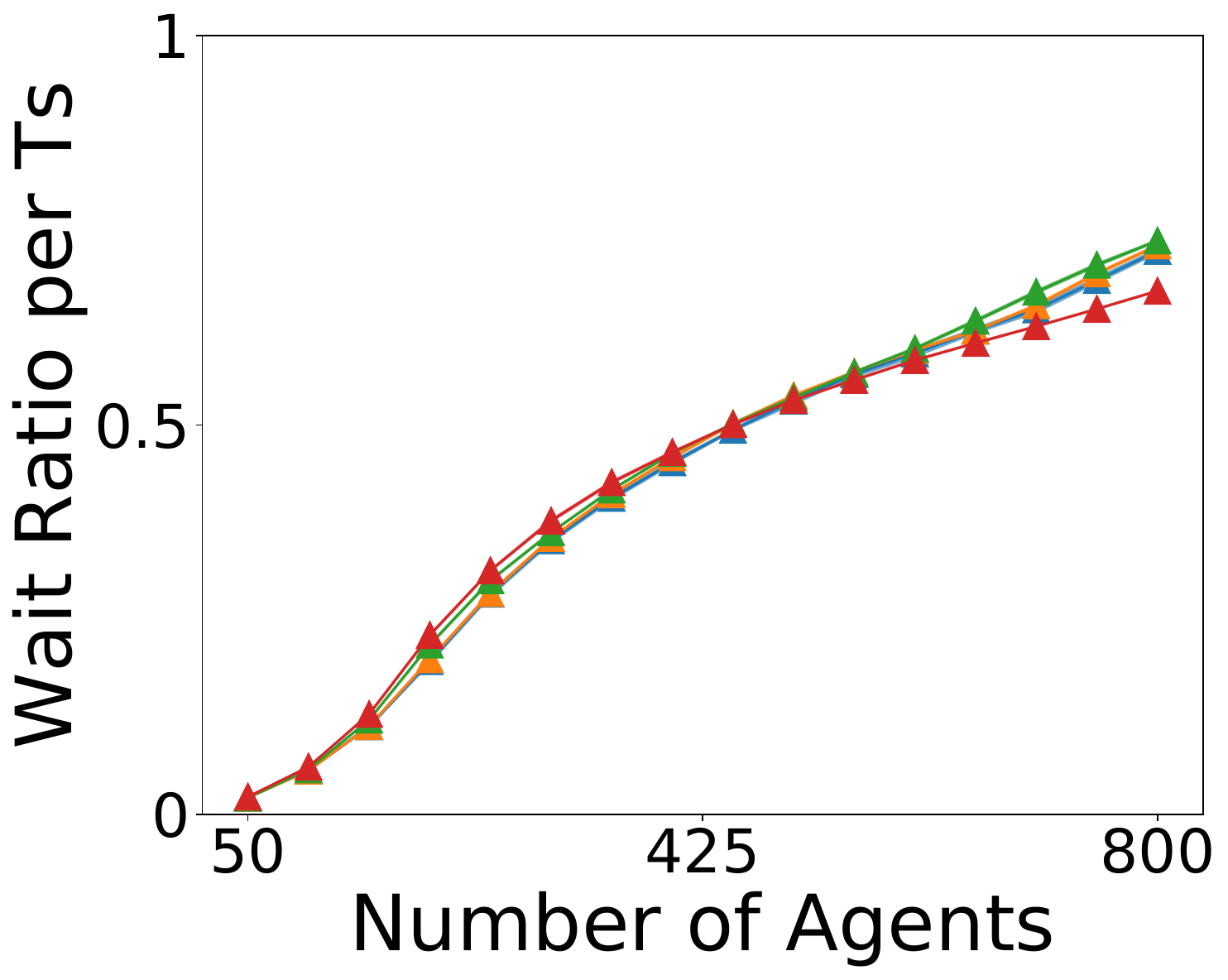}
    \end{subfigure}\hfill
    \begin{subfigure}[t]{0.16\textwidth}
        \includegraphics[width=\linewidth]{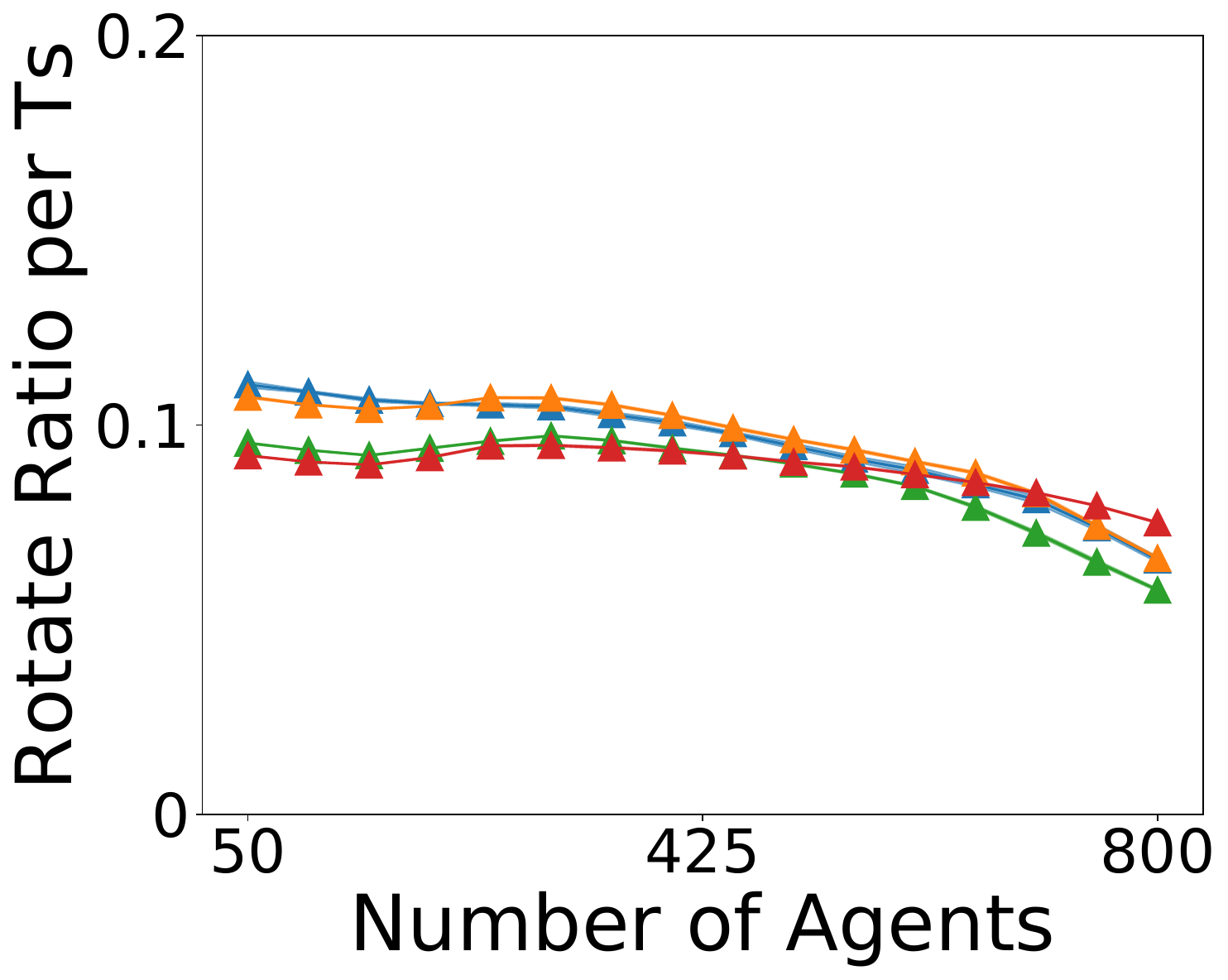}
    \end{subfigure}
    \caption{Throughput, wait action ratio, and rotation action ratio of $(1+\lambda)$ EA ablations experiments.}
    \label{fig:gdo-ablation}
\end{figure}

We run PIBT with different numbers of agents, each 50 times, and report the average as well as the 95\% confidence interval in \Cref{fig:gdo-ablation}. The comparison in throughput aligns with our findings in \Cref{sec:exp}. For the ratio of waiting and rotating actions, we do not observe significant differences, as all optimized mixed guidance graphs have edge weights 1 and the number of directed edges are maximized.

\section{Optimized Mixed Guidance Graphs} \label{appen:example-mgg}

\begin{figure*}
    \centering
    \begin{subfigure}[t]{0.48\textwidth}
        \includegraphics[width=\linewidth]{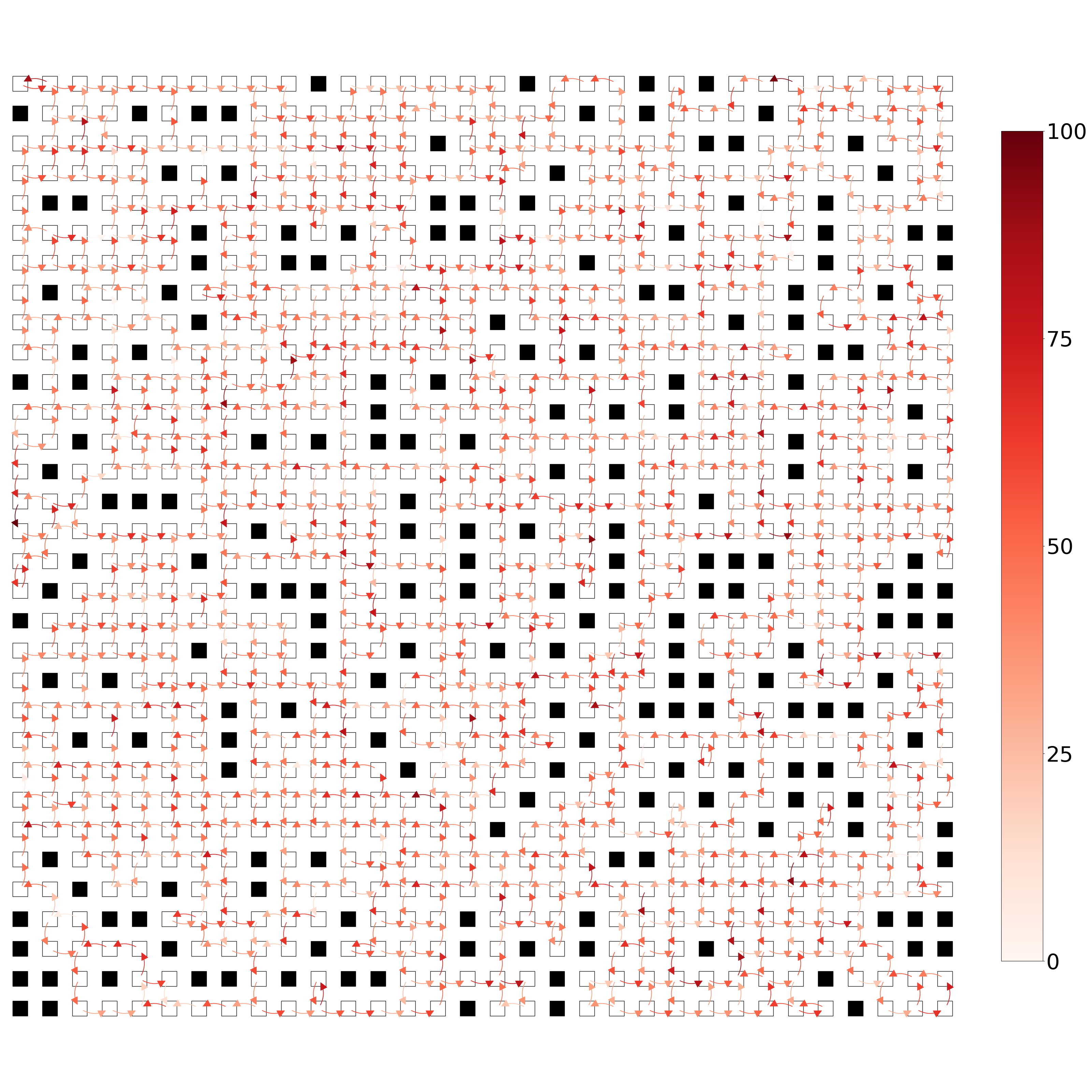}
    \caption{Setup 4: \randomSmall}
    \end{subfigure}\hfill
    \begin{subfigure}[t]{0.52\textwidth}
        \includegraphics[width=\linewidth]{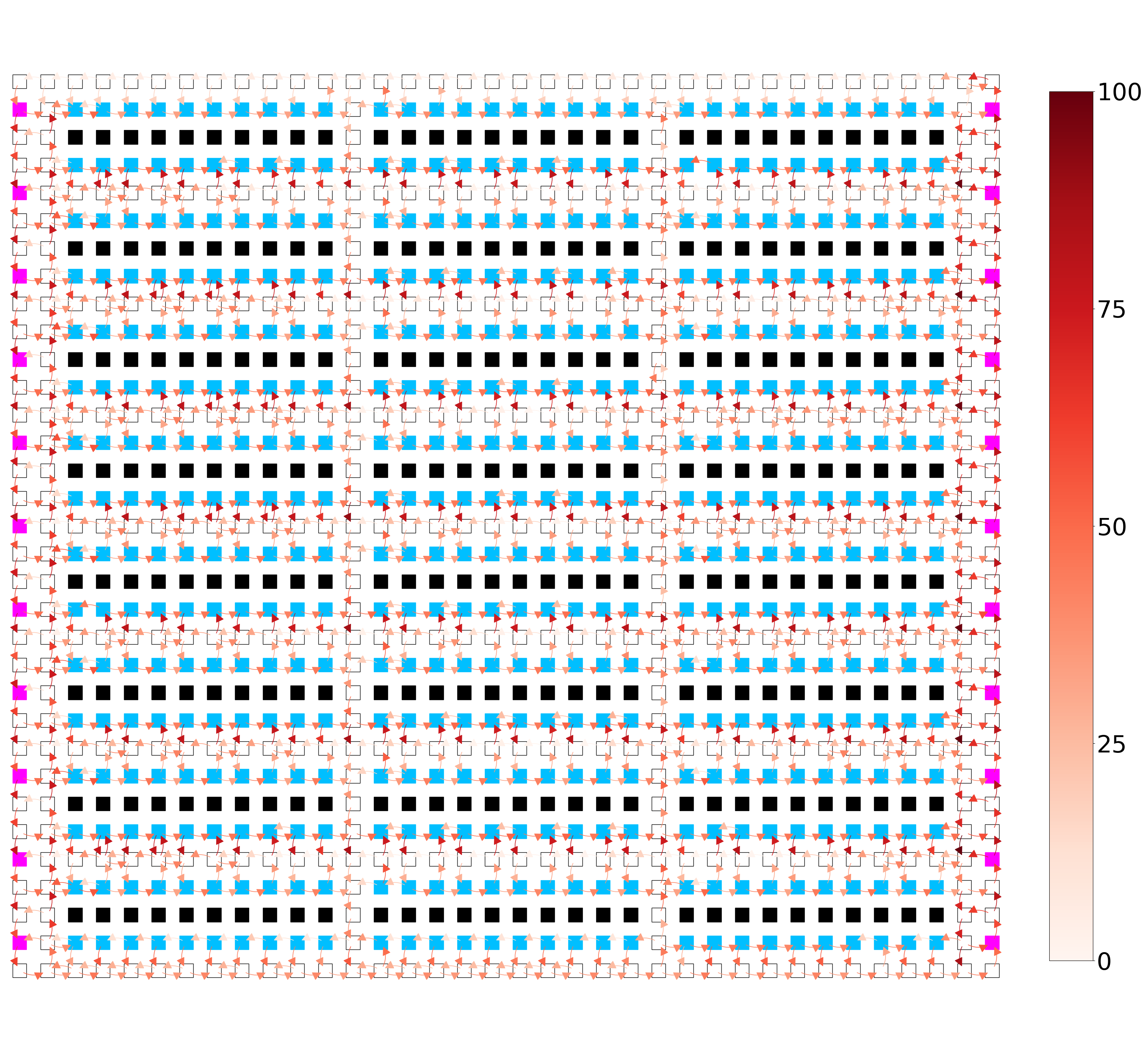}
    \caption{Setup 5: \warehouselargeW}
    \end{subfigure}
    \caption{Example optimized mixed guidance graphs. The arrows indicate edge directions and the color indicates edge weights.}
    \label{fig:example-mgg}
\end{figure*}

\Cref{fig:example-mgg} shows example optimized mixed guidance graphs. Due to the size of the maps, we are only able to show the two smallest maps used in the experiments.

% \subsection{PIU vs. PU} \label{appen:pu-vs-piu}

% \subsection{Representation of Edge Directions} \label{appen:mgg-represent}

% \subsection{Incorporate Bidirected Edges to Phase-One MGGO}

% \subsection{Scalability of Edge Direction Flip Search}

\end{document}